\documentclass[10pt,a4paper,fleqn]{amsart}
\usepackage{amssymb, amsmath, amsthm}
\usepackage{MnSymbol}
\usepackage{libertine}
\usepackage[vvarbb,libertine]{newtxmath}
\usepackage{microtype}
\usepackage[hidelinks]{hyperref}
\usepackage[dvipsnames]{xcolor}
\usepackage[style = apa, backend = biber, alldates = year]{biblatex}
\IfFileExists{all.bib}{\addbibresource{all.bib}}{}

\usepackage{mathtools}
\newlength{\myeqskip}
\AtBeginDocument{%
    \setlength\abovedisplayskip{\myeqskip}%
    \setlength\belowdisplayskip{\myeqskip}%
    \setlength\abovedisplayshortskip{\myeqskip-\baselineskip}%
    \setlength\belowdisplayshortskip{\myeqskip}}

\usepackage{algorithm}
\usepackage[kw]{pseudo}
\pseudoset{
    compact,
    label=\scriptsize\sffamily\color{gray}\arabic*,
    ctfont=\sffamily\color{gray},
    ct-left=\ctfont{\# },
    ct-right=,
    line-height=1.2,
    indent-length=3ex,
}
\usepackage{booktabs}
\usepackage{fontawesome5}
\usepackage{caption}
\usepackage[shortlabels]{enumitem}

\usepackage{import}
\usepackage{tikz-cd}

\definecolor{highlight}{HTML}{F5CBCB}
\definecolor{primary}{HTML}{748DAE}
\definecolor{secondary}{HTML}{90AB8B}
\definecolor{nuance}{HTML}{9ECAD6}
\definecolor{subtle}{HTML}{FFEAEA}

\newcommand{\g}[1]{{\color{gray} #1}}

\newcommand{\C}{\mathbb{C}}

\newcommand{\re}{\operatorname{Re}}
\newcommand{\im}{\operatorname{Im}}
\newcommand{\ord}{\operatorname{ord}}

\newcommand{\I}{\mathbf{i}}
\renewcommand{\st}{\ \middle|\ }

\newcommand{\identity}{\operatorname{id}}
\newcommand{\reorimsymb}{q}
\newcommand{\homeq}{\equiv_\text{homotopy}}

\theoremstyle{definition}
\newtheorem{definition}{Definition}
\theoremstyle{plain}
\newtheorem{theorem}[definition]{Theorem}
\newtheorem{lemma}[definition]{Lemma}
\newtheorem{proposition}[definition]{Proposition}
\theoremstyle{remark}

\usepackage{tikz}
\usetikzlibrary{shapes.geometric}
\usetikzlibrary{arrows.meta}
\usetikzlibrary{calc}
\usetikzlibrary{math}

\begin{document}

\title{Computing braids from approximate data}

\author{Alexandre Guillemot}
\address{Inria, Université Paris--Saclay, Palaiseau, 91120, France}
\email{alexandre.guillemot@inria.fr}

\author{Pierre Lairez}
\address{Inria, Université Paris--Saclay, Palaiseau, 91120, France}
\email{pierre.lairez@inria.fr}

\date{\today}

\begin{abstract}
  We study the theoretical and practical aspects of computing braids described by approximate descriptions of paths in the plane.
  Exact algorithms rely on the lexicographic ordering of the points in the plane, which is unstable under numerical uncertainty.
  Instead, we formalize an input model for approximate data, based on a separation predicate. It applies, for example, to paths obtained by tracking the roots of a parametrized polynomial with complex coefficients,
  thereby connecting certified path tracking outputs to exact braid computation.
\end{abstract}

\maketitle

\section{Introduction}
When $n$ points move continuously in the complex plane without colliding, they define a \emph{geometric braid}.
Up to homotopy, such braids can be described combinatorially by recording the winding of the points, leading to an algebraic structure first formalized by \textcite{Artin_1947}: the \emph{braid group}, given by the following presentation
\[ B_n= \left \langle \sigma_1,\ldots,\sigma_{n-1}\st \sigma_i \sigma_{i+1} \sigma_i = \sigma_{i+1} \sigma_i \sigma_{i+1}, |i-j| > 1 \Rightarrow \sigma_i \sigma_j=\sigma_j\sigma_i \right \rangle. \]
Our main case of interest consists of braids arising from a parametrized polynomial $f \in \C[t][x]$.
The displacement of $t$ along a continuous path $\gamma: [0, 1] \to \C$ induces a continuous motion of the roots of $f(t, -)$ in $\C$.
Provided $\gamma$ stays away from a certain critical set, the discriminant locus, the roots do not collide, so they induce a geometric braid.
Given $f$ and $\gamma$, the goal is to compute the associated combinatorial braid.
This finds applications in the study of the topology of complex varieties: braid monodromy can be used to compute the fundamental group of the complement of a curve in $\C\mathbb{P}^2$ (\cite{Zariski_1929}; \cite{Vankampen_1933}; \cite{MarcoBuzunarizRodriguez_2016}) as well as to compute the action of the monodromy on the homology of a fibration \parencite{Pichonpharabod_2024}.

The topological nature of braids leads to a symbolic-numeric approach: first numerically compute the root motion, then recover the combinatorial braid.
The first step can be handled by \emph{certified homotopy continuation} algorithms (\cite{VanDerHoeven_2011}; \cite{BeltranLeykin_2012,BeltranLeykin_2013}; \cite{MarcoBuzunarizRodriguez_2016}; \cite{Kranich_2016}; \cite{XuBurrYap_2018}; \cite{DuffLee_2024}; \cite{GuillemotLairez_2024}).
They output a set of disjoint tubular neighborhoods around the root paths in $\C \times [0, 1]$, which is enough information to characterize the associated braid.
In what follows, the focus is on the second step: given a set of tubular neighborhoods representing a geometric braid, how to compute the associated combinatorial braid?
To avoid limiting the method to a specific representation of the tubular neighborhoods, whose exact nature differs from one path tracking algorithm to another (piecewise linear, Taylor models, ...), an abstract notion of \emph{approximate data} representing a geometric braid is used.
It reduces to what is essential to capture the braid, in the form of a geometric predicate that can be implemented on a specific representation of the tubular neighborhoods.

\subsubsection*{Previous work}
\textcite{MarcoBuzunarizRodriguez_2016} solve the problem of computing braids given $f$ and $\gamma$, and provide an implementation.
Given a set of tubular neighborhoods obtained through their certified homotopy continuation software SIROCCO, they compute piecewise linear paths in each tube from which they can recover the combinatorial braid.
The present method avoids this linearization step.
Another approach, by \textcite{RodriguezWang_2017}, revolves around solving a certain polynomial system whose solution set contains times at which the roots cross when projected onto the real axis, then determining the sign of the crossings using homotopy continuation.
Still, further work is required to make their method certified and to handle geometric braids with non-generic crossings.

\subsubsection*{Contributions}
We introduce a data structure to represent a geometric braid with $n$ strands, consisting of a geometric predicate \fn{sep}, which gives information on the relative positions of the points while it does not retain their precise location.
More precisely, assume the point motion is given by $n$ paths $F_1, \dotsc, F_n : [0, 1] \to \C$, each associated with the movement of one point.
Given $i, j \in [n]$ distinct and $t \in [0, 1)$, $sep(i, j, t)$ returns a time $t' \in (t, 1]$ and an axis $q$ that can be either real or imaginary, such that $F_i$ do not cross $F_j$ when projected onto $q$ in the time interval $[t, t']$.
This data structure adapts well to situations when the point motion is computed approximately: even without access to the exact position of the points, error bounds can be used to determine if two points stay separated along the real or imaginary direction.
Given a braid described by its \fn{sep} predicate, we present an algorithm to compute its associated combinatorial braid.
This method avoids computing piecewise linear approximations of the strands in the tubular neighborhoods.
This linearization step is automatic when the path tracking backend outputs piecewise linear tubular neighborhoods (\cite{MarcoBuzunarizRodriguez_2016}; \cite{DuffLee_2024}); by contrast, linearization requires additional work when the tubular neighborhoods are given by Taylor models, which appear in recent methods for path tracking (\cite{VanDerHoeven_2011}; \cite{GuillemotLairez_2024}).
In the latter case, each tubular neighborhood is given by higher-order pieces in which linearization may require multiple segments, whereas real or imaginary part separation may be checked on the whole piece.

\section{Braids}

Section~\ref{sec:conf-spac-braids} recalls the definitions of geometric and combinatorial braids.
Section~\ref{sec:computing-braids-an})
explains how to compute combinatorial braids from geometric ones in an exact setting.

\subsection{Configuration spaces and braids}
\label{sec:conf-spac-braids}

For~$n \geq 1$, let~$OC_n$ denote the \emph{ordered configuration space of~$n$ points in~$\mathbb{C}$}:
\[ OC_n = \left\{ (z_1,\dotsc,z_n) \in \mathbb{C}^n \st \forall i, j\in [n], i\neq j \Rightarrow z_i \neq z_j \right\}. \]
The permutation group~$S_n$ acts on~$OC_n$ by permuting the $z_i$:
for a permutation~$\sigma \in S_n$,
\[ \sigma \cdot \left( z_1,\dotsc,z_n \right)  = ( z_{\sigma^{-1}(1)},\dotsc,z_{\sigma^{-1}(n)}). \]
The \emph{(unordered) configuration space of~$n$ points in~$\mathbb{C}$} is the quotient space
$C_n = OC_n / S_n$, endowed with the quotient topology.

A \emph{geometric braid} is the homotopy class of a path $F : [0,1] \to C_n$.
Let us fix~$[n] = \left\{ 1,\dotsc,n \right\} \in C_n$ as a base point.
A geometric braid~$F$ such that~$F(0) = F(1) = [n]$ defines an element of the fundamental group~$\pi_1(C_n, [n])$.
\Textcite{Artin_1947} introduced the \emph{braid group}
\[ B_n= \left \langle \sigma_1,\ldots,\sigma_{n-1}\st \sigma_i \sigma_{i+1} \sigma_i = \sigma_{i+1} \sigma_i \sigma_{i+1}, |i-j| > 1 \Rightarrow \sigma_i \sigma_j=\sigma_j\sigma_i \right \rangle \]
and proved that~$B_n \simeq \pi_1(C_n, [n])$.
We call \emph{combinatorial braids} the elements of~$B_n$, as opposed to geometric braids.
The isomorphism is given by mapping the generator~$\sigma_i$ to the path exchanging~$i$ and~$i+1$, with the right point passing above the left point (Figure~\ref{fig:braid-generator}).
\begin{figure}[tbp]
  \centering
  \begin{tikzpicture} [x=1cm,y=1cm]
    \draw[->, thick, color=gray] (4,0) arc[radius = 1, start angle=0, end angle=170];
    \draw[->, thick, color=gray] (2,0) arc[radius = 1, start angle=180, end angle=350];

    \fill (-2,0) node {\ldots};
    \fill (0,0) circle (3pt) node[below left] {$i-1$};
    \fill (2,0) circle (3pt) node[below left] {$i$};
    \fill (4,0) circle (3pt) node[below left] {$i+1$};
    \fill (6,0) circle (3pt) node[below left] {$i+2$};
    \fill (8,0) node {\ldots};

  \end{tikzpicture}
  \caption{Path in~$C_n$ inducing the combinatorial braid~$\sigma_i$.}
  \label{fig:braid-generator}
\end{figure}
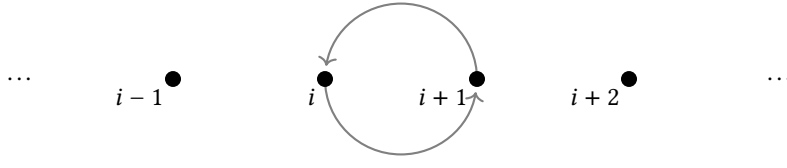

When a geometric braid does not start or end at the base point~$[n]$, it is still possible to associate a combinatorial braid in a natural way.
To this end, we define a canonical path~$R_c$ from~$[n]$ to~$c$ for any configuration~$c = \left\{ z_1,\dotsc,z_n \right\}$ in~$C_n$ as follows: order the points~$z_i$ lexicographically, comparing the real parts first and breaking ties using the imaginary parts.
The path~$R_c$ is given by the linear interpolation
\[ R_c(t) = \left\{ (1-t)i + t z_i \st i\in [n] \right\}. \]
For every~$t \in [0, 1]$, the points are strictly lexicographically ordered, so they never collide.
Given a geometric braid~$F : [0,1] \to C_n$,
we consider the concatenation\footnote{Following the most common convention \parencite[e.g.][]{Hatcher_2002}, the concatenation $f\cdot g$ of a path~$f$ and a path~$g$ traverses~$f$ first, then~$g$.} $\widetilde F = R_{F(0)} \cdot F \cdot R_{F(1)}^{-1}$, which defines an element of~$\pi_1(C_n, [n])$, and therefore induces a combinatorial braid, via Artin's isomorphism.
This convention works well with concatenation:
for any two paths~$F$ and~$G$ in~$C_n$ such that~$F(1) = G(0)$,
the path $\widetilde{F\cdot G}$ is homotopic to~$\widetilde F\cdot \widetilde G$.

\subsection{Computing braids in an exact setting}
\label{sec:computing-braids-an}

To better understand the approximate setting, first consider the problem of computing the combinatorial braid associated to an exact representation of a geometric braid.
There are many possible ways to represent a continuous function~$[0,1] \to C_n$;
for computing braids, a practical one is a piecewise linear representation: pick points~$p_0,\dotsc,p_r \in \mathbb{C}^n$
(or rather~$\mathbb{Q}[\I]^n$, in order to perform exact arithmetic)
and define the function $F : [0,1] \to \mathbb{C}^n$  by piecewise linear interpolation:
\[ F(t) = \left(i + 1 - r t \right) p_i + \left( rt-i \right) p_{i+1}, \quad \tfrac{i}{r} \leq t \leq \tfrac{i+1}{r}. \]
Assume that the $n$ coordinates of this function stay distinct;
then $F : [0,1] \to OC_n$ is a path.
This data structure is \emph{exact}, and it allows easy computation of:
\begin{itemize}
  \item for any~$t\in [0,1]$, the unique permutation $\ord_F(t)$
        such that~$\ord_F(t) \cdot F(t)$ is lexicographically ordered;
  \item the discontinuity points of~$\ord_F$\footnote{It is sufficient to compute a finite superset of the discontinuity points, such as the union over all~$1\leq i<j \leq n$ of the boundary of the set of all~$t \in [0,1]$ such that~$\re(F_i(t)) = \re(F_j(t))$.};
  \item the value of~$\ord_F(t)$ on the right of a discontinuity point.
\end{itemize}

This leads to Algorithm~\ref{algo:folklore}.
In the generic situation where there are no multiple crossings (at any time, at most two points share the same real part), the proof is straightforward because one can always choose~$s=1$ in the decomposition in Line~\ref{line:transpositions}. In non-generic situations, the proof is more technical, and we omit it because it is not the main focus of this article.

\begin{algorithm}[tbp]
  \caption{Braid computation in an exact setting}
  \raggedright
  \begin{pseudo}
    def \fn{braid}(F):\\+
    $T \gets$ \tn{all discontinuity points of~$\ord_F$ (or a superset of those)}\\
    $b \gets 1$ \ct{the identity braid}\\
    $\alpha \gets \ord_F(0)$\\
    for $t\in T$ \tn{in increasing order}:\\+
    \ct{Loop invariant: $\alpha = \lim_{s\to t-} \ord_F(s)$}\\
    $\beta \gets \lim_{s\to t+} \ord_F(s)$ \ct{value on the right of the discontinuity point}\\
    \tn{write $\beta \alpha^{-1}$ as a product of transpositions $(j_s\ j_s+1) \dotsb (j_1 \ j_1+1)$} \label{line:transpositions}\\
    for $i$ from $1$ to $s$:\\+
    $\varepsilon \gets \operatorname{sign}\big(\im\big( F_{\alpha^{-1}(j_i + 1)}(t) - F_{\alpha^{-1}(j_i)}(t) \big) \big)$\\
    $b \gets b \cdot \sigma_{j_i}^\varepsilon $ \ct{update the braid}\\
    $\alpha \gets (j_i \ j_i+1) \cdot \alpha$ \ct{update the strand ordering}\\--
    return $b$
  \end{pseudo}
  \label{algo:folklore}
\end{algorithm}

When the geometric braid~$F$ is specified only approximately (for instance, by tubular neighborhoods),
the ordering function~$\ord_F$ is underspecified and cannot be computed anymore.
One solution is to compute an exact piecewise linear path~$G : [0,1]\to OC_n$ such that:
\begin{enumerate}
  \item the lexicographic ordering of coordinates in~$F(0)$ (resp.~$F(1)$) is identical to that of~$G(0)$ (resp.~$G(1)$);
  \item the path~$F(0) \to G(0) \overset{G}{\to} G(1) \to F(1)$ (where the first and last arrows are straight lines) is homotopic to~$F$.
\end{enumerate}
In this case, the braids induced by~$F$ and~$G$ are the same, and Algorithm~\ref{algo:folklore} applies to~$G$.
This is the approach taken by \textcite{MarcoBuzunarizRodriguez_2016}.
But this still requires a lexicographic ordering of~$F(0)$ and~$F(1)$, which may not be available in a numerical context.
From a practical point of view, even if the ordering of the endpoints is not ambiguous, it may still be hard to compute such a path~$G$, for example when~$F$ is given as tubular neighborhoods of Taylor models, which appear in the context of certified path tracking \parencite{VanDerHoeven_2011,GuillemotLairez_2024}.
The number of line segments required to approximate a nonlinear path within a tubular neighborhood is unbounded as the tube radius decreases.

\section{Approximate data and arrangements}

Section~\ref{sec:coping-with-numer} introduces a definition of \emph{path approximation} that provides the basic context for computing braids from approximate data.
Section~\ref{sec:arrangements} studies \emph{arrangements}, a combinatorial abstraction representing the position of $n$ points in the plane. Sequences of arrangements are used as an intermediate representation of braids, between geometric braids and combinatorial braids.

\subsection{Coping with numerical ambiguity}
\label{sec:coping-with-numer}

The braid induced by a path~$F : [0,1] \to \mathbb{C}^n$ is a topological object, invariant under deformation, so it should be computable from approximate data.
But what does it mean to have approximate data?
Concrete numerical values are not important;
the goal is only to separate the points and capture their winding.
To that end, it is enough to know, for any~$t\in [0,1]$ and any pair~$1\leq i < j \leq n$, a coordinate axis, $\re$ or~$\im$, along which~$F_i(t)$ and~$F_j(t)$ are separated, together with their relative order along this axis.
This leads to the following definition.

\begin{definition}[Path approximation]\label{def:approx}
  An \emph{approximation of a path}~$F:[0,1]\to OC_n$ is the data of two families of open subsets~$(R_{ij})$ and~$(I_{ij})$ of~$[0,1]$, indexed by distinct $i, j\in [n]$,
  such that, for all such pairs,
  \begin{enumerate}[(i)]
    \item $\re(F_i(t)) < \re(F_j(t))$ for all~$t\in R_{ij}$,
    \item $\im(F_i(t)) < \im(F_j(t))$ for all~$t\in I_{ij}$,
          \item\label{it:cover} $R_{ij} \cup R_{ji} \cup I_{ij} \cup I_{ji} = [0,1]$.
  \end{enumerate}
\end{definition}

In computational terms, separation queries are the only access to path approximations: given~$t$ and a pair~$1\leq i < j \leq n$, we want to obtain at least one element of~$\left\{ R_{ij}, R_{ji}, I_{ij}, I_{ji} \right\}$ that contains~$t$, the existence of which is granted by~\ref{it:cover}.
This leads to the following computation-oriented variant of Definition \ref{def:approx}; the rest of the article uses it.
The two definitions are equivalent, in that an approximation for one definition gives an approximation for the other definition.

\begin{definition}[Path approximation, computational variant]\label{def:approx2}
  An
  \emph{approximation of a path}~$F:[0,1]\to OC_n$ is the data, for each distinct $i, j\in [n]$,
  of a computable map
  \[ \fn{sep}_F(i, j, -) : [0,1] \longrightarrow [n]^2 \times \{\re, \im\} \times [0,1] \]
  such that for all~$t\in [0,1]$, the value~$(i', j', q, t') = \fn{sep}_F(i, j, t)$
  satisfies:
  \begin{enumerate}[label={($\textit{sep}_{\alph*}$)}]
    \item $(i', j') = (i, j)$ or~$(j, i)$;\label{point:approx-1}
    \item for all~$s\in [t, t']$,
          $q(F_{i'}(s)) < q(F_{j'}(s))$;\label{point:approx-3}
    \item $t' \geq \min(1, t + \varepsilon_F)$, for some~$\varepsilon_F > 0$
          depending only on~$F$.\label{point:approx-2}
  \end{enumerate}

\end{definition}

\subsection{Arrangements}
\label{sec:arrangements}

An arrangement is a combinatorial structure that represents the relative position of points in the complex plane and reflects the information provided by the approximation of a path (Definitions~\ref{def:approx} and~\ref{def:approx2}).

\begin{definition}[Arrangement, Arrangement cell]
  An \emph{arrangement on $n$ points} is a pair $(\prec_{\re},\prec_{\im})$ of strict partial orders\footnote{A \emph{strict partial order} is a transitive, irreflexive binary relation.} on~$[n]$ such that any distinct $i,j \in [n]$ are comparable in at least one of them.
  The \emph{arrangement cell} attached to an arrangement $(\prec_{\re},\prec_{\im})$ is the subset $A \subset OC_n$ defined by
  \begin{multline*}
    A = \big\{ (z_1,\dotsc,z_n) \in \mathbb{C}^n \bigm| i \prec_{\re} j \Rightarrow \re(z_i) < \re(z_j)\\
    \text{ and } i \prec_{\im} j \Rightarrow \im(z_i) < \im(z_j) \text{ for all } i,j \in [n] \big\}.
  \end{multline*}
\end{definition}

\begin{lemma}\label{lem:convex-open-cover}
  Arrangement cells are nonempty, convex, open subsets of~$OC_n$, and they cover~$OC_n$.
  Moreover, any nonempty intersection of arrangement cells is itself an arrangement cell.
\end{lemma}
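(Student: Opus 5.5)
I will prove the four properties of a single cell and then the statement on intersections. Fix an arrangement $(\prec_{\re},\prec_{\im})$ with cell $A$.

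\emph{Inclusion, openness and convexity.} First I check $A\subset OC_n$. For distinct $i,j$, the arrangement axiom says that $i,j$ are comparable in $\prec_{\re}$ or in $\prec_{\im}$. So, for $z\in A$, either the real parts or the imaginary parts of $z_i$ and $z_j$ differ, and hence $z_i\neq z_j$. Next, $A$ is an intersection of finitely many sets of the form $\{q(z_i)<q(z_j)\}$ with $q\in\{\re,\im\}$. Each such set is an open half-space of $\mathbb{C}^n\simeq\mathbb{R}^{2n}$, because $q$ is $\mathbb{R}$-linear. A finite intersection of open half-spaces is open and convex.

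\emph{Nonemptiness.} A strict partial order on the finite set $[n]$ admits a linear extension; I obtain it by a topological sort, which uses transitivity and irreflexivity, i.e.\ acyclicity. Let $r_i$ be the rank of $i$ in a linear extension of $\prec_{\re}$, and $s_i$ its rank in a linear extension of $\prec_{\im}$. Then $z_i=r_i+\I s_i$ lies in $A$. I expect this to be the only step with any content, and it is standard.

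\emph{Covering.} Given $z\in OC_n$, define $i\prec_{\re} j$ if $\re z_i<\re z_j$, and $i\prec_{\im} j$ if $\im z_i<\im z_j$. Both relations are visibly transitive and irreflexive. Distinct $i,j$ satisfy $z_i\neq z_j$, so they differ in the real or the imaginary part and are therefore comparable in one of the two orders. Hence this is an arrangement, and $z$ lies in its cell.

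\emph{Intersections.} Let $A_1,\dots,A_k$ be cells of arrangements $(\prec^{(l)}_{\re},\prec^{(l)}_{\im})$, and suppose $\bigcap_l A_l\neq\emptyset$. Let $\prec_{\re}$ be the transitive closure of $\bigcup_l\prec^{(l)}_{\re}$, and likewise for $\prec_{\im}$. Transitivity holds by construction. Comparability is inherited from any single $A_l$. For irreflexivity, pick $z\in\bigcap_l A_l$. Any chain $i\prec\cdots\prec i$ in the union would force $\re z_i<\re z_i$ (or the analogue for $\im$), which is impossible. So this is an arrangement. Its cell is exactly $\bigcap_l A_l$: the constraints of the closure are implied by those of the union, again by transitivity of $<$, and conversely every constraint of each $A_l$ appears in the closure.
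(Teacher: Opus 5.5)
Your proof is correct and follows the paper's argument step for step: strict real-linear inequalities give openness and convexity, ranks in linear extensions give nonemptiness, the orders induced by $z$ give the covering, and transitive closures of unions handle intersections. Your intersection step is slightly more careful than the paper's, because choosing a point $z$ in the intersection rules out cycles of every length in the union, whereas the paper only explicitly excludes two-element cycles of the form $i \prec_{\re} j$, $j \prec'_{\re} i$.
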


\begin{proof}
  Let $(\prec_{\re},\prec_{\im})$ be an arrangement and write $A$ for its cell.
  The inequalities defining $A$ are strict, so~$A$ is open in~$\mathbb{C}^n$, and they are real-linear, so~$A$ is convex.

  To see that $A$ is nonempty, choose linear extensions~$\overline{\prec}_{\re}$ and~$\overline{\prec}_{\im}$ of
  the partial orders $\prec_{\re}$ and $\prec_{\im}$, and choose
  permutations~$\pi$ and~$\phi$ of~$[n]$ such that
  $\pi^{-1}(1) \mathrel{\overline{\prec}_{\re}} \dotsb \mathrel{\overline{\prec}_{\re}} \pi^{-1}(n)$
  and
  $\phi^{-1}(1) \mathrel{\overline{\prec}_{\im}} \dotsb \mathrel{\overline{\prec}_{\im}} \phi^{-1}(n)$.
  The configuration $\bigl(\pi(k) + \phi(k)\I\bigr)_{1\leq k \leq n}$ is in~$A$.

  The inclusion $A \subseteq OC_n$ is immediate: given $z \in A$ and $i\neq j$, the definition of an arrangement
  requires one of~$i \prec_{\re} j$, $j \prec_{\re} i$, $i\prec_{\im} j$, or $j\prec_{\im} i$ to hold, which implies
  either $\re(z_i) \neq \re(z_j)$ or $\im(z_i) \neq \im(z_j)$, so $z_i \neq z_j$.

  For the covering property, let $z\in OC_n$ and define $i \prec_{\re} j$ whenever $\re(z_i) < \re(z_j)$, and $i \prec_{\im} j$ whenever $\im(z_i) < \im(z_j)$.
  Distinct points of~$z$ differ in real or imaginary part, so every pair becomes comparable in at least one order.
  So~$(\prec_{\re},\prec_{\im})$ is an arrangement whose cell, by construction, contains~$z$.

  Finally, consider two arrangements $(\prec_{\re},\prec_{\im})$ and $(\prec'_{\re},\prec'_{\im})$ with cells $A$ and $A'$.
  If some $i$ and $j$ satisfy $i \prec_{\re} j$ and $j \prec'_{\re} i$, then $A \cap A' = \varnothing$, and similarly for~$\prec_{\im}$ and~$\prec_{\im}'$.
  So if~$A \cap A' \neq \varnothing$, then the binary relations
  $\prec_{\re} \cup \prec'_{\re}$ and $\prec_{\im} \cup \prec'_{\im}$ are acyclic, and their transitive closures yield partial orders. The intersection $A \cap A'$ is the arrangement cell attached to this pair of orders.
\end{proof}

\begin{definition}[Cover]\label{def:cover}
  A finite sequence $(A_1,\dotsc,A_r)$ of arrangement cells \emph{covers} a path $F : [0,1] \to OC_n$ if there exists a partition $0 = t_0 \leq t_1 \leq \dotsb \leq t_r = 1$ such that $F([t_{i-1},t_i]) \subseteq A_i$ for every $i$.
\end{definition}

Because $OC_n$ is an open subset of $\mathbb{C}^n$ covered by arrangement cells, compactness of $[0,1]$ guarantees that every path in~$OC_n$ admits a finite covering sequence.

Recall that two paths $F,G : [0,1] \to OC_n$ are \emph{homotopic} if there exists a continuous map $H : [0,1]^2 \to OC_n$ with $H(0,t) = F(t)$, $H(1,t) = G(t)$, and $H(s,0) = F(0) = G(0)$, $H(s,1) = F(1) = G(1)$ for all~$s,t$.
The next lemma states that the homotopy class of a path is determined by its endpoints together with any arrangement sequence covering it.

\begin{lemma}\label{lem:homotopy-arrangement}
  Let $F, G : [0,1] \to OC_n$ be two paths.
  Suppose
  \begin{enumerate}[(i)]
    \item $F(0) = G(0)$ and $F(1) = G(1)$, and
    \item there exists a sequence of arrangement cells that covers both~$F$ and~$G$.
  \end{enumerate}
  Then $F$ and $G$ are homotopic.
\end{lemma}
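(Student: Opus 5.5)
Let $(A_1,\dotsc,A_r)$ be the common covering sequence. Take partitions $0=t_0\le\dotsb\le t_r=1$ for $F$ and $0=s_0\le\dotsb\le s_r=1$ for $G$, as in Definition~\ref{def:cover}, so that $F([t_{k-1},t_k])\subseteq A_k$ and $G([s_{k-1},s_k])\subseteq A_k$. The plan is to reduce to a piece-by-piece comparison and use convexity of arrangement cells (Lemma~\ref{lem:convex-open-cover}) to build straight-line homotopies inside each cell.

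First I would reparametrize. Let $\phi:[0,1]\to[0,1]$ be the increasing piecewise linear map sending $s_k$ to $t_k$; where $s_{k-1}=s_k$, the corresponding degenerate pieces are handled by collapsing. Rather than fuss with this, I would note that $F$ is homotopic to a concatenation $F_1\cdots F_r$ with $F_k=F|_{[t_{k-1},t_k]}$ reparametrized on $[0,1]$, where a degenerate interval gives a constant path. This is the standard fact that reparametrization fixing endpoints does not change the homotopy class. Do the same for $G$, giving $G_1\cdots G_r$. Each $F_k$ and $G_k$ is a path in $A_k$.

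Next, for each $1\le k\le r-1$, the junction points $p_k=F(t_k)$ and $q_k=G(s_k)$ both lie in $A_k\cap A_{k+1}$. This set is nonempty, and by Lemma~\ref{lem:convex-open-cover} it is itself an arrangement cell, hence convex. Set $p_0=q_0=F(0)$ and $p_r=q_r=F(1)$, and let $L_k$ be the segment from $p_k$ to $q_k$; it lies in $A_k\cap A_{k+1}$, with $L_0$ and $L_r$ constant. Then
\[ F\homeq (F_1 L_1)(L_1^{-1}F_2L_2)\dotsb(L_{r-1}^{-1}F_r), \]
since each inserted $L_kL_k^{-1}$ is null-homotopic. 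Each bracketed path $L_{k-1}^{-1}F_kL_k$ runs inside the convex set $A_k$ from $q_{k-1}$ to $q_k$, as does $G_k$. Two paths in a convex set with the same endpoints are homotopic rel endpoints via the straight-line homotopy $(1-\lambda)\alpha(t)+\lambda\beta(t)$, which stays in $A_k\subseteq OC_n$. Concatenating these homotopies gives $F\homeq G_1\dotsb G_r\homeq G$.

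The only delicate step is the bookkeeping of degenerate intervals and the reparametrization. The essential point is that the junction segments must lie in $A_k\cap A_{k+1}$, which requires the convexity of intersections from Lemma~\ref{lem:convex-open-cover}; convexity of each cell alone would not suffice to place $L_k$ in both adjacent cells.
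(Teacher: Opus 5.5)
Your proof is correct, but it is organized differently from the paper's. The paper first reparametrizes $F$ and $G$, up to homotopy, so that the two partitions coincide and are strict. It then uses a single global straight-line homotopy $H(s,t) = (1-s)F(t) + sG(t)$: for each $t$, both $F(t)$ and $G(t)$ lie in the same convex cell $A_i$, so the segment joining them stays in $OC_n$.

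You instead keep the two unrelated partitions. You split each path into $r$ pieces, splice in the junction segments $L_k$ from $F(t_k)$ to $G(s_k)$, and apply a straight-line homotopy cell by cell. This avoids synchronizing the partitions, and degenerate intervals are absorbed as constant pieces. The cost is more bookkeeping with concatenations and the inserted back-and-forth segments $L_kL_k^{-1}$. The paper's version is shorter: once the parametrizations are aligned, the junction times need no special treatment, because the one formula handles them.

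One correction to your closing remark. You do not need the part of Lemma~\ref{lem:convex-open-cover} stating that nonempty intersections of cells are again cells. The points $p_k$ and $q_k$ lie in $A_k$ and also in $A_{k+1}$, and each of these cells is convex. So the segment $L_k$ lies in each cell, hence in their intersection, since an intersection of convex sets is always convex. Convexity of the individual cells therefore does suffice, contrary to what you wrote.
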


\begin{proof}
  Let $(A_1,\dotsc,A_r)$ be a common covering sequence and choose partitions $0 = t_0 \leq \dotsb \leq t_r = 1$ for $F$ and $0 = s_0 \leq \dotsb \leq s_r = 1$ for $G$ realizing the cover.
  Up to homotopy, one can reparametrize~$F$ and $G$ so that~$s_i = t_i$ and inequalities are strict.

  Define $H : [0,1]^2 \to \mathbb{C}^n$ by $H(s,t) = (1-s)F(t) + s G(t)$.
  This is a homotopy between~$F$ and~$G$:
  the boundary conditions are clearly satisfied, so it suffices to check that $H(s,t)$ stays inside~$OC_n$.
  Fix $t \in [0,1]$ and choose $i$ with $t_{i-1} \leq t \leq t_i$.
  By construction, both $F(t)$ and $G(t)$ lie in $A_i$, and $A_i$ is convex; hence $H(s,t) \in A_i \subset OC_n$ for all $s$.
\end{proof}

\begin{algorithm}[tbp]
  \caption{Computation of an arrangement sequence covering a path}
  \raggedright
  \begin{description}
    \item[input] An approximation of a path~$F : [0,1] \to OC_n$, given as a \emph{sep} method.
    \item[output] A sequence of arrangements covering~$F$.
  \end{description}

  \begin{pseudo}
    %\begin{noident}
    def \fn{cover}(F):\\+
    $A \gets \{\}$ \ct{edge set of a graph encoding the current arrangement}\\
    $\id{lifetime} \gets \left\{  \right\}$ \ct{map edges of~$A$ to their lifetime}\\
    $\mathcal{L} \gets []$ \ct{list of arrangements covering $F$}\\
    $t \gets 0$\\
    while $t < 1$:\\+
    \ct{Complete the graph until it represents an arrangement containing $F(t)$.}\\
    while \tn{$A$ does not define an arrangement}:\label{line:braid:arrangement-repair}\\+
    $i, j \gets$ \tn{a non-comparable pair}\\
    $e, t' \gets \fn{sep}(F, i, j, t)$ \ct{$e \in [n]^2 \times \{\re, \im\}$}\\
    $A \gets A \cup \{ e \}$ \ct{$i$ and~$j$ are now comparable}\\
    $\id{lifetime}[e] \gets t'$\\-
    \tn{append a copy of $A$ to $\mathcal{L}$}\label{line:braid:arrangement-copy}\\

    \ct{Advance time until some edge expires.}\\
    $e \gets$ \tn{edge of~$A$ with smallest lifetime}\\
    $A \gets A \setminus \{ e \}$\\
    $t \gets \id{lifetime}[e]$\\-
    return $\mathcal{L}$
    %\end{noident}
  \end{pseudo}
  \label{algo:cover}
\end{algorithm}

\begin{figure}
  \def\motion#1#2{
    \tikzmath{integer \iter; \iter=#2/18;}
    \draw[sync] (#1 |- top) node[above] {$t = \tfrac{\iter}{5}$} -- (#1 |- bot);
    \begin{scope}[shift=(#1 |- fig), label distance=-2, radius = 2]
      \path[arr] (2,0)  arc[start angle=0, end angle=#2]  node[label={#2:\small 1}] (A) {};
      \path[arr] (0,2)  arc[start angle=90, end angle=90 + #2]  node[inner sep=3pt,  label={90+#2:\small 2}] (B) {};
      \path[arr] (-2,0)  arc[start angle=180, end angle=180+#2]  node[label={180+#2:\small 3}] (C) {};
      \path[arr] (0,-2)  arc[start angle=270, end angle=270+#2]  node[label={270+#2:\small 4}] (D) {};

      \node[dot] at (A) {};
      \node[dot, inner sep=3pt] at (B) {};
      \node[dot] at (C) {};
      \node[dot] at (D) {};
      % \ifnum #2 < 90
      %   \draw[] (A) arc[radius = 2, start angle=#2, end angle=90];
      %   \draw[] (B) arc[radius = 2, start angle=90+#2, end angle=180];
      %   \draw[] (C) arc[radius = 2, start angle=180+#2, end angle=270];
      %   \draw[] (D) arc[radius = 2, start angle=270+#2, end angle=360];
      % \fi
    \end{scope}
  }

  \def\eps{0.4}
  \def\rect#1#2#3#4#5{
    \begin{scope}[xshift=-.8cm, yshift=1ex]
      \fill[rect] (#1 - \eps, #2 - \eps)
      -- (#1 - \eps, #4 + \eps) -- (#3 + \eps, #4 + \eps)
      -- (#3 + \eps, #2 - \eps) -- (#1 - \eps, #2 - \eps);
      \node at ($0.5*(#1+#3, #2+#4)$) {}; % {\color{white}\scriptsize\bfseries #5};
    \end{scope}
  }

  \def\rlabel#1#2#3{
    \begin{scope}[xshift=-.8cm, yshift=1ex, every node/.style={}]
      \node[rlabel] at (#1, #2) {\color{white}\scriptsize\bfseries #3};
    \end{scope}
  }

  \def\perm#1#2#3#4#5#6#7#8{
    \begin{scope}[xshift=-.8cm, yshift=1ex, every node/.style={circle, inner sep=0.05cm, draw=red, thick}]
      \node[label=1] at (#1, #2) {};
      \node[label=2] at (#3, #4) {};
      \node[label=3] at (#5, #6) {};
      \node[label=4] at (#7, #8) {};
    \end{scope}
  }

  \centering
  \makebox[\textwidth][c]{
    \begin{tikzpicture}[x=.3cm, y=.3cm, scale=0.8,
        dot/.style={fill=primary, opacity=1.0, circle, inner sep=1.5pt},
        arr/.style={draw, thin, densely dotted},
        bar/.style={line width=4, draw=nuance},
        sync/.style={dotted},
        rect/.style={primary},
        highlight/.style={circle, inner sep=1pt, fill=highlight, opacity=0.8, text opacity=1},
        rlabel/.style={},
      ]

      \node (label) at (-6, 0) {};

      \def\onedeg{0.6}

      \foreach \s in {0,9,...,90} {
          \node (A\s) at (\s * \onedeg, 0) {};
        }

      \node (top) at (0, 3) {};
      \node (12re) at (0, -0) {};
      \node (12im) at (0, -1) {};
      \node (13re) at (0, -3) {};
      \node (13im) at (0, -4) {};
      \node (14re) at (0, -6) {};
      \node (14im) at (0, -7) {};
      \node (23re) at (0, -9) {};
      \node (23im) at (0, -10) {};
      \node (24re) at (0, -12) {};
      \node (24im) at (0, -13) {};
      \node (34re) at (0, -15) {};
      \node (34im) at (0, -16) {};
      \node (bot) at (0, -16) {};
      \node (fig) at (0, 12) {};
      \node (arrangements) at (0, -22) {};

      \draw[bar] (A0 |- 12re) node[above right] {\small $\im(z_1) < \im(z_2)$} -- ++(17*\onedeg, 0);
      \draw[bar] (A0 |- 12re) +(18*\onedeg, 0) node[above right] {\small $\re(z1) > \re(z_2)$} -- +(90*\onedeg, 0);

      \draw[bar] (A0 |- 13re) node[above right] {\small $\re(z_1) > \re(z_3)$} -- ++(71*\onedeg, 0);
      \draw[bar] (A0 |- 13re) + (72*\onedeg, 0) node[above right] {\small $\im(z_1) > \im(z_3)$} -- +(90*\onedeg, 0);

      \draw[bar] (A0 |- 14re) node[above right] {\small $\im(z_1) > \im(z_4)$} -- +(90*\onedeg, 0);

      \draw[bar] (A0 |- 23re) node[above right] {\small $\im(z_2) > \im(z_3)$} -- +(90*\onedeg, 0);

      \draw[bar] (A0 |- 24re) node[above right] {\small $\im(z_2) > \im(z_4)$} -- ++(53*\onedeg, 0);
      \draw[bar] (A0 |- 24re) + (54*\onedeg, 0) node[above right] {\small $\re(z_2) < \re(z_4)$} -- +(90*\onedeg, 0);

      \draw[bar] (A0 |- 34re) node[above right] {\small $\im(z_3) > \im(z_4)$} -- ++(35*\onedeg, 0);
      \draw[bar] (A0 |- 34re) + (36*\onedeg, 0) node[above right] {\small $\re(z_3) < \re(z_4)$} -- +(90*\onedeg, 0);

      \node at (label |- top) {$i$\quad $j$};
      \node at (label |- 12re) {1\quad 2};
      \node at (label |- 13re) {1\quad 3};
      \node at (label |- 14re) {1\quad 4};
      \node at (label |- 23re) {2\quad 3};
      \node at (label |- 24re) {2\quad 4};
      \node at (label |- 34re) {3\quad 4};

      \motion{A0}{0}
      \motion{A18}{18}
      \motion{A36}{36}
      \motion{A54}{54}
      \motion{A72}{72}
      \motion{A90}{90}

      \begin{scope}[shift=(A0 |- arrangements)]
        \rect{1}{1}{4}{1}{4}
        \rect{1}{4}{4}{4}{2}
        \rect{3}{2}{4}{3}{1}
        \rect{1}{2}{2}{3}{3}

        \rlabel{4}{2}{1}
        \rlabel{2}{4}{2}
        \rlabel{1}{3}{3}
        \rlabel{3}{1}{4}
      \end{scope}

      % \begin{scope}[shift=(A9 |- arrangements)]
      %   \rect{1}{1}{4}{1}{4}
      %   \rect{1}{4}{3}{4}{2}
      %   \rect{4}{2}{4}{3}{1}
      %   \rect{1}{2}{2}{3}{3}

      %   \begin{scope}[rlabel/.style={highlight}]
      %     \rlabel{4}{2}{1}
      %     \rlabel{2}{4}{2}
      %     \rlabel{1}{3}{3}
      %     \rlabel{3}{1}{4}
      %   \end{scope}
      % \end{scope}

      \begin{scope}[shift=(A18 |- arrangements)]
        \rect{1}{1}{4}{1}{4}
        \rect{1}{3}{3}{4}{2}
        \rect{4}{2}{4}{4}{1}
        \rect{1}{2}{3}{2}{3}
        \rlabel{4}{3}{1}
        \rlabel{2}{4}{2}
        \rlabel{1}{2}{3}
        \rlabel{3}{1}{4}

      \end{scope}

      % \begin{scope}[shift=(A27 |- arrangements)]
      %   \rect{3}{1}{4}{1}{4}
      %   \rect{1}{3}{3}{4}{2}
      %   \rect{4}{2}{4}{4}{1}
      %   \rect{1}{2}{2}{2}{3}

      %   \begin{scope}[rlabel/.style={highlight}]
      %     \rlabel{4}{3}{1}
      %     \rlabel{2}{4}{2}
      %     \rlabel{1}{2}{3}
      %     \rlabel{3}{1}{4}
      %   \end{scope}
      % \end{scope}

      \begin{scope}[shift=(A36 |- arrangements)]
        \rect{3}{1}{4}{1}{4}
        \rect{1}{3}{3}{4}{2}
        \rect{4}{2}{4}{4}{1}
        \rect{1}{1}{2}{2}{3}

        \rlabel{4}{3}{1}
        \rlabel{2}{4}{2}
        \rlabel{1}{2}{3}
        \rlabel{3}{1}{4}
      \end{scope}

      % \begin{scope}[shift=(A45 |- arrangements)]
      %   \rect{3}{1}{4}{1}{4}
      %   \rect{1}{3}{2}{4}{2}
      %   \rect{3}{2}{4}{4}{1}
      %   \rect{1}{1}{2}{2}{3}

      %   \begin{scope}[rlabel/.style={highlight}]
      %     \rlabel{4}{3}{1}
      %     \rlabel{2}{4}{2}
      %     \rlabel{1}{2}{3}
      %     \rlabel{3}{1}{4}
      %   \end{scope}
      % \end{scope}

      \begin{scope}[shift=(A54 |- arrangements)]
        \rect{3}{1}{4}{2}{4}
        \rect{1}{3}{2}{4}{2}
        \rect{3}{3}{4}{4}{1}
        \rect{1}{1}{2}{2}{3}

        \rlabel{4}{3}{1}
        \rlabel{2}{4}{2}
        \rlabel{1}{2}{3}
        \rlabel{3}{1}{4}

        \begin{scope}[xshift=-.8cm, yshift=1ex, every node/.style={}]
          \draw[{Classical TikZ Rightarrow[length=2]}-{Classical TikZ Rightarrow[length=2]}, very thick, white] (3,2)--(3,3);
        \end{scope}
      \end{scope}
      \path (A54 |- arrangements) -- ++(2,5) node {*};

      % \begin{scope}[shift=(A63 |- arrangements)]
      %   \rect{3}{1}{4}{3}{4}
      %   \rect{1}{3}{2}{4}{2}
      %   \rect{3}{4}{4}{4}{1}
      %   \rect{1}{1}{2}{2}{3}

      %   \begin{scope}[rlabel/.style={highlight}]
      %     \rlabel{4}{4}{1}
      %     \rlabel{1}{3}{2}
      %     \rlabel{2}{2}{3}
      %     \rlabel{3}{1}{4}
      %   \end{scope}
      % \end{scope}

      \begin{scope}[shift=(A72 |- arrangements)]
        \rect{3}{1}{4}{3}{4}
        \rect{1}{3}{1}{4}{2}
        \rect{2}{4}{4}{4}{1}
        \rect{1}{1}{2}{2}{3}

        \rlabel{4}{4}{1}
        \rlabel{1}{3}{2}
        \rlabel{2}{2}{3}
        \rlabel{3}{1}{4}
      \end{scope}

      % \begin{scope}[shift=(A81 |- arrangements)]
      %   \begin{scope}[yshift=-.7cm]
      %     \rect{1}{1}{2}{1}{3}
      %     \rect{2}{4}{4}{4}{1}
      %     \rect{3}{2}{4}{3}{4}
      %     \rect{1}{2}{1}{3}{2}

      %     \begin{scope}[rlabel/.style={highlight}]
      %       \rlabel{4}{4}{1}
      %       \rlabel{1}{3}{2}
      %       \rlabel{2}{1}{3}
      %       \rlabel{3}{2}{4}
      %     \end{scope}
      %   \end{scope}
      % \end{scope}

      % \begin{scope}[shift=(A90 |- arrangements)]
      %   \begin{scope}[yshift=-.7cm]
      %     \rect{1}{1}{4}{1}{3}
      %     \rect{1}{4}{4}{4}{1}
      %     \rect{3}{2}{4}{3}{4}
      %     \rect{1}{2}{2}{3}{2}

      %     \begin{scope}[rlabel/.style={highlight}]
      %       \rlabel{4}{2}{4}
      %       \rlabel{2}{4}{1}
      %       \rlabel{1}{3}{2}
      %       \rlabel{3}{1}{3}
      %     \end{scope}
      %   \end{scope}
      % \end{scope}

    \end{tikzpicture}}
  \caption{Illustration of Algorithm~\ref{algo:cover} for four points moving along a circle, with position known only approximately (as represented by the blue discs). The bars represent the lifetime of an edge in the graph representing the current arrangement. The diagrams in the last row represent the resulting cover.
    The arrangement cell marked with~$*$ is not representable as a configuration of boxes, it is defined by \[ \left\{ \max(\re(z_2), \re(z_3)) < \min(\re(z_1), \re(z_4)), \im(z_3) < \im(z_2) \text{ and } \im(z_4) < \im(z_1) \right\}. \]}
  \label{fig:example-sep}
\end{figure}

Algorithm~\ref{algo:cover} shows how to implement the computation of a covering sequence of arrangements from a path approximation.
It is exemplified in Figure~\ref{fig:example-sep}. In this example, the algorithm produces a sequence of 5~arrangements.
Arrangements are represented by subsets of~$[n]^2\times \left\{ \re, \im \right\}$, that we interpret as graphs with vertex set~$[n]$ and edges labeled by~$\re$ or~$\im$. The partial orders~$\prec_{\re}$ and~$\prec_{\im}$ are the reachability relations, restricted to $\re$- or $\im$-labeled edges.
The map \id{lifetime} stores the time until each edge is valid.

\begin{proposition}
  Algorithm~\ref{algo:cover} terminates and outputs a sequence covering~$F$.
\end{proposition}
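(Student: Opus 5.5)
We prove the statement in three steps: an invariant linking the graph~$A$ to the path~$F$, termination from the uniform step bound~\ref{point:approx-2}, and the covering property obtained by reading off the partition from the successive values of~$t$.

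\textbf{Invariant.} At the start of each iteration of the outer loop, with current time~$t$, every edge~$e=(i',j',q)$ in~$A$ satisfies $\id{lifetime}[e] > t$, or $\id{lifetime}[e]\geq t$ with equality only for edges that expired simultaneously and are about to be removed. Moreover, by~\ref{point:approx-3}, $q(F_{i'}(s)) < q(F_{j'}(s))$ for all~$s$ between the time the edge was inserted and $\id{lifetime}[e]$.

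To handle ties cleanly, I would first show the outer loop can be thought of as removing \emph{all} edges whose lifetime equals the minimum. Removing them one at a time only produces a repeated time value~$t$, and hence a degenerate interval $[t_{i-1},t_i]$ with $t_{i-1}=t_i$. This is allowed, since Definition~\ref{def:cover} uses non-strict inequalities.

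\textbf{Termination.} The inner loop terminates because each call to~\fn{sep} makes a previously non-comparable pair comparable and never removes comparabilities. There are at most $\binom n2$ pairs, so after finitely many steps every pair is comparable. The reachability relations are then strict partial orders: if a cycle existed along $\re$-edges, all of these edges are valid at the current time~$t$, so we would get $\re(F_{i}(t))<\dotsb<\re(F_i(t))$, which is absurd. Hence~$A$ defines an arrangement.

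For the outer loop, each edge inserted at time~$t$ has lifetime at least $\min(1,t+\varepsilon_F)$ by~\ref{point:approx-2}. Every edge currently in~$A$ was inserted at some time $\tau\le t$. Consider the times at which the outer loop advances. I would argue that the time value increases by at least~$\varepsilon_F$ every $\binom n2+1$ iterations, or reaches~$1$. Indeed, an edge removed at time~$t$ was inserted at some time $\tau\le t-\varepsilon_F$ (unless $t=1$). Each iteration removes one edge, and at most $\binom n2$ edges present at a given time can be removed without new insertions. So after that many removals, all edges were inserted at times $\geq t_0$, and the next removal happens at time $\geq t_0+\varepsilon_F$. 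Hence~$t$ reaches~$1$ after $O\bigl(n^2/\varepsilon_F\bigr)$ iterations.

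This counting is the main obstacle, and I would make it precise by charging each removal to its insertion time.

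\textbf{Covering.} Let $t_0=0<\dotsb$ be the successive values of~$t$ at which copies $A_1,\dotsc,A_r$ are appended, and set~$t_r=1$. The arrangement~$A_k$ is appended at time~$t_{k-1}$, and the next copy is appended at time~$t_k$, equal to the minimal lifetime of the edges of~$A_k$. Every edge of~$A_k$ was inserted at some time $\le t_{k-1}$ and has lifetime $\ge t_k$. By~\ref{point:approx-3}, each defining strict inequality of the edges of~$A_k$ therefore holds for all $s\in[t_{k-1},t_k]$. Transitivity of~$<$ then yields all inequalities of the reachability orders, so $F([t_{k-1},t_k])\subseteq A_k$. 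Together with $t_r=1$, this gives exactly the covering of Definition~\ref{def:cover}.
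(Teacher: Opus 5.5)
Your proof is correct. The covering part is the paper's argument made explicit. The paper rests correctness on the same invariant: an edge $(i,j,q)$ with lifetime $t'$ certifies $q(F_i(s))<q(F_j(s))$ on $[t,t']$. You add the identification of the partition with the successive values of $t$, and the transitivity step.

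For termination you take a different route. The paper uses a potential function. It buckets the edges of $A$ by lifetime into the intervals $[k\varepsilon_F,(k+1)\varepsilon_F)$ and claims the count vector $(n_k)_k$ strictly decreases lexicographically. The intended reason is that the deleted edge lies in the lowest nonempty bucket, while edges inserted at the current time land in strictly higher buckets.

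That argument is short and needs no bound on $|A|$, since the lexicographic order on $\mathbb{N}^m$ is well founded. But it only gives finiteness, because the number of possible count vectors is exponential in $1/\varepsilon_F$. It also has to be read at the right moment. The vector must be taken right after each repair loop, not after each deletion: an edge inserted at time $t$ with lifetime $t+\varepsilon_F$ can land in the same bucket as the edge deleted next. Lifetimes clamped at $1$ also need separate treatment, for instance by shrinking $\varepsilon_F$ to $1/m$ for an integer $m$.

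Your charging argument avoids both subtleties. It yields an explicit $O(n^2/\varepsilon_F)$ bound on the number of iterations, hence on the length of the output cover. You also handle the inner loop, including acyclicity via the simultaneous validity of all edges at time $t$, which the paper leaves implicit.

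To tighten your write-up, two points need a line each:
\begin{itemize}
\item Justify $|A|\le\binom n2$. An edge is only inserted for a currently incomparable pair, so no pair ever carries two edges; alternatively, use the trivial bound $|A|\le 2n(n-1)$.
\item State that $t$ is nondecreasing, since every lifetime returned at time $t<1$ exceeds $t$.
\end{itemize}
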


\begin{proof}
  For termination, let $\varepsilon > 0$ be given by Item~\ref{point:approx-2} and, for each integer $k \geq 0$, let $n_k$ denote the number of edges in~$A$ whose lifetime lies in $[k\varepsilon,(k+1)\varepsilon)$.
  Considering the tuple $(n_k)_{0 \leq k \leq \frac{1}{\epsilon}}$ at the end of each iteration, we obtain a strictly decreasing sequence for the lexicographic ordering.
  Thus only finitely many iterations occur.

  Correctness rests on the following invariant coming from the defining properties of~\fn{sep}: whenever~$(i, j, q)$ belongs to~$A$ with lifetime~$t'$, the inequality $q(F_i(s)) < q(F_j(s))$ holds for all $s \in [t,t']$.
\end{proof}

\begin{figure}[tbp]
  \centering
  \includegraphics[width=.08\linewidth]{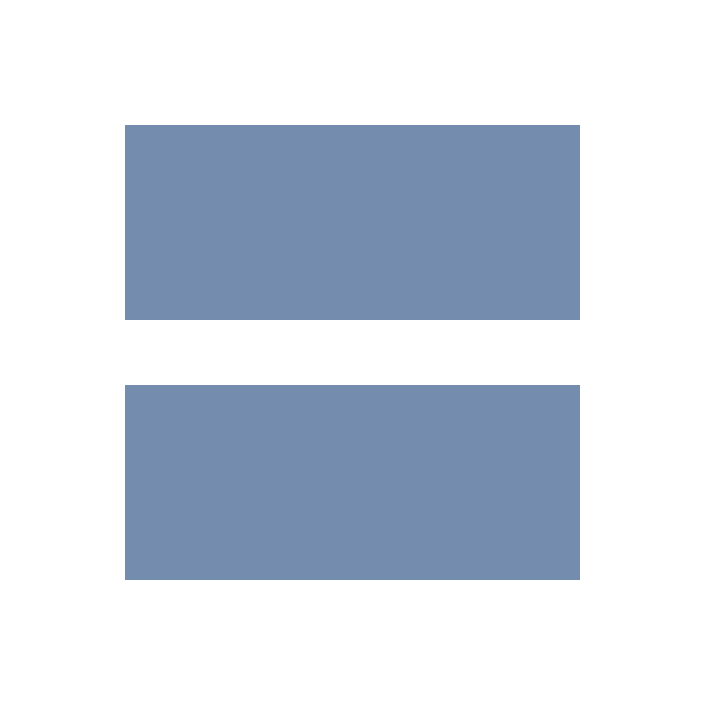}
  \includegraphics[width=.08\linewidth]{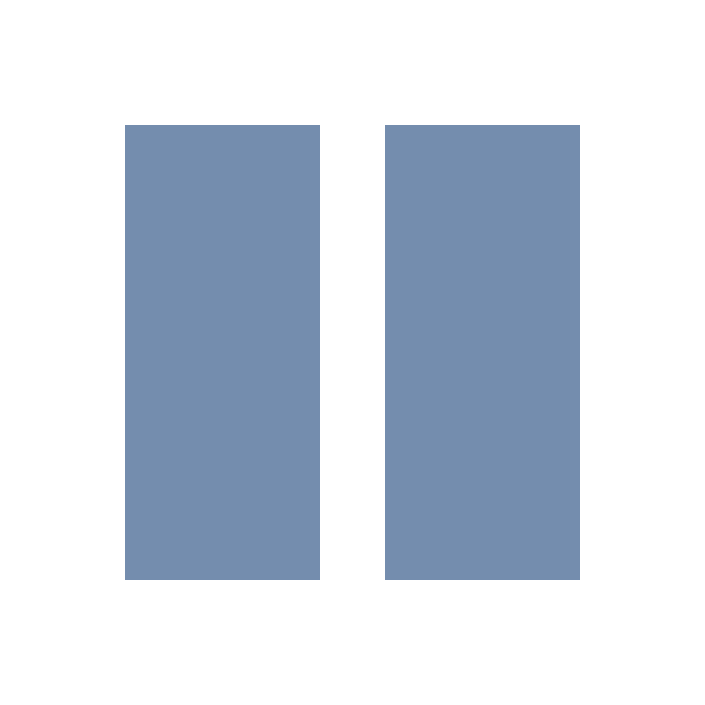}
  \includegraphics[width=.08\linewidth]{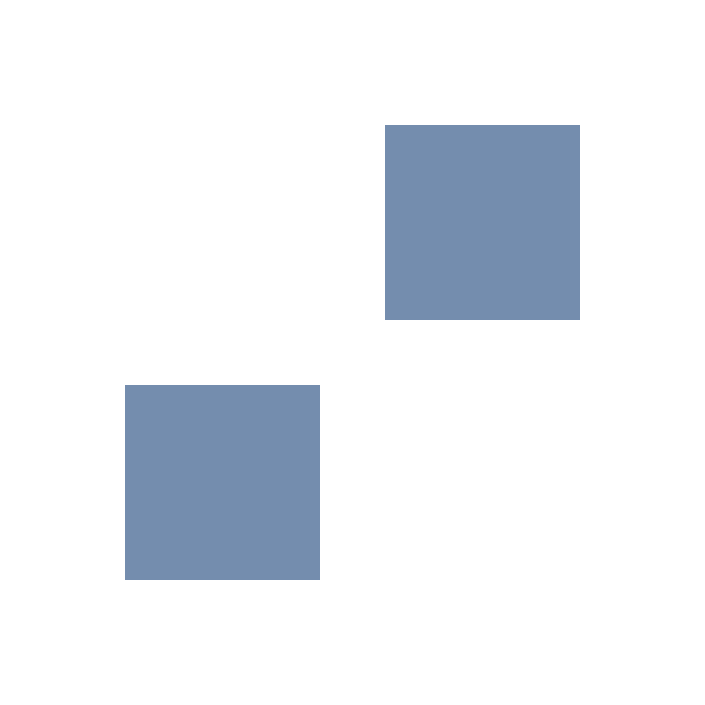}
  \includegraphics[width=.08\linewidth]{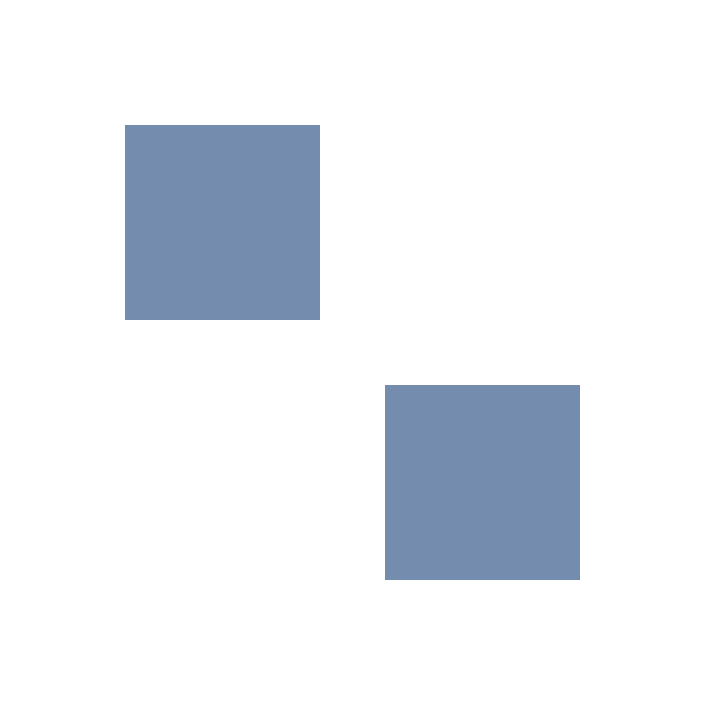}
  \caption{All four combinatorial arrangements of two axis-aligned disjoint rectangles in the plane. Two configurations are equivalent if, for every pair of rectangles, the same relative position holds in both configurations (namely, whether one lies strictly to the left of the other or strictly above the other).}
  \label{fig:n2}
\end{figure}

\begin{figure}[tbp]
  \raggedright

  \includegraphics[width=.08\linewidth]{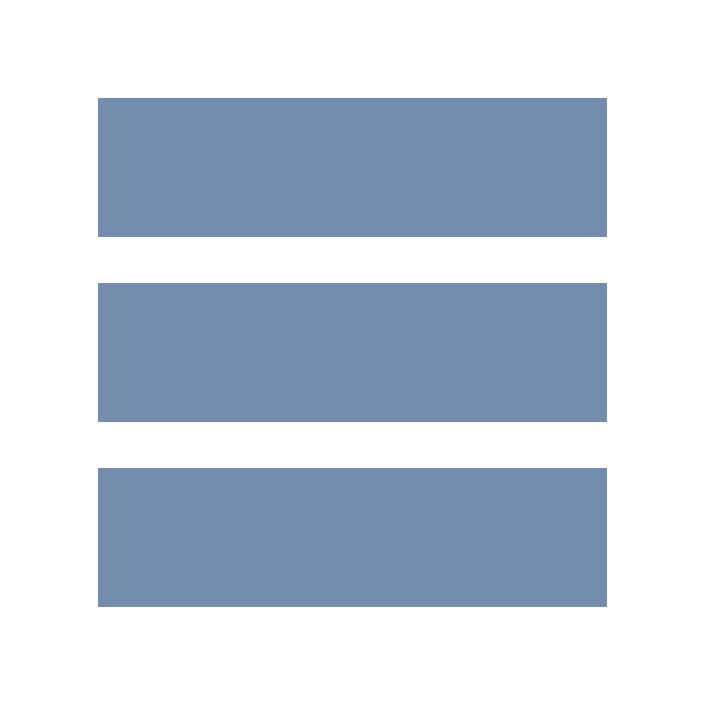} \includegraphics[width=.08\linewidth]{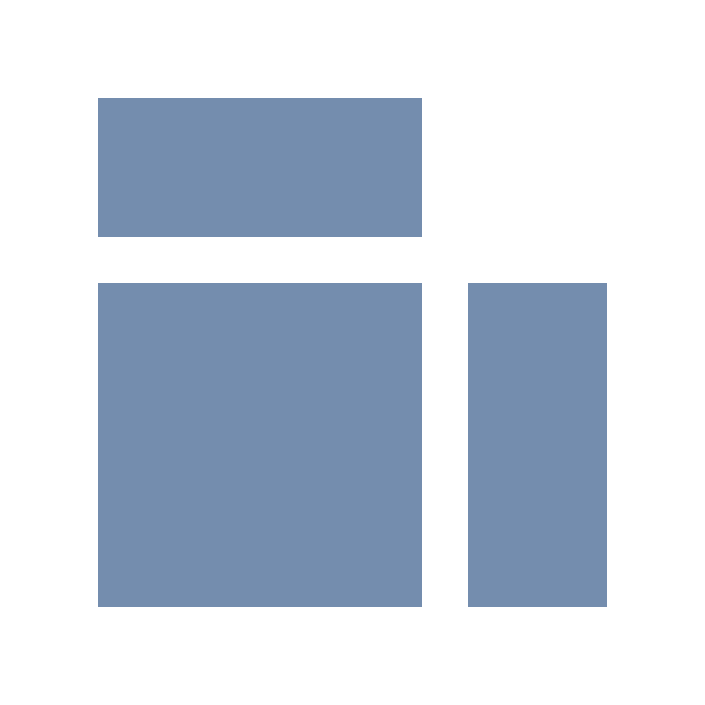} \includegraphics[width=.08\linewidth]{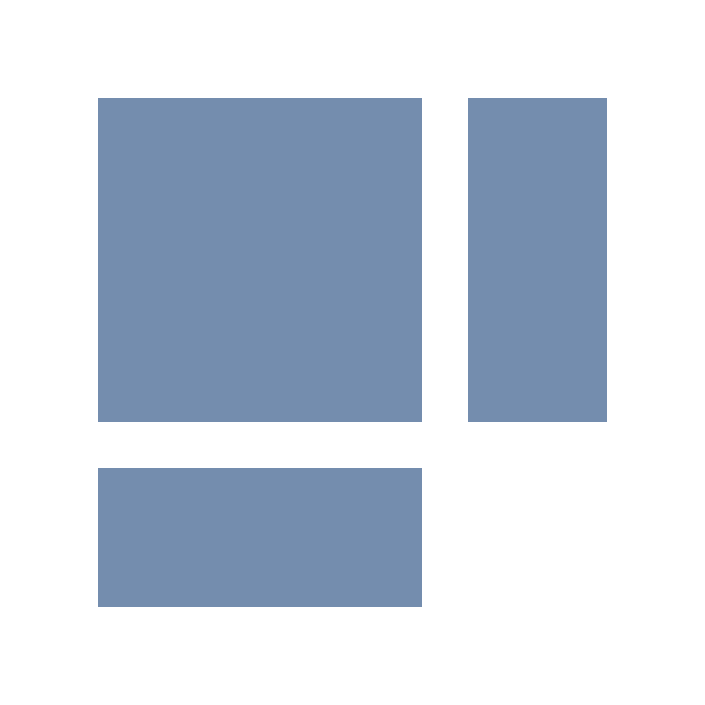} \includegraphics[width=.08\linewidth]{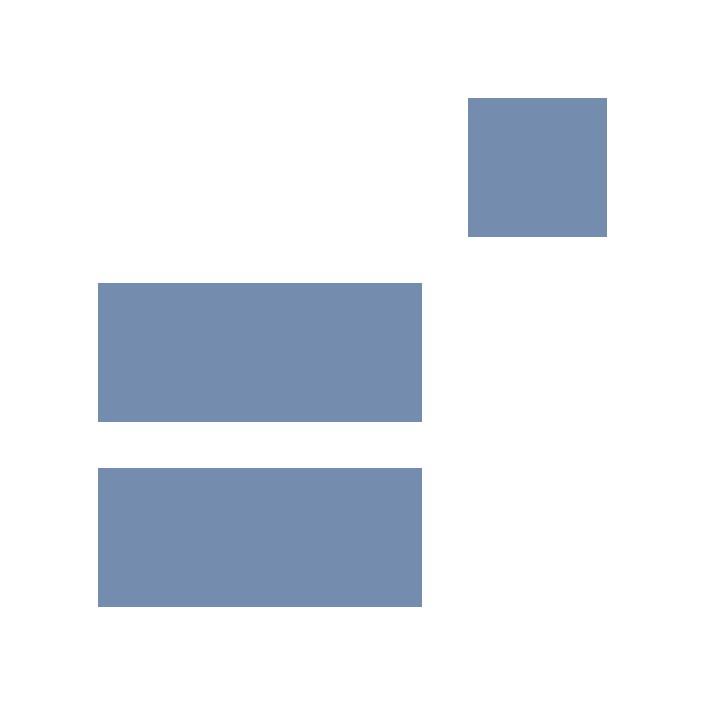} \includegraphics[width=.08\linewidth]{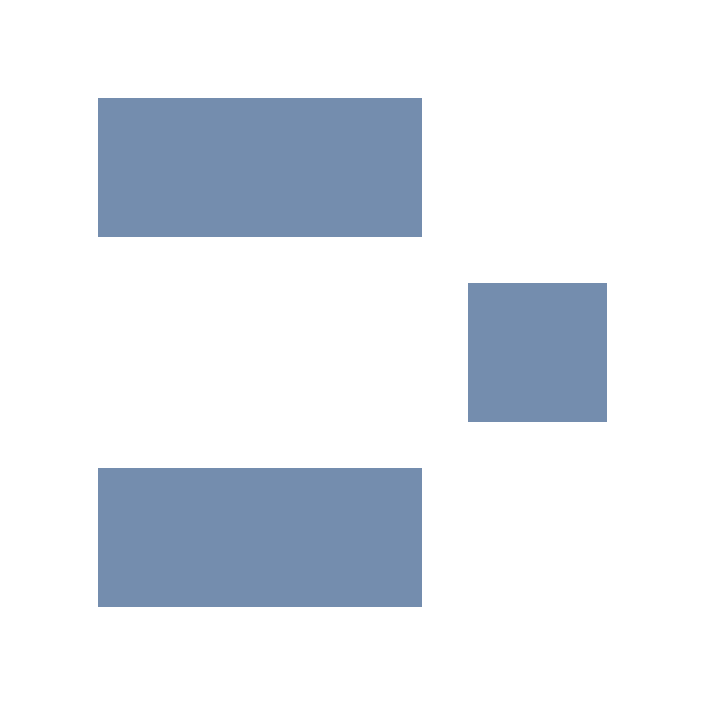} \includegraphics[width=.08\linewidth]{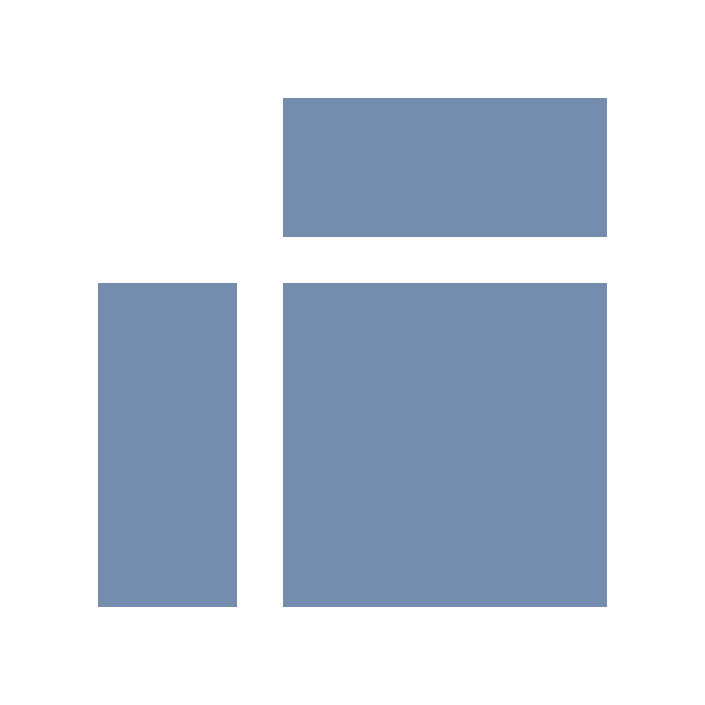} \includegraphics[width=.08\linewidth]{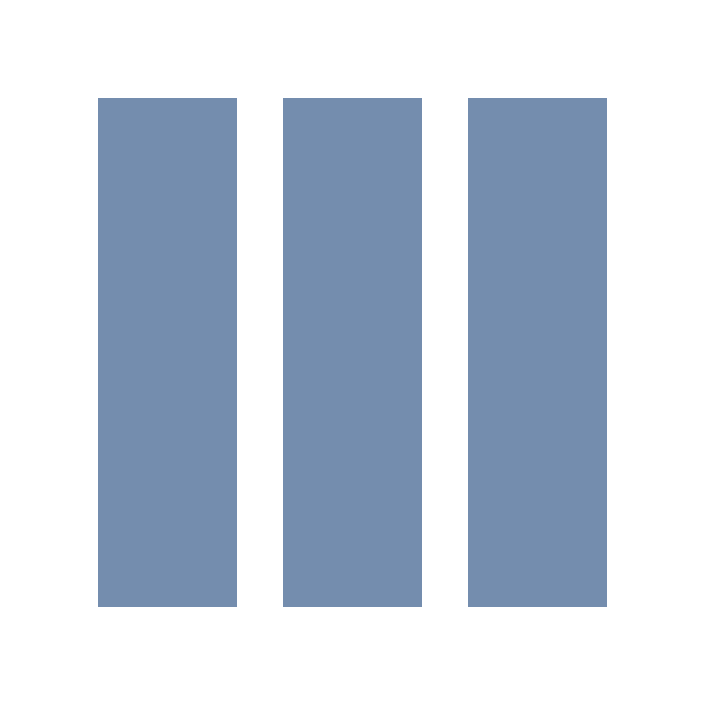} \includegraphics[width=.08\linewidth]{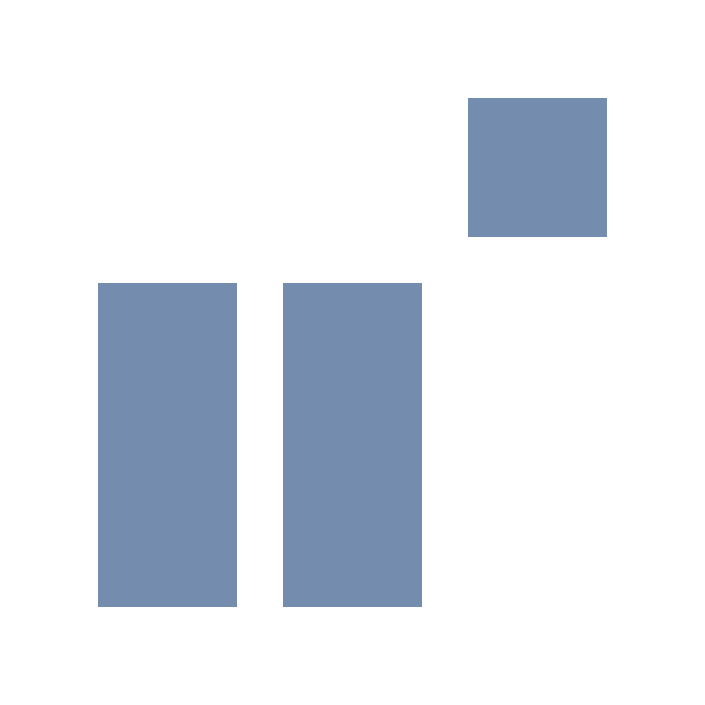} \includegraphics[width=.08\linewidth]{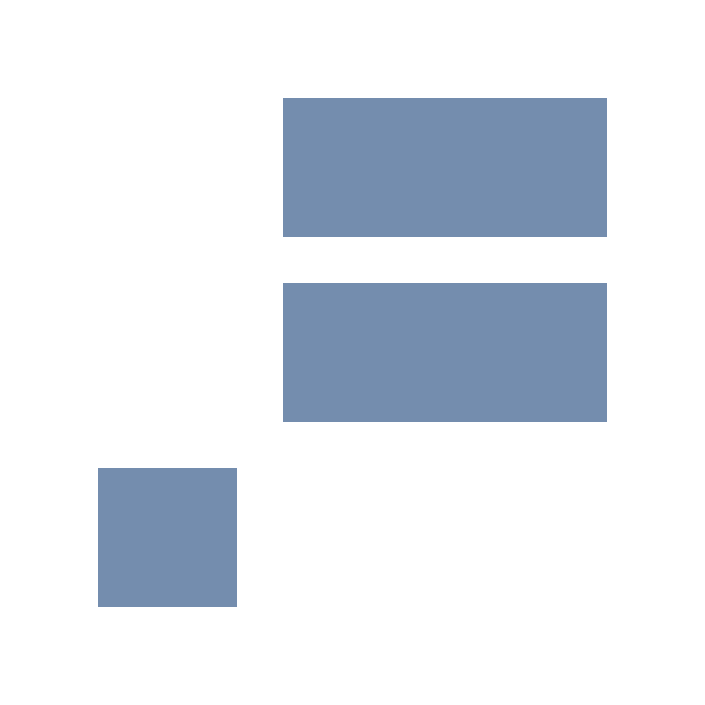} \includegraphics[width=.08\linewidth]{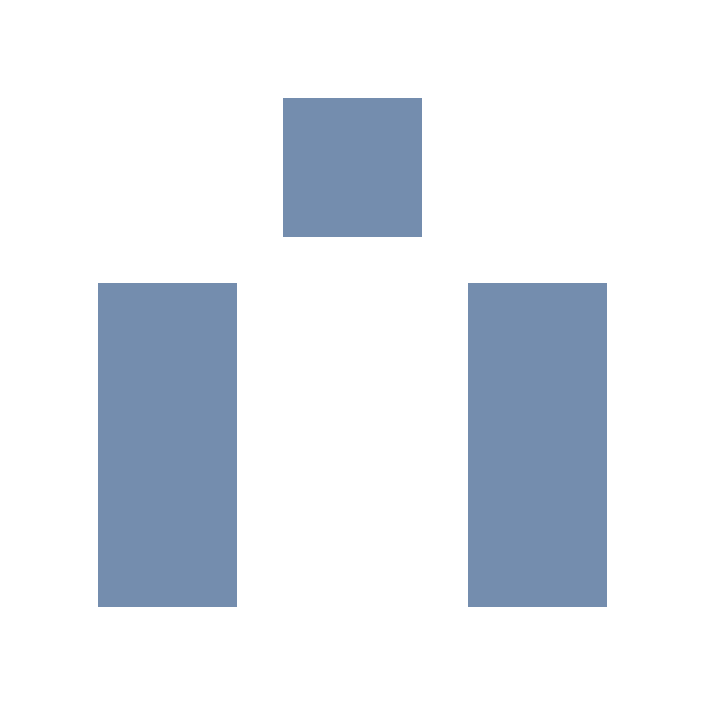} \includegraphics[width=.08\linewidth]{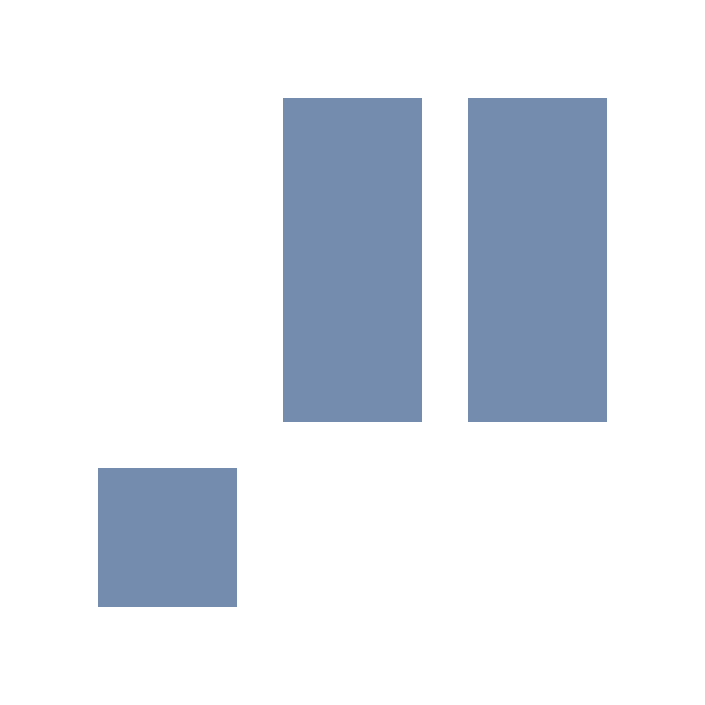} \includegraphics[width=.08\linewidth]{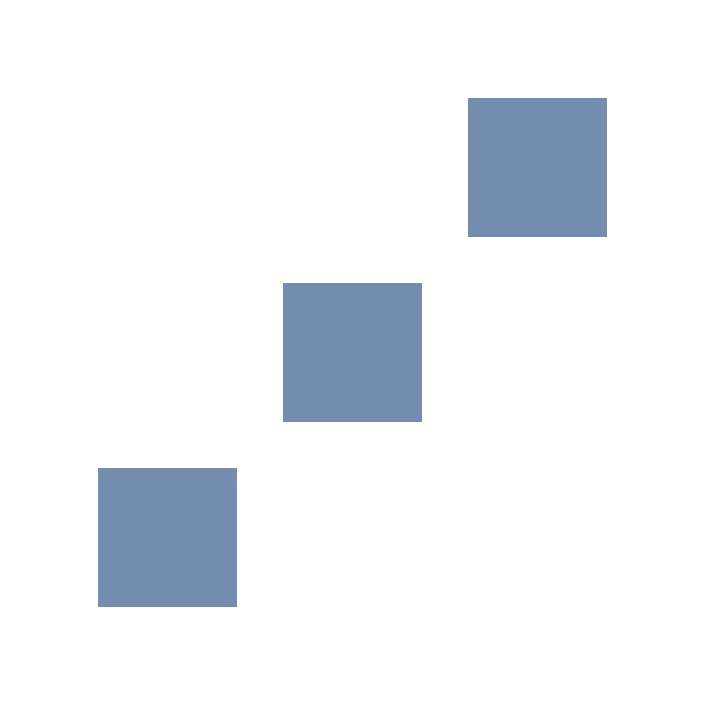} \includegraphics[width=.08\linewidth]{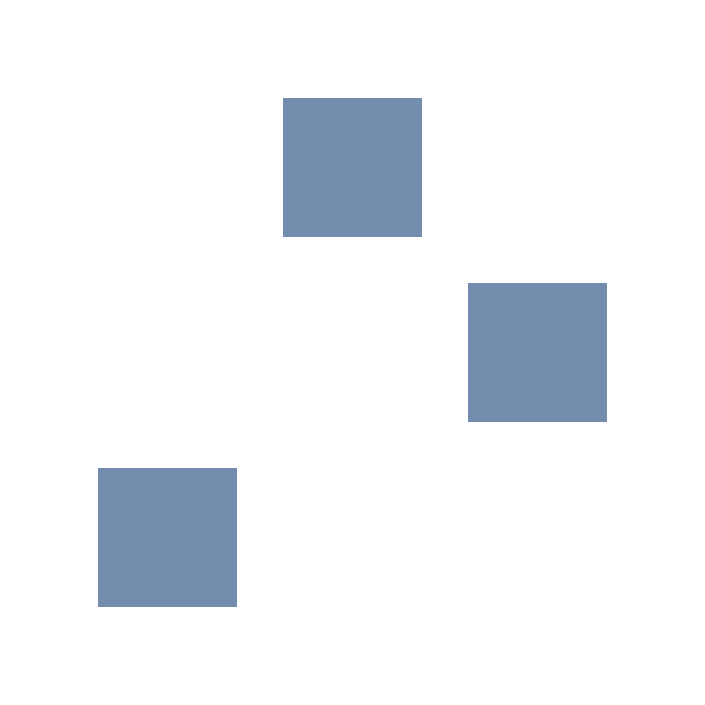} \includegraphics[width=.08\linewidth]{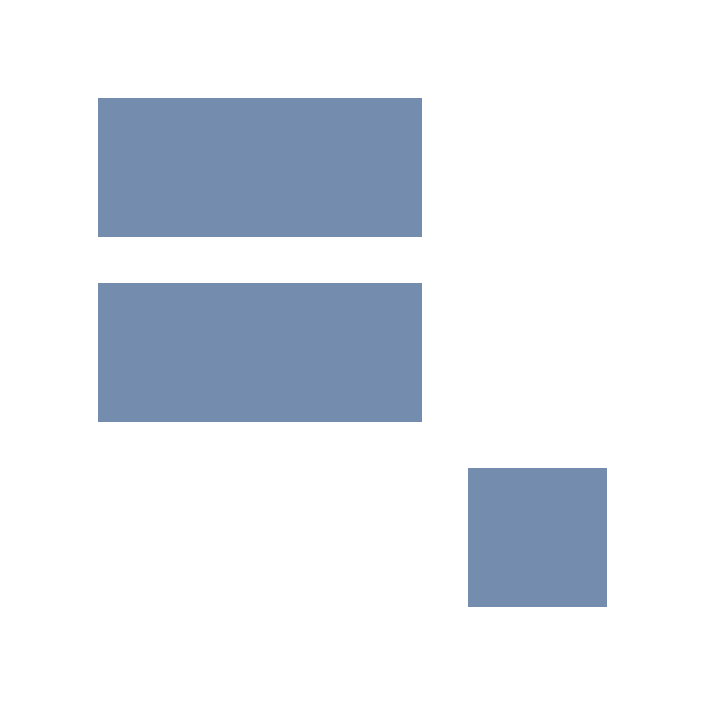} \includegraphics[width=.08\linewidth]{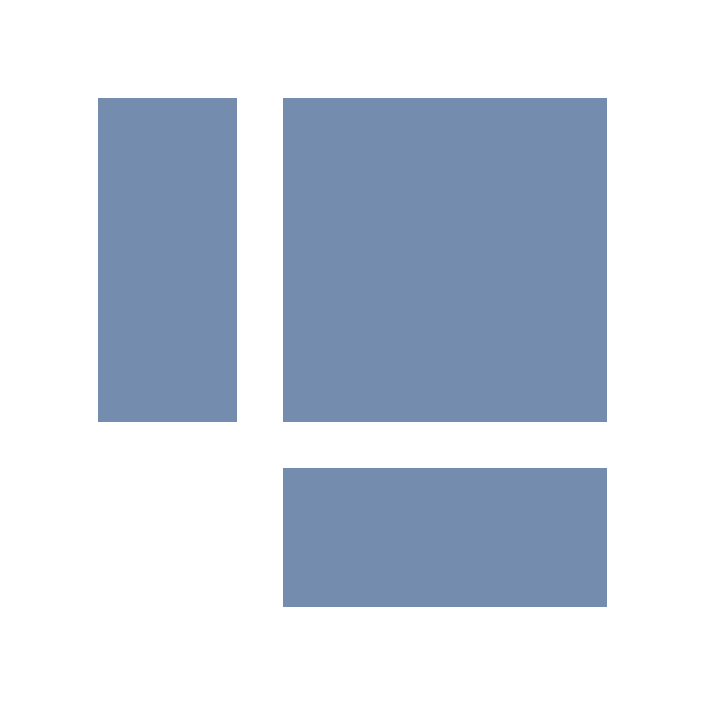} \includegraphics[width=.08\linewidth]{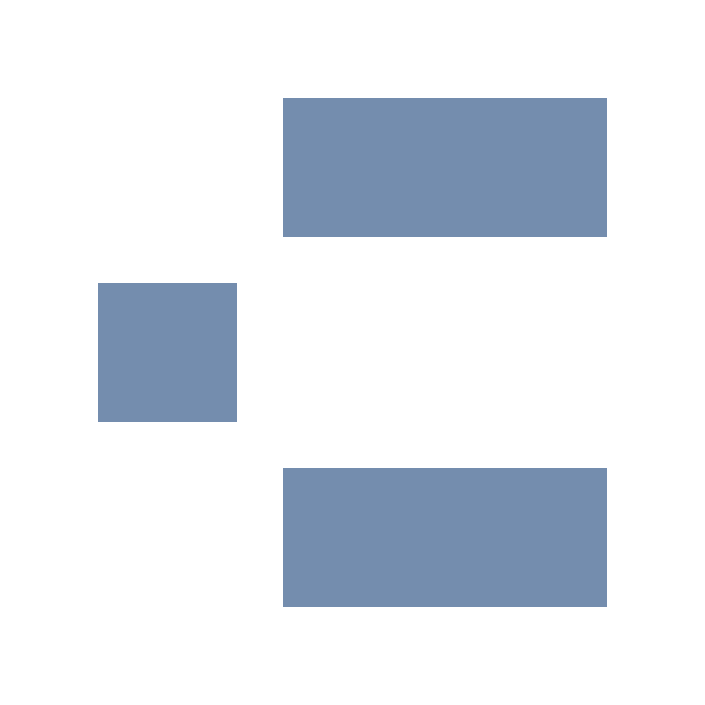} \includegraphics[width=.08\linewidth]{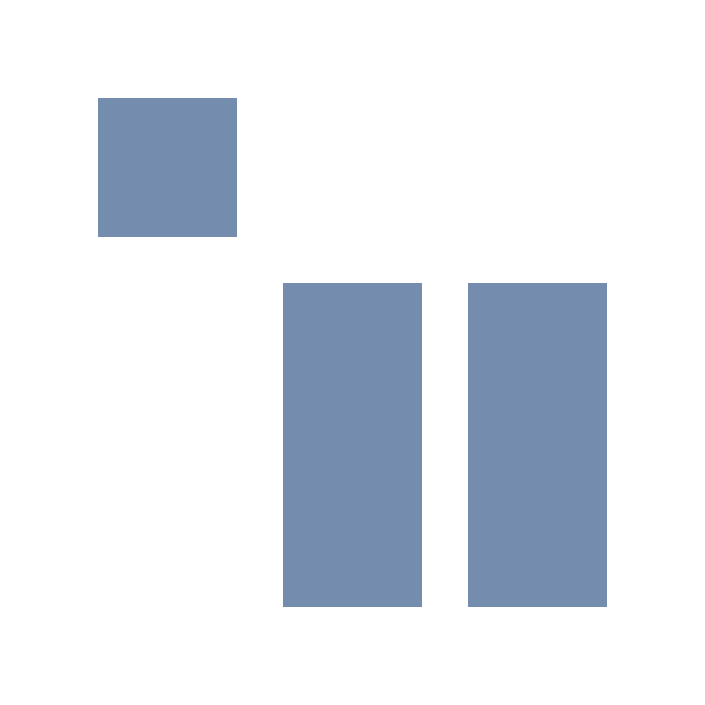} \includegraphics[width=.08\linewidth]{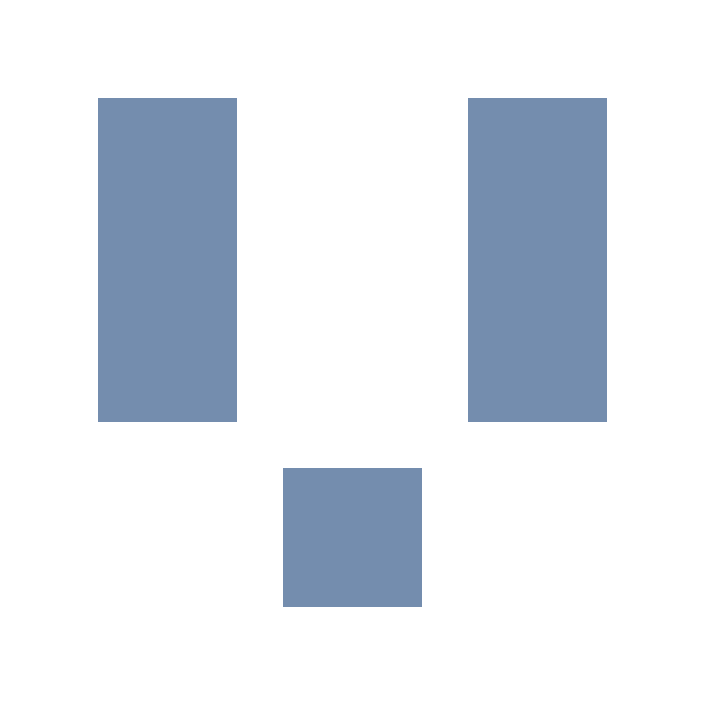} \includegraphics[width=.08\linewidth]{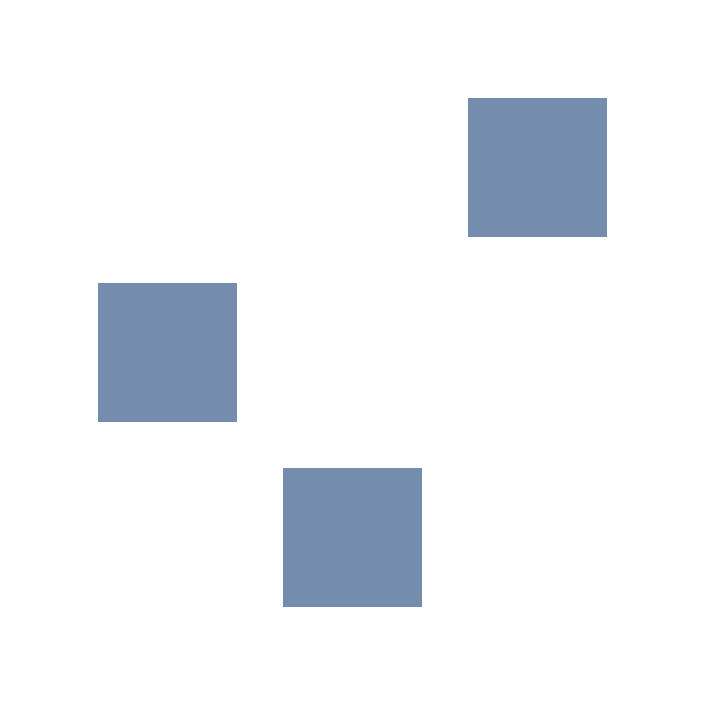} \includegraphics[width=.08\linewidth]{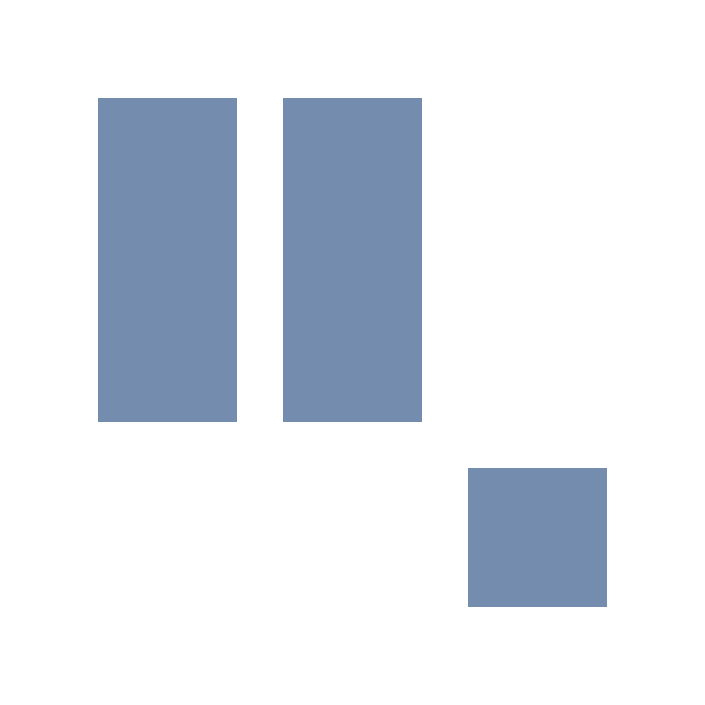} \includegraphics[width=.08\linewidth]{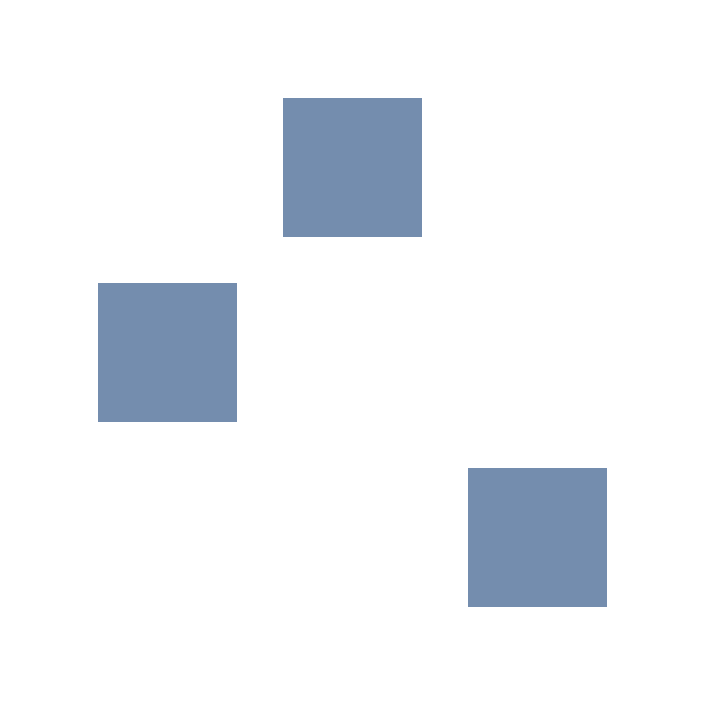} \includegraphics[width=.08\linewidth]{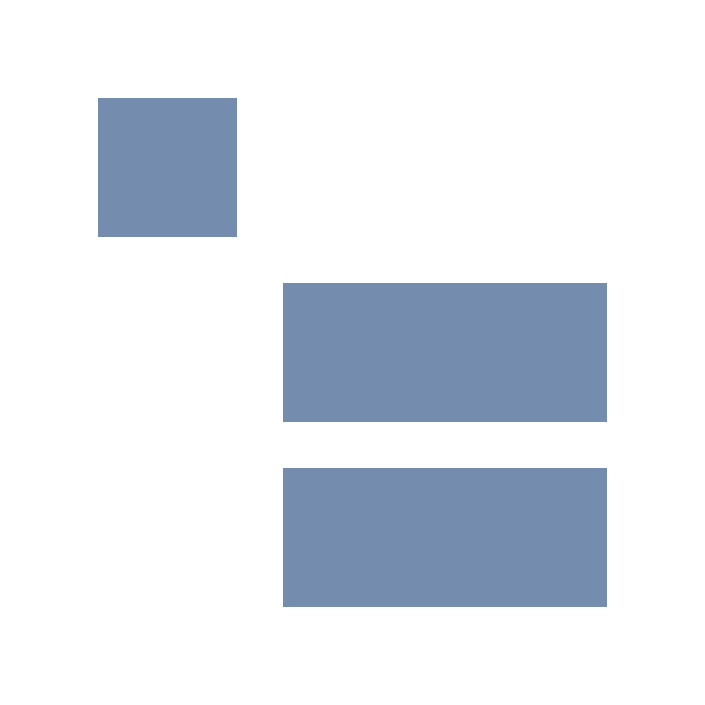} \includegraphics[width=.08\linewidth]{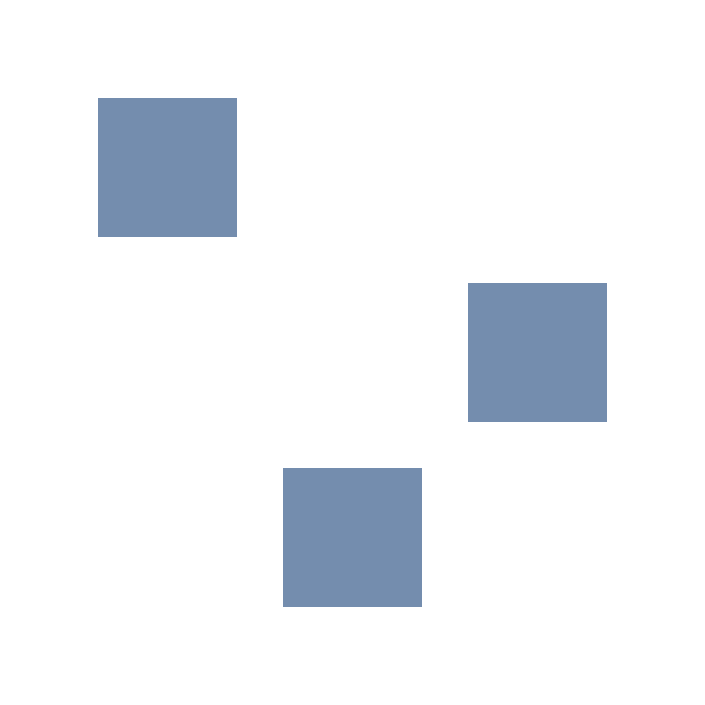} \includegraphics[width=.08\linewidth]{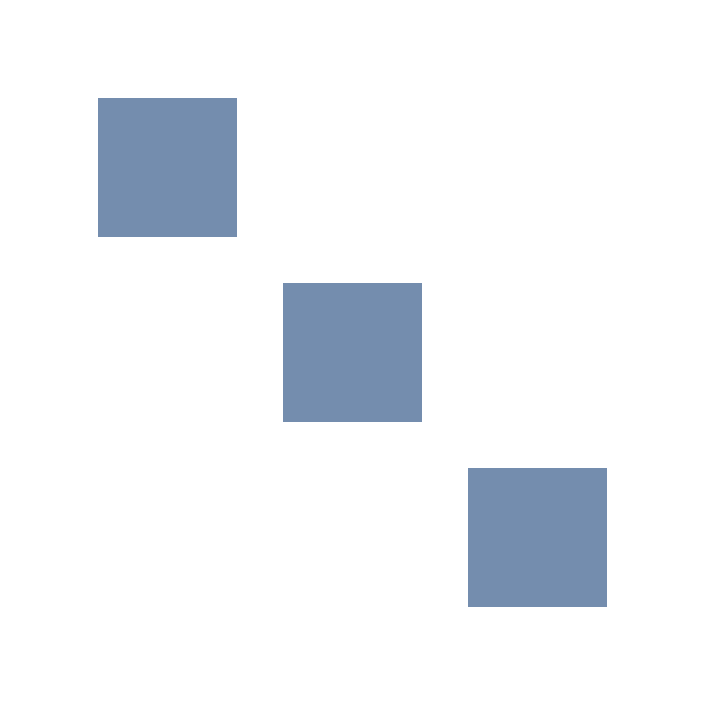} \includegraphics[width=.08\linewidth]{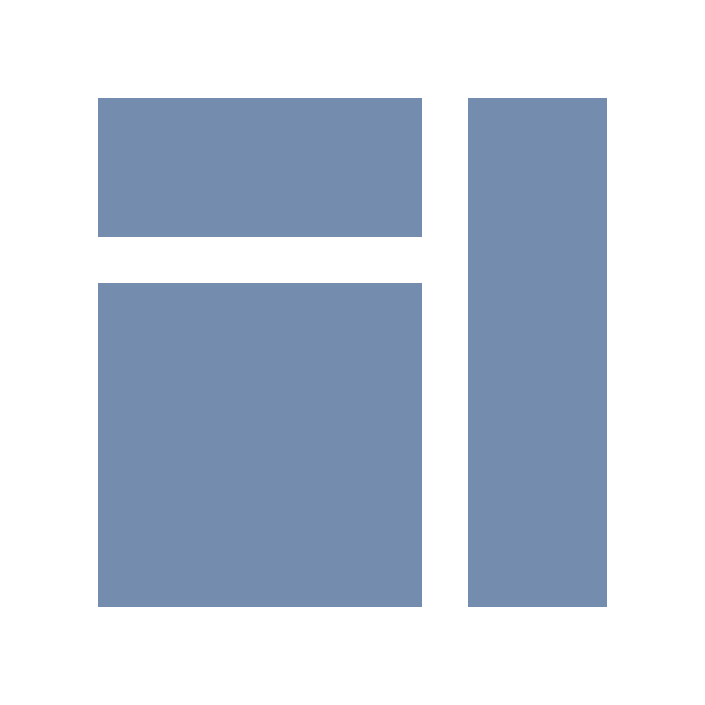} \includegraphics[width=.08\linewidth]{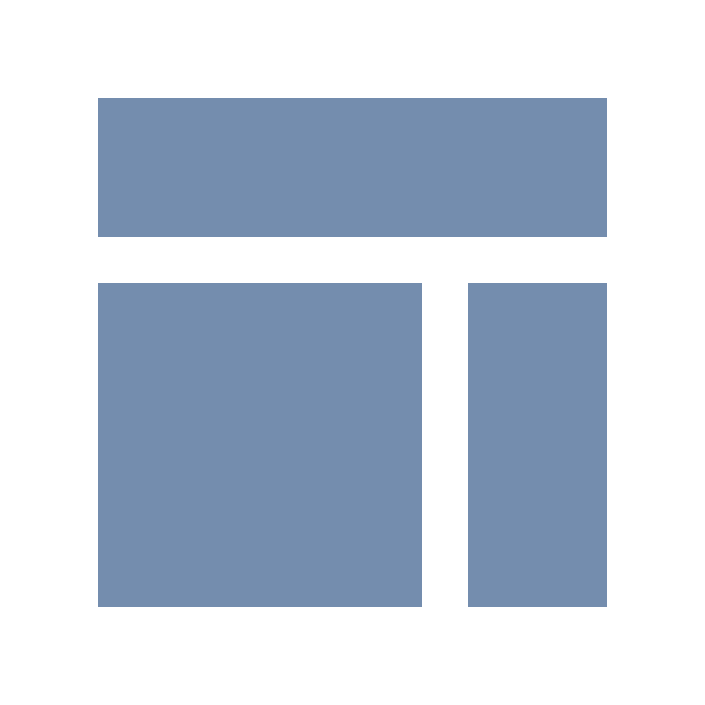} \includegraphics[width=.08\linewidth]{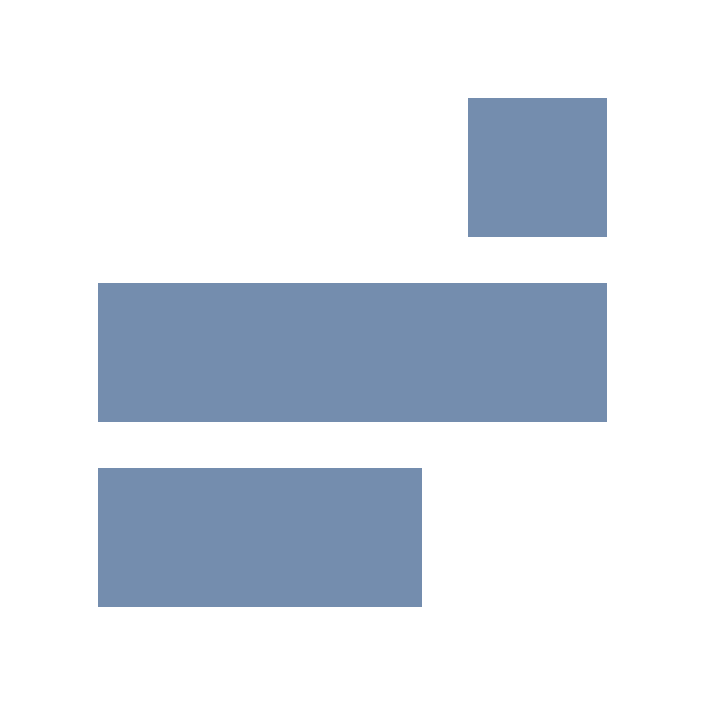} \includegraphics[width=.08\linewidth]{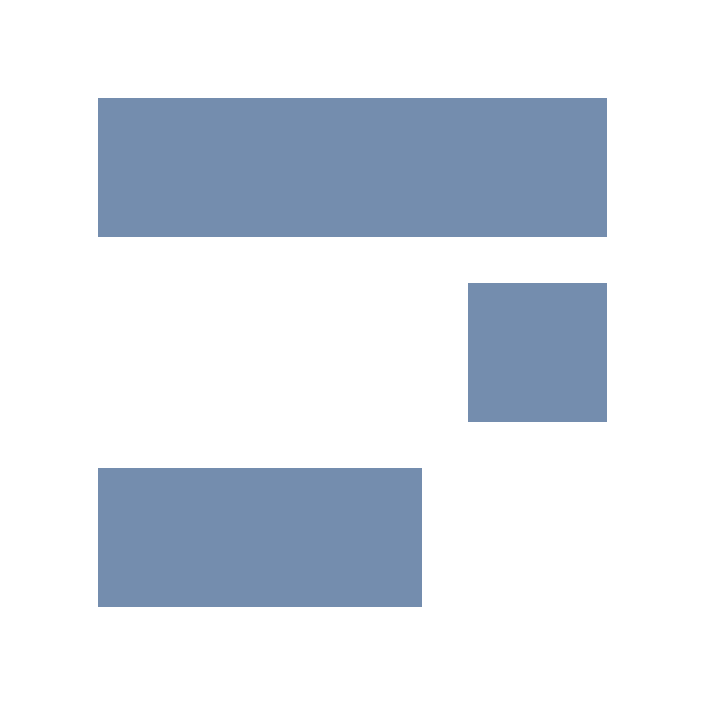} \includegraphics[width=.08\linewidth]{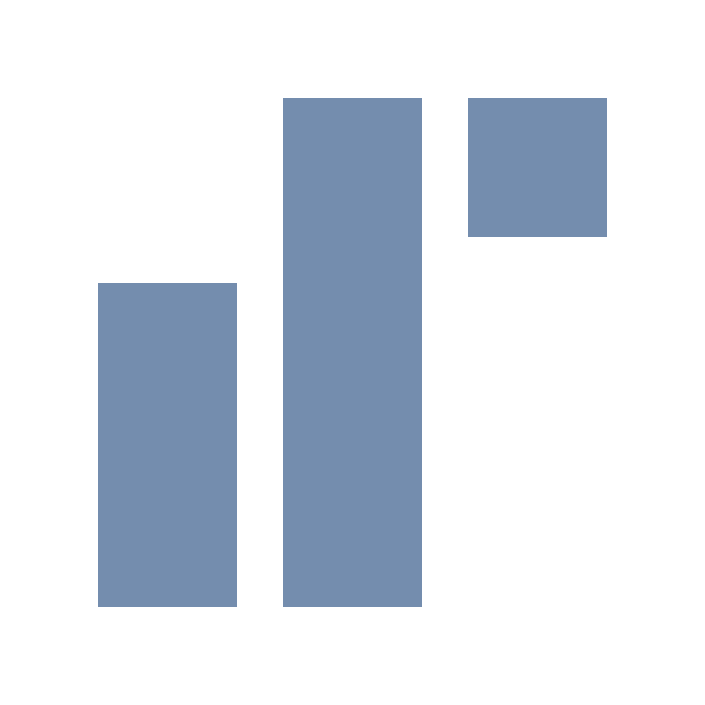} \includegraphics[width=.08\linewidth]{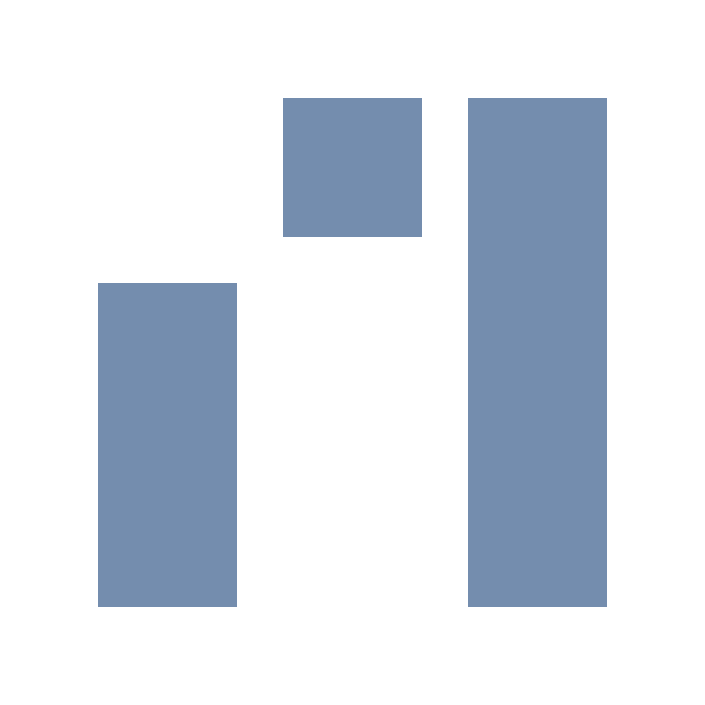} \includegraphics[width=.08\linewidth]{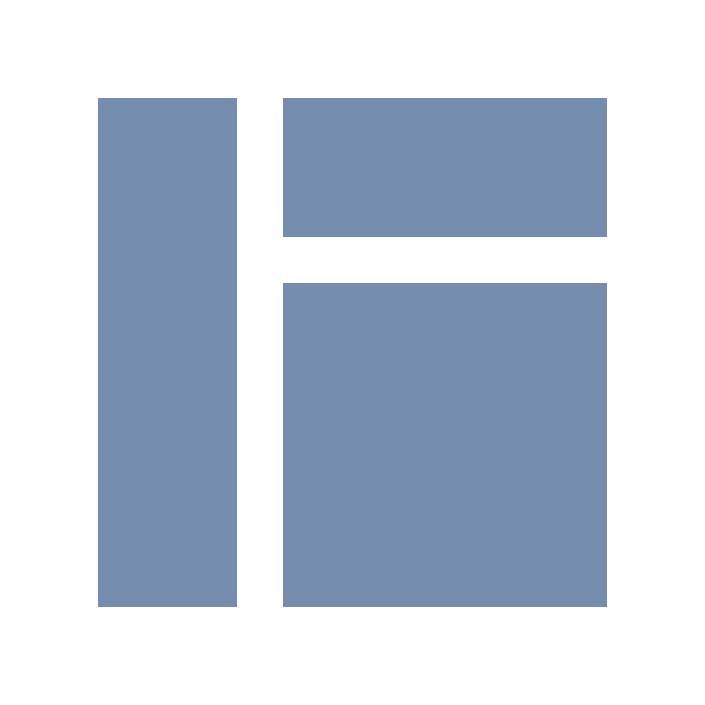} \includegraphics[width=.08\linewidth]{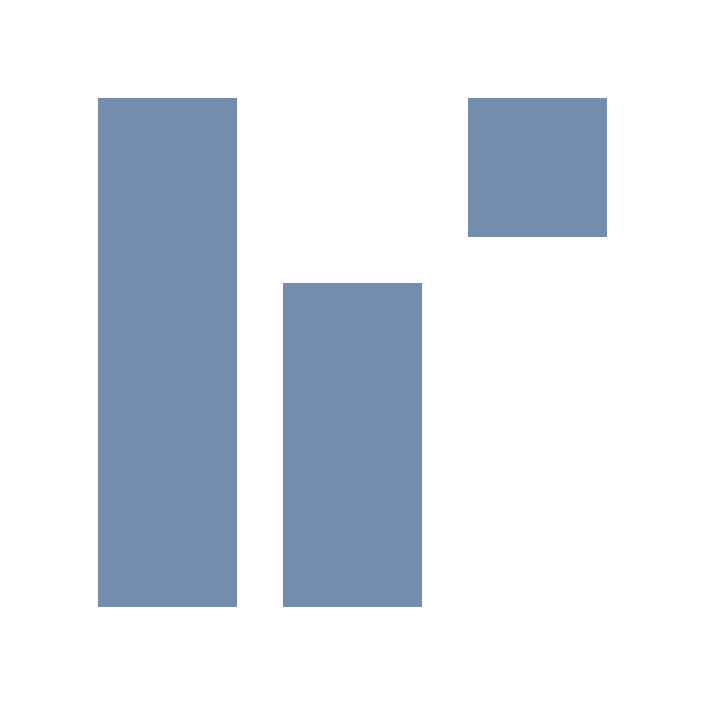} \includegraphics[width=.08\linewidth]{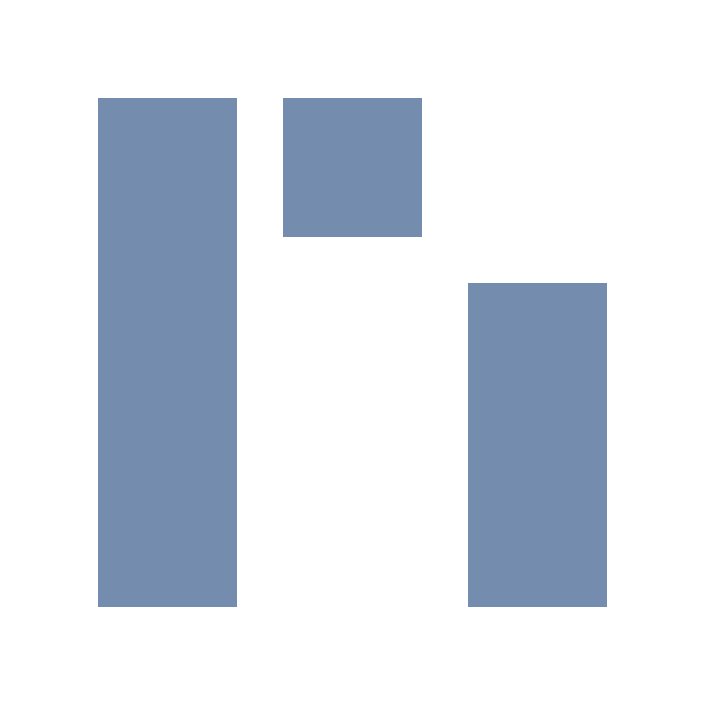} \includegraphics[width=.08\linewidth]{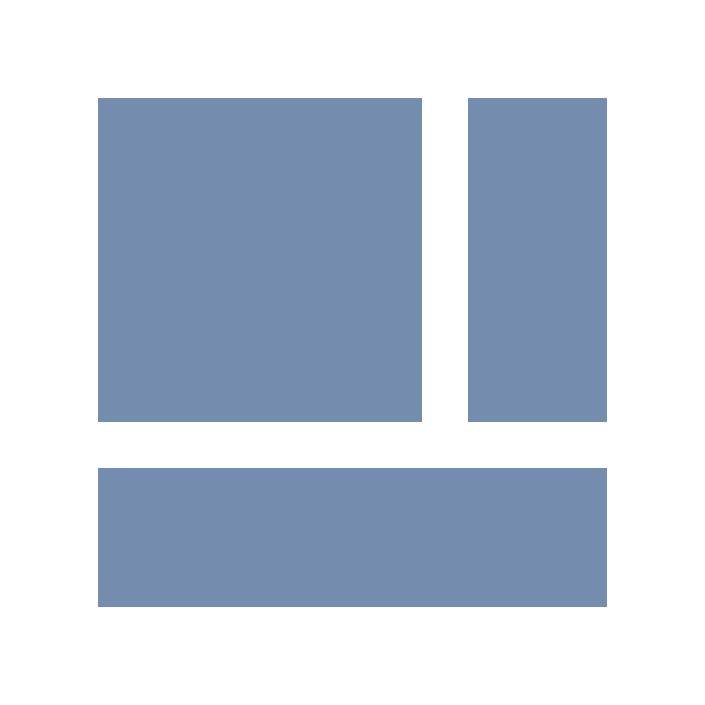} \includegraphics[width=.08\linewidth]{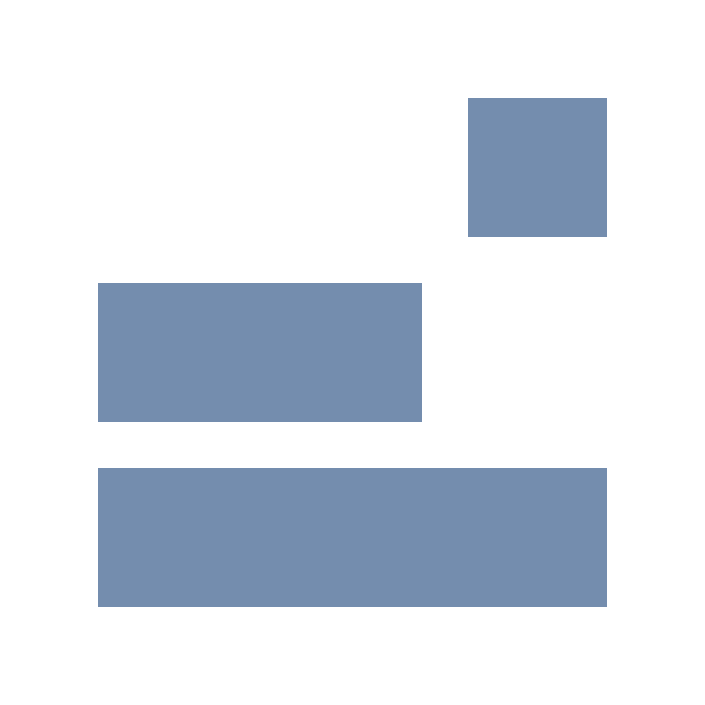} \includegraphics[width=.08\linewidth]{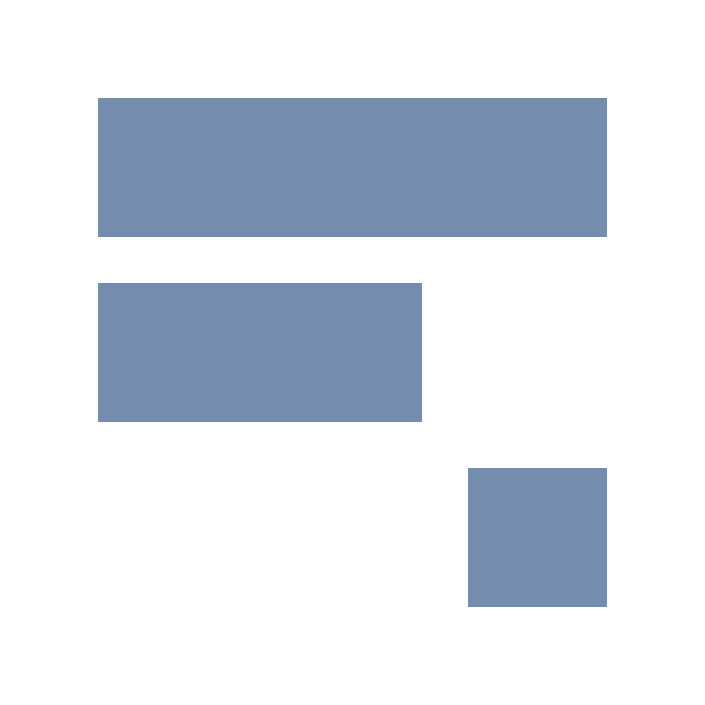} \includegraphics[width=.08\linewidth]{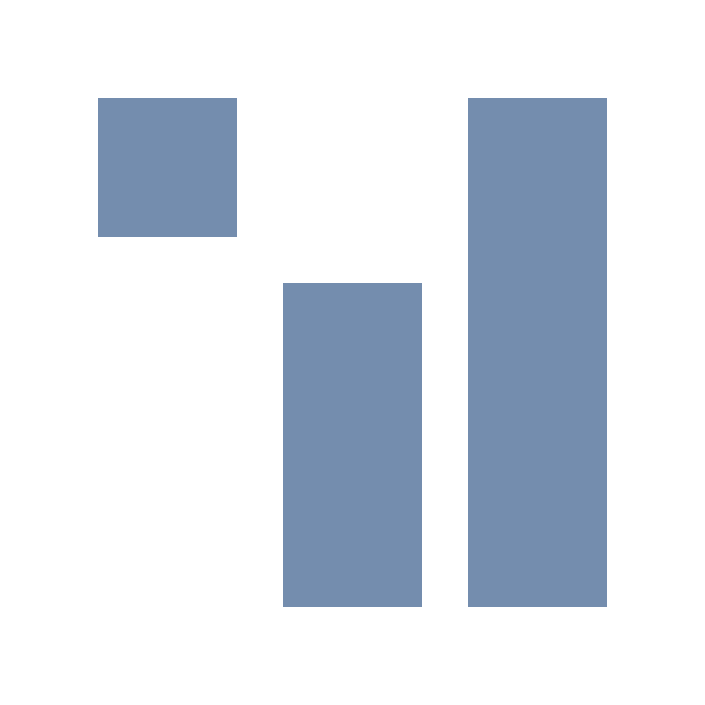} \includegraphics[width=.08\linewidth]{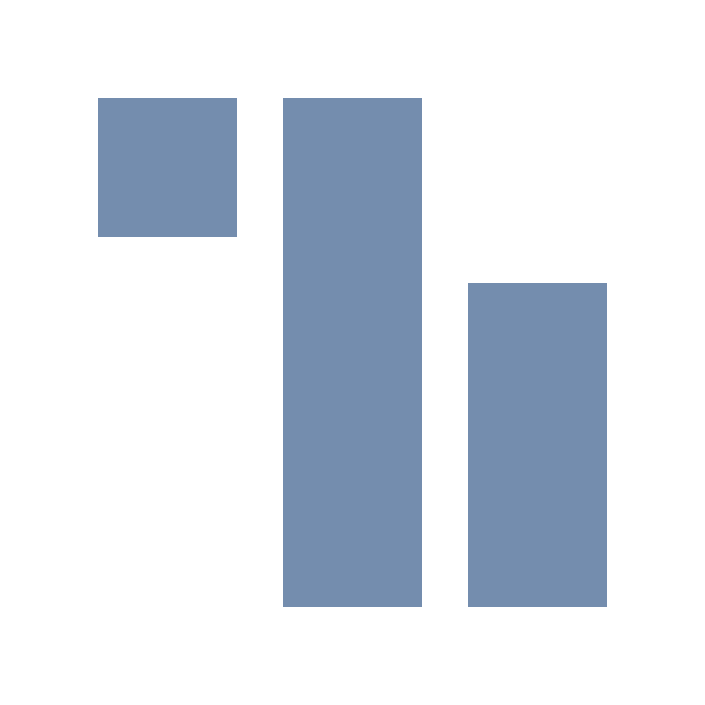} \includegraphics[width=.08\linewidth]{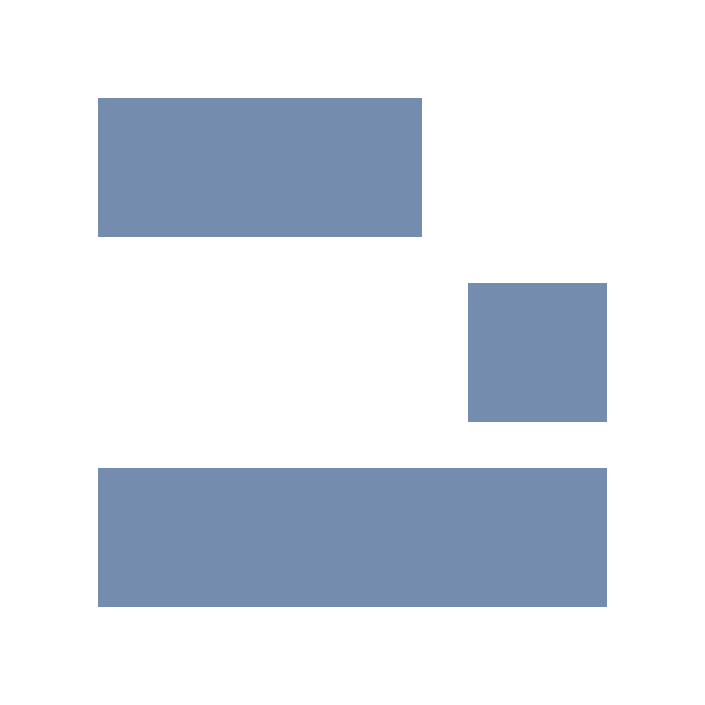} \includegraphics[width=.08\linewidth]{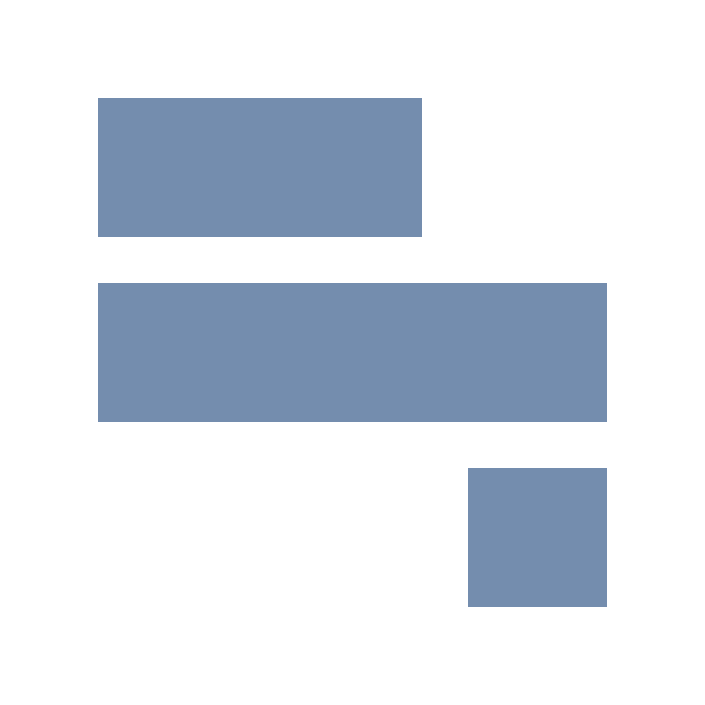}

  \caption{All 40 combinatorial arrangements of three axis-aligned disjoint rectangles in the plane.}
  \label{fig:n3}
\end{figure}

\begin{figure}
  %\vspace*{-3cm}
  \makebox[\textwidth][c]{
  \includegraphics[width=18cm]{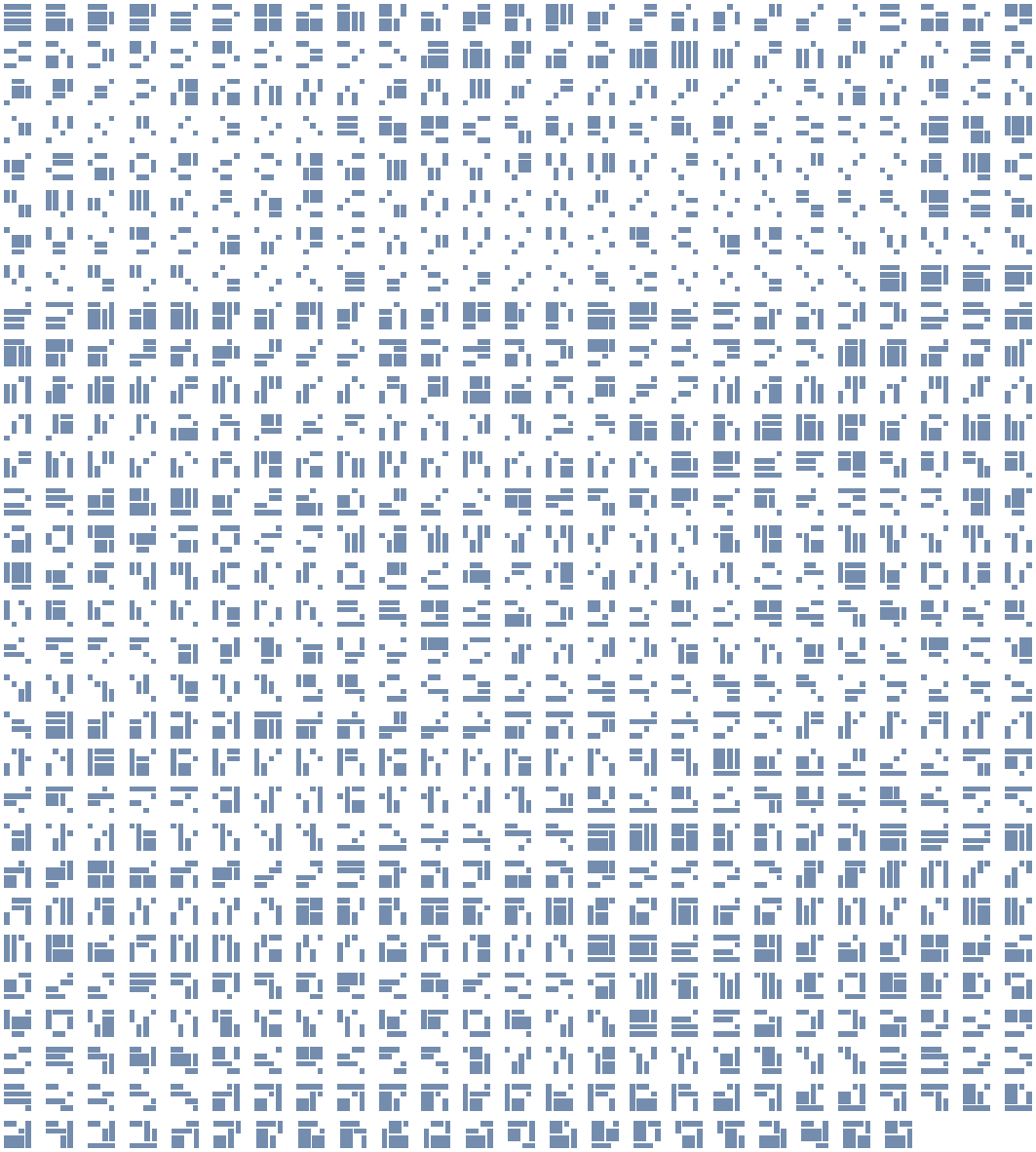}
  }

  \caption{All 772 combinatorial configurations of disjoint 4 rectangles in the plane.}
  \label{fig:n4}
\end{figure}

The combinatorics of points in an exact setting is simpler than its approximate counterpart.
\textcite{FoxNeuwirth_1962} described a CW-complex structure on $OC_n$, whose cells give a combinatorial description of configurations.
For example, with three points, the ordered configuration space~$OC_3$ admits
the partition formed by the following cells, up to permutation of~$z_1$, $z_2$ and~$z_3$:
\begin{align*}
   & \left\{ \re(z_1) < \re(z_2) < \re(z_3) \right\},                                             \\
   & \left\{ \re(z_1) < \re(z_2) = \re(z_3) \text{ and } \im(z_2) < \im(z_3) \right\},            \\
   & \left\{ \re(z_1) = \re(z_2) < \re(z_3) \text{ and } \im(z_1) < \im(z_2) \right\},            \\
   & \left\{ \re(z_1) = \re(z_2) = \re(z_3) \text{ and } \im(z_1) < \im(z_2) < \im(z_3) \right\}.
\end{align*}
Yet, determining a Fox--Neuwirth cell containing a triple~$(z_1,z_2,z_3)$ from a numerical approximation, even a certified one, may be impossible because equality cannot be decided. This motivates using arrangement cells instead,
which are open in~$OC_n$ but lead to much richer combinatorics.

The combinatorial diversity of arrangement cells can be pictured by those coming from arrangements of disjoint axis-aligned rectangles (or \emph{boxes}) in the plane, as they arise in interval arithmetic.
In this context, a configuration $(z_1,\dotsc,z_n) \in OC_n$ is described by an interval for each coordinate $\re(z_1)$, $\im(z_1)$, $\re(z_2)$, etc., so each~$z_i$ lies in a box, and we assume that the boxes are pairwise disjoint.
Two boxes can be separated by a vertical line, or a horizontal line.
So a tuple of disjoint boxes induces an arrangement.
Figures~\ref{fig:n2}, \ref{fig:n3}, and~\ref{fig:n4} display all the arrangements that can be obtained in this way, with~2,~3 and~4 points, respectively.

\section{Computing braids from arrangements}
\label{sec:braids}

A sequence of arrangement cells covering a path~$F:[0,1]\to OC_n$ entirely determines the homotopy type of~$F$ (Lemma~\ref{lem:homotopy-arrangement}),
and Algorithm~\ref{algo:cover} computes such a sequence.
This section focuses on the next step: computing the braid induced by~$F$.

\subsection{Permutation points and elementary braids}
\label{sec:perm-points-elem}

To compute a braid from a sequence of arrangements covering a path, we use a piecewise linear path
whose vertices are permutation points.
Given two permutations~$\pi$ and~$\phi$ of~$[n]$, the \emph{permutation point~$p(\pi,\phi)$}
is the point in~$OC_n$ defined by
\[ p(\pi,\phi) = \left( \pi(1) + \phi(1) \I, \dotsc,  \pi(n) + \phi(n) \I   \right). \]
Note that~$\sigma \cdot p( \pi, \phi) = p(\pi \sigma^{-1}, \phi \sigma^{-1})$ for any permutation~$\sigma \in S_n$.

\begin{lemma}
  \label{lem:arrangement-permutation-point}
  Any arrangement cell contains a permutation point.
\end{lemma}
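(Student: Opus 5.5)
The proof will reuse the construction from the nonemptiness part of Lemma~\ref{lem:convex-open-cover}, which already produces a permutation point. Let $(\prec_{\re},\prec_{\im})$ be an arrangement with cell $A$. Every strict partial order on the finite set $[n]$ has a linear extension, for instance by topological sorting of the acyclic relation. So I choose linear extensions $\overline{\prec}_{\re}$ of $\prec_{\re}$ and $\overline{\prec}_{\im}$ of $\prec_{\im}$.

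Next I encode these total orders as permutations. Let $\pi$ be the permutation sending each $k\in[n]$ to its rank in $\overline{\prec}_{\re}$, so that $\pi^{-1}(1)\mathrel{\overline{\prec}_{\re}}\dotsb\mathrel{\overline{\prec}_{\re}}\pi^{-1}(n)$. Then $i\mathrel{\overline{\prec}_{\re}} j$ holds if and only if $\pi(i)<\pi(j)$. I define $\phi$ in the same way from $\overline{\prec}_{\im}$.

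Now I check that the permutation point $p(\pi,\phi)=(\pi(k)+\phi(k)\I)_{k}$ lies in $A$. Suppose $i\prec_{\re} j$. Then $i\mathrel{\overline{\prec}_{\re}} j$, so $\re(z_i)=\pi(i)<\pi(j)=\re(z_j)$. The same argument gives the imaginary-part condition. Hence all defining inequalities of $A$ hold.

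Membership in $OC_n$ requires no extra work. The coordinates $\pi(k)+\phi(k)\I$ are pairwise distinct because $\pi$ is injective. This is also implied by $A\subseteq OC_n$, which Lemma~\ref{lem:convex-open-cover} established.

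I do not expect a real obstacle. The only point needing care is the direction of the permutation convention: the real part of $z_k$ must be the rank of $k$, namely $\pi(k)$, and not $\pi^{-1}(k)$. With the convention above, this matches the definition of $p(\pi,\phi)$. The existence of linear extensions is standard, and finiteness makes it immediate.
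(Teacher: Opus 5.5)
Your proposal is correct and is exactly the paper's argument: the paper's proof simply refers back to the nonemptiness step of Lemma~\ref{lem:convex-open-cover}. That step encodes linear extensions of $\prec_{\re}$ and $\prec_{\im}$ as rank permutations $\pi,\phi$ and checks that $(\pi(k)+\phi(k)\I)_k = p(\pi,\phi)$ lies in the cell, just as you do, including the convention $\re(z_k)=\pi(k)$ that you rightly make explicit.
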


\begin{proof}
  Same argument as the nonemptiness proof in Lemma~\ref{lem:convex-open-cover}.
\end{proof}

We also consider the point~$q(\pi) = \left( \pi(1),\dotsc,\pi(n) \right) \in OC_n$.
The collection of these points forms the fiber of the projection $OC_n \to C_n$ above the base point~$[n] \in C_n$.
A permutation point~$p(\pi,\phi)$ is connected to~$q(\pi)$ by a straight-line path in~$OC_n$.
%\[ t\mapsto \left( \pi(1) + t \phi(1) \I, \dotsc,  \pi(n) + t \phi(n) \I \right). \]
Given a path~$P$ between two permutation points~$p(\pi, \phi)$ and~$p(\pi', \phi')$,
we can extend~$P$ as a path~$P'$ from~$q(\pi)$ to~$q(\pi')$:
\[ q(\pi) \to p(\pi, \phi) \overset{P}{\to} p(\pi', \phi') \to q(\pi'). \]
This denotes the path that starts from~$q(\pi)$, goes to~$p(\pi, \phi)$ in a straight line,
follows~$P$ to reach~$p(\pi', \phi')$, and then goes to~$q(\pi')$ in a straight line.
The projection of~$P'$ on~$C_n$ induces a loop from and back to the base point~$[n]$, therefore its homotopy class is described by a combinatorial~$b \in B_n$.
Conversely, a braid~$b \in B_n$ is the homotopy class of a loop in~$C_n$.
We may lift this loop as a path~$P'$ in~$OC_n$ starting from~$q(\pi)$
and ending in another point of the fiber over~$[n]$, some~$q(\pi')$.
Given two permutations~$\phi$ and~$\phi'$, we may form the extended path
\[ p(\pi, \phi) \to q(\pi) \overset{P'}{\to} q(\pi') \to p(\pi', \phi'), \]
which is well defined up to homotopy.
We denote this path~$p(\pi, \phi) \overset{b}{\to} p(\pi', \phi')$.
These two constructions, from the homotopy class of~$P$ to~$b$, and from~$b$, $\pi$, $\phi$ and~$\phi'$ to the homotopy class of~$P$, are inverse to one another.
Moreover, they are compatible with composition:
\[ p(\pi, \phi) \overset{b}{\to}  p(\pi', \phi') \overset{b'}{\to} p(\pi'', \phi'') \quad \homeq \quad  p(\pi, \phi) \overset{b\cdot b'}{\to} p(\pi'', \phi''). \]

Given two permutations~$\pi$, $\phi$, the piecewise-linear path
$q(\pi) \to p(\pi, \phi) \to p(\pi, \operatorname{id}) \to q(\pi)$
lies in~$OC_n$ and induces the trivial braid: since the real parts remain constant, the path is contractible.
However, the linear path
$p(\pi, \phi) \to p(\operatorname{id}, \phi)$, which also lies in~$OC_n$, may induce a nontrivial braid.
Let~$B(\pi, \phi)$ denote its associated braid, called an \emph{elementary braid}. In other words,
\[ p(\pi, \phi) \overset{\text{straight line}}\to p(\identity, \phi)  \quad\equiv_\text{homotopy}\quad p(\pi, \phi) \overset{B(\pi, \phi)}\to p(\identity, \phi). \]

% Given permutations $\pi$, $\phi$, $\pi'$ and~$\phi'$,
% the linear path
% $p(\pi,\phi) \to p(\pi, \phi')$ is trivial, for the same reason as above,
% and the linear path
% $p(\pi, \phi) \to p(\pi', \phi)$ may induce a nontrivial braid.
% By invariance under the action of the symmetric group,
% this braid is~$B(\pi \pi'^{-1}, \phi \pi^{-1})$.

\begin{lemma}\label{lem:basic-paths}
  Let~$A$ be an arrangement and let~$\pi$, $\phi$, $\pi'$ and~$\phi'$ be permutations of~$[n]$ such that the permutation points~$p(\pi, \phi)$ and~$p(\pi', \phi')$ lie in~$A$.
  Then the straight-line path~$p(\pi, \phi) \to p(\pi', \phi')$ is contained in~$OC_n$ and its associated braid is~$B(\pi \pi'^{-1}, \phi \pi'^{-1})$.
\end{lemma}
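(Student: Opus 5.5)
The first claim is immediate: $A$ is convex by Lemma~\ref{lem:convex-open-cover}, so the segment $p(\pi,\phi)\to p(\pi',\phi')$ lies in $A\subset OC_n$. For the braid, I would first reduce to the case $\pi'=\identity$ by relabeling the strands, and then compare the segment with the segment that defines $B(\pi,\phi)$ through an explicit straight-line homotopy.

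\emph{Reduction.} The action of $\sigma\in S_n$ on $OC_n$ is a linear homeomorphism that commutes with the projection $OC_n\to C_n$. It sends straight segments to straight segments, the points $q(\rho)$ to $q(\rho\sigma^{-1})$, and permutation points to permutation points. It also maps the arrangement cell $A$ to the arrangement cell obtained by relabeling the two orders. Hence the braid of a straight path between permutation points is unchanged if we apply $\sigma$ to the whole picture, including the connecting segments to the fiber over $[n]$. Taking $\sigma=\pi'$ turns the path into the segment
\[ p(\pi\pi'^{-1},\phi\pi'^{-1})\to p(\identity,\phi'\pi'^{-1}) \]
inside the cell $\pi'\cdot A$. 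It therefore suffices to prove the following: if $p(\pi,\phi)$ and $p(\identity,\phi')$ lie in a common cell $A$, then the segment between them has braid $B(\pi,\phi)$.

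\emph{Comparison.} The segment $p(\identity,\phi')\to p(\identity,\phi)$ has constant, pairwise distinct real parts $1,\dots,n$. So it lies in $OC_n$ and induces the trivial braid, by the same argument the paper gives for $p(\pi,\phi)\to p(\pi,\identity)$. Thus it is enough to show that the path $p(\pi,\phi)\to p(\identity,\phi')\to p(\identity,\phi)$ is homotopic to the straight segment $p(\pi,\phi)\to p(\identity,\phi)$. For this I would use the homotopy
\[ H(s,t)=(1-t)\,p(\pi,\phi)+t\bigl((1-s)\,p(\identity,\phi')+s\,p(\identity,\phi)\bigr), \]
which fills the triangle with these three vertices. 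The key step is to check that $H$ stays in $OC_n$.

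\emph{Key step: $H$ stays in $OC_n$.} The real part of coordinate $k$ is $(1-t)\pi(k)+tk$, which does not depend on $s$. Suppose $i\neq j$ have equal real parts at some time $t$. Then $t<1$, because at $t=1$ the real parts are $i\neq j$.
\begin{itemize}
\item At $s=0$ the point lies on the segment inside $A$. Since the real parts of $i$ and $j$ coincide there, $i$ and $j$ cannot be comparable for $\prec_{\re}$. So they are comparable for $\prec_{\im}$, say $i\prec_{\im} j$, and the imaginary difference of coordinates $i$ and $j$ is negative at $s=0$.
\item Because $p(\pi,\phi)\in A$, we also get $\phi(i)<\phi(j)$. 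At $s=1$ the imaginary parts are constant in $t$, equal to $\phi(\cdot)$, so the imaginary difference is again negative.
\item For fixed $t$, the imaginary difference is affine in $s$. Being negative at both $s=0$ and $s=1$, it is negative for all $s\in[0,1]$.
\end{itemize}
Hence coordinates $i$ and $j$ never collide. The edges $s=0,1$ and $t=0$ of the square behave correctly (the edge $t=0$ is the constant point $p(\pi,\phi)$), so $H$ is the required homotopy. The braid of the original segment is then $B(\pi,\phi)$ times the trivial braid, which gives $B(\pi\pi'^{-1},\phi\pi'^{-1})$ after undoing the reduction.

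The main obstacle I expect is making the equivariance in the reduction step rigorous. One has to check that the extension $q(\pi)\to p(\pi,\phi)$, and its counterpart at the other end, is transported correctly by $\sigma$, so that the induced element of $B_n$ is genuinely unchanged rather than conjugated.
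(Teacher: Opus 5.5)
Your proof is correct and is essentially the paper's argument: the paper uses the same triangle (its intermediate vertex $p(\pi',\phi)$ is your $p(\identity,\phi)$ after relabeling by $\pi'$), the same $S_n$-equivariance, and the same triviality of the leg with constant real parts; it only relabels after the homotopy instead of before. The one shortcut you missed is that this third vertex already lies in the cell, because membership of $p(\rho,\psi)$ constrains $\rho$ only through $\prec_{\re}$ and $\psi$ only through $\prec_{\im}$, so convexity gives the triangle at once and your coordinatewise check, though correct, is unnecessary; likewise the obstacle you flag is harmless, since $\sigma$ maps each extension segment $q(\rho)\to p(\rho,\psi)$ to the extension segment $q(\rho\sigma^{-1})\to p(\rho\sigma^{-1},\psi\sigma^{-1})$ and the projection $OC_n\to C_n$ is $S_n$-invariant, so the loop in $C_n$ is literally the same and no conjugation can occur.
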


\begin{proof}
  By definition of an arrangement cell, if both~$p(\pi, \phi)$ and~$p(\pi', \phi')$ are in~$A$, then so is~$p(\pi', \phi)$.
  So the path $p(\pi, \phi) \to p(\pi', \phi')$ is homotopic to
  \[ p(\pi, \phi) \to p(\pi', \phi) \to p(\pi', \phi'), \]
  by convexity of~$A$.
  Under the action of~$\pi'^{-1}$, which does not change the associated braid, this path becomes
  \[ p(\pi \pi'^{-1}, \phi \pi'^{-1}) \to p(\identity, \phi \pi'^{-1}) \to p(\identity, \phi' \pi'^{-1}). \]
  The first part of the path induces the braid~$B(\pi \pi'^{-1}, \phi \pi'^{-1})$, by definition, while the second part induces the trivial braid, since the real parts remain fixed.
\end{proof}

The following statement gives identities to compute~$B(\pi, \phi)$ in terms of a decomposition of~$\pi$ as a product of elementary transpositions~$(i \ i+1)$.

\begin{lemma} \label{lem:b-braid}
  For any permutations~$\pi$, $\pi'$ and $\phi$:
  \begin{enumerate}[(i)]
    \item $B(\identity, \phi) = 1$;
    \item $B(\pi'\pi, \phi) = B(\pi',\phi \pi^{-1})B(\pi,\phi)$;
    \item $B( (i\ i+1), \phi ) = \sigma_i$ if $\phi(i) > \phi(i+1)$, for any~$1\leq i < n$;
    \item $B( (i\ i+1), \phi ) = \sigma_i^{-1}$ if $\phi(i) < \phi(i+1)$, for any~$1\leq i < n$.
  \end{enumerate}
\end{lemma}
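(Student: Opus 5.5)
The plan is to prove each identity directly from the definition of $B(\pi,\phi)$ as the braid of the straight-line path $p(\pi,\phi)\to p(\identity,\phi)$. The tool throughout is the following convexity remark. For a fixed permutation $\phi$, the set
\[ V_\phi=\left\{ z\in\C^n \st \im(z_k)=\phi(k) \text{ for all } k \right\} \]
is convex, and it is contained in $OC_n$ because the imaginary parts $\phi(k)$ are pairwise distinct. Consequently, any two paths inside $V_\phi$ with the same endpoints are homotopic in $OC_n$ (straight-line homotopy).

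For (i), the path $p(\identity,\phi)\to p(\identity,\phi)$ is constant. Its extension $q(\identity)\to p(\identity,\phi)\to q(\identity)$ goes out and back along the same segment, so it is null-homotopic and $B(\identity,\phi)=1$.

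For (ii), the three points $p(\pi'\pi,\phi)$, $p(\pi,\phi)$ and $p(\identity,\phi)$ all lie in $V_\phi$. By the remark, the straight segment $p(\pi'\pi,\phi)\to p(\identity,\phi)$ is homotopic to the broken path $p(\pi'\pi,\phi)\to p(\pi,\phi)\to p(\identity,\phi)$. The second segment contributes $B(\pi,\phi)$ by definition. For the first segment, I act by $\pi\in S_n$ using $\sigma\cdot p(\alpha,\beta)=p(\alpha\sigma^{-1},\beta\sigma^{-1})$. This turns it into $p(\pi',\phi\pi^{-1})\to p(\identity,\phi\pi^{-1})$, whose braid is $B(\pi',\phi\pi^{-1})$. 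The action does not change the projection to $C_n$, and it maps the fibre points $q(\cdot)$ to fibre points, so the braid is preserved; the paper already uses this fact in Lemma~\ref{lem:basic-paths}. Compatibility with composition then gives $B(\pi'\pi,\phi)=B(\pi',\phi\pi^{-1})\,B(\pi,\phi)$, with the first-traversed segment on the left, matching the concatenation convention.

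For (iii) and (iv), set $\tau=(i\ i+1)$ and write out the extended loop
\[ q(\tau)\to p(\tau,\phi)\to p(\identity,\phi)\to q(\identity) \]
explicitly. Every point $k\neq i,i+1$ moves only vertically along $\re=k$, up to height $\phi(k)$ and back down. Strands $i$ and $i+1$ start at real parts $i+1$ and $i$, rise to heights $\phi(i)$ and $\phi(i+1)$, exchange horizontally inside the strip $i\le\re\le i+1$ at constant heights, and descend. I would homotope away the excursions of the other points: shrink their vertical motion to zero by scaling $\phi(k)$ towards $0$ only for $k\neq i,i+1$. This is safe because those points stay on the lines $\re=k$, which avoid the strip containing strands $i$ and $i+1$. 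What remains is a loop in which only the points at positions $i$ and $i+1$ move, each following a rectangle-shaped path around the other.

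In this loop the right point (strand $i$, starting at real part $i+1$) moves left at height $\phi(i)$, while the left point moves right at height $\phi(i+1)$. If $\phi(i)>\phi(i+1)$, the right point passes above the left one. The rectangular motion can then be deformed, inside the strip and without collision, into the two half-circle arcs of Figure~\ref{fig:braid-generator}, so the braid is $\sigma_i$. If $\phi(i)<\phi(i+1)$, the right point passes below and we obtain the mirror picture, $\sigma_i^{-1}$.

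The main obstacle is this last step. It requires checking the orientation convention carefully: which strand is "the right point" in the starting configuration $q(\tau)$, and how the concatenation convention interacts with the generator $\sigma_i$ of Figure~\ref{fig:braid-generator}. The explicit isotopy to the standard semicircle motion is where a sign error could slip in.
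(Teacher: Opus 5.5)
Your proof is correct. For (i) and (ii) it is the paper's argument. Where the paper uses the convex cell of the arrangement $(\varnothing,\prec_\phi)$, you use the smaller affine slice $V_\phi$; this works equally well, since all three permutation points have imaginary parts exactly $\phi(k)$.

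For (iii) and (iv) your route differs. The paper deforms nothing by hand. It observes that the extended loop $q(\tau)\to p(\tau,\phi)\to p(\identity,\phi)\to q(\identity)$ and the semicircle motion of Figure~\ref{fig:braid-generator} are both covered by the three cells $(\prec_\tau,\varnothing)$, $A$, $(\prec_{\identity},\varnothing)$. Here $A$ has real order $\prec_\tau\cap\prec_{\identity}$ and the single relation $i+1\prec_{\im} i$. The paper then concludes by Lemma~\ref{lem:homotopy-arrangement}. This handles the other strands and the rectangle-versus-semicircle question at once. It also reduces the sign check you worried about to one inequality: the crossing segment lies in $A$ exactly when $\phi(i)>\phi(i+1)$.

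Your two-step isotopy is more pictorial, but its final deformation is only asserted. The same observation closes it. At every interior time, both motions satisfy $\im(z_i)>\im(z_{i+1})$: during the linear rise $\im(z_i)=s\phi(i)>s\phi(i+1)=\im(z_{i+1})$ for $s>0$, and on the semicircles $\im(z_i)>0>\im(z_{i+1})$. Strands $i$ and $i+1$ keep real parts in $[i,i+1]$, and every other strand $k$ keeps real part $k$. These conditions cut out a convex subset of $OC_n$, and the two motions coincide at times $0$ and $1$. So the straight-line homotopy between them is collision-free, which also makes your preliminary shrinking of the other excursions unnecessary.

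Your orientation bookkeeping is right: in $q(\tau)$ strand $i$ sits at $i+1$, so it is the right point, and it crosses at height $\phi(i)$.
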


\begin{proof}
  The first point is trivial since the linear path~$p(\identity, \phi)\to p(\identity,\phi)$ is constant.

  For the second point,
  consider the piecewise straight-line path~$P$
  \[  p(\pi'\pi, \phi) \to p(\pi,\phi) \to p(\identity, \phi), \]
  and the arrangement~$A = (\varnothing, \prec_\phi)$ where~$\varnothing$ is the empty partial order and
  where~$\prec_\phi$ is the total order defined by~$i \prec_\phi j$ when~$\phi(i) < \phi(j)$.
  The path~$P$ entirely lies in the cell attached to this arrangement.
  The convexity of this cell (Lemma~\ref{lem:convex-open-cover}) implies that $P$ is homotopic to the straight-line path $p(\pi'\pi, \phi) \to p(\identity, \phi)$,
  which induces the braid~$B(\pi'\pi,\phi)$, by definition.

  On the other hand, the part $p(\pi,\phi) \to p(\identity, \phi)$
  induces the braid~$B(\pi,\phi)$, by definition,
  and the part $p(\pi'\pi, \phi) \to p(\pi,\phi)$
  induces the same braid as the path
  \[  p(\pi', \phi \pi^{-1}) \to p(\identity,\phi \pi^{-1}), \]
  by invariance under the action of the permutation group.
  This latter path induces the braid~$B(\pi', \phi \pi^{-1})$, by definition.
  This gives the identity $B(\pi'\pi, \phi) = B(\pi',\phi \pi^{-1})B(\pi,\phi)$.

  For the third point, let~$\tau$ denote the transposition~$\left( i\ i+1 \right)$, and consider the path~$P$
  \[ q( \tau ) \to p(\tau, \phi) \to p(\identity, \phi) \to q(\identity). \]
  It is covered by three arrangement cells: $(\prec_\tau, \varnothing)$, $A$, and~$(\prec_{\identity}, \varnothing)$, where~$A$
  is the arrangement~$(\prec_{\re}, \prec_{\im})$ with~$\prec_{\re} = (\prec_\tau \cap \prec_{\identity})$ and~$\prec_{\im}$ contains the only relation~$i + 1 \prec_{\im} i$.
  We check easily that these three arrangements also cover the path defining~$\sigma_i$ (Figure~\ref{fig:braid-generator}).
  The last point is similar.
\end{proof}

\subsection{Braids from sequences of arrangements}

\begin{figure}[tb]
  \def\motion#1#2{
    \tikzmath{integer \iter; \iter=#2/18;}
    \draw[sync] (#1 |- top) node[above] (time#2) {$t = \tfrac{\iter}{5}$} -- (#1 |- bot);
    \begin{scope}[shift=(#1 |- fig), label distance=-2, radius = 2]
      \path[arr] (2,0)  arc[start angle=0, end angle=#2]  node[label={#2:\small 1}] (A) {};
      \path[arr] (0,2)  arc[start angle=90, end angle=90 + #2]  node[inner sep=3pt,  label={90+#2:\small 2}] (B) {};
      \path[arr] (-2,0)  arc[start angle=180, end angle=180+#2]  node[label={180+#2:\small 3}] (C) {};
      \path[arr] (0,-2)  arc[start angle=270, end angle=270+#2]  node[label={270+#2:\small 4}] (D) {};

      \node[dot] at (A) {};
      \node[dot, inner sep=3pt] at (B) {};
      \node[dot] at (C) {};
      \node[dot] at (D) {};
      % \ifnum #2 < 90
      %   \draw[] (A) arc[radius = 2, start angle=#2, end angle=90];
      %   \draw[] (B) arc[radius = 2, start angle=90+#2, end angle=180];
      %   \draw[] (C) arc[radius = 2, start angle=180+#2, end angle=270];
      %   \draw[] (D) arc[radius = 2, start angle=270+#2, end angle=360];
      % \fi
    \end{scope}
  }

  \def\eps{0.4}
  \def\rect#1#2#3#4#5{
    \begin{scope}[xshift=-.8cm, yshift=1ex]
      \fill[rect] (#1 - \eps, #2 - \eps)
      -- (#1 - \eps, #4 + \eps) -- (#3 + \eps, #4 + \eps)
      -- (#3 + \eps, #2 - \eps) -- (#1 - \eps, #2 - \eps);
      \node at ($0.5*(#1+#3, #2+#4)$) {}; % {\color{white}\scriptsize\bfseries #5};
    \end{scope}
  }

  \def\rlabel#1#2#3{
    \begin{scope}[xshift=-.8cm, yshift=1ex, every node/.style={}]
      \node[rlabel] at (#1, #2) {\color{white}\scriptsize\bfseries #3};
    \end{scope}
  }

  \def\perm#1#2#3#4#5#6#7#8{
    \begin{scope}[xshift=-.8cm, yshift=1ex, every node/.style={circle, inner sep=0.05cm, draw=red, thick}]
      \node[label=1] at (#1, #2) {};
      \node[label=2] at (#3, #4) {};
      \node[label=3] at (#5, #6) {};
      \node[label=4] at (#7, #8) {};
    \end{scope}
  }

  \centering
  \makebox[\textwidth][c]{
    \begin{tikzpicture}[x=.3cm, y=.3cm, scale=0.8,
        dot/.style={fill=primary, opacity=1.0, circle, inner sep=1.5pt},
        arr/.style={draw, thin, densely dotted},
        bar/.style={line width=4, draw=nuance},
        sync/.style={dotted},
        rect/.style={primary},
        highlight/.style={circle, inner sep=1pt, fill=highlight, opacity=0.8, text opacity=1},
        rlabel/.style={},
      ]

      \node (label) at (-6, 0) {};

      \def\onedeg{0.6}

      \foreach \s in {0,9,...,90} {
          \node (A\s) at (\s * \onedeg, 0) {};
        }

      \node (A85) at (85.5 * \onedeg, 0) {};

      \node (top) at (0, 3) {};
      \node (bot) at (0, 0) {};
      \node (fig) at (0, 12) {};

      \node (arrangements) at (0, -6) {};
      \node (belowarr) at (0, -6) {};
      \node (bbelowarr) at (0, -8) {};
      \node (b4) at (0, -12) {};
      \node (b3) at (0, -13.5) {};
      \node (b2) at (0, -15) {};
      \node (b1) at (0, -16.5) {};

      \motion{A0}{0}
      \motion{A18}{18}
      \motion{A36}{36}
      \motion{A54}{54}
      \motion{A72}{72}
      \motion{A90}{90}

      \draw[thick] (A0 |- b4) node[left] {1} -- (A81 |- b4) .. controls (A85 |- b4) and (A85 |- b3) .. (A90 |- b3) node[right] {1};

      \draw[thick] (A0 |- b3) node[left] {2} -- (A9|- b3) .. controls (A18 |- b3) and (A18 |- b2) .. (A27 |- b2) --  (A45 |- b2) .. controls (A54 |- b2) and (A54 |- b1) .. (A63 |- b1) -- (A90 |- b1) node[right] {2};
      
      \draw[line width=3pt, white] (A0 |- b2) -- (A9 |- b2) .. controls (A18 |- b2) and (A18 |- b3) .. (A36 |- b3) -- (A81 |- b3) .. controls (A85 |- b3) and (A85 |- b4) .. (A90 |- b4);
      \draw[thick] (A0 |- b2) node[left] {4} -- (A9 |- b2) .. controls (A18 |- b2) and (A18 |- b3) .. (A36 |- b3) -- (A81 |- b3) .. controls (A85 |- b3) and (A85 |- b4) .. (A90 |- b4) node[right] {4};
      
      \draw[line width=3pt, white] (A0 |- b1) -- (A45|- b1) .. controls (A54 |- b1) and (A54 |- b2) .. (A63 |- b2) -- (A90 |- b2);
      \draw[thick] (A0 |- b1) node[left] {3} -- (A45|- b1) .. controls (A54 |- b1) and (A54 |- b2) .. (A63 |- b2) -- (A90 |- b2) node[right] {3};

      \begin{scope}[shift=(A0 |- arrangements)]
        \rect{1}{1}{4}{1}{4}
        \rect{1}{4}{4}{4}{2}
        \rect{3}{2}{4}{3}{1}
        \rect{1}{2}{2}{3}{3}

        \rlabel{4}{3}{1}
        \rlabel{3}{4}{2}
        \rlabel{1}{2}{3}
        \rlabel{2}{1}{4}
      \end{scope}

      \begin{scope}[shift=(A9 |- arrangements), rect/.style={secondary}]
        \rect{1}{1}{4}{1}{4}
        \rect{1}{4}{3}{4}{2}
        \rect{4}{2}{4}{3}{1}
        \rect{1}{2}{2}{3}{3}

        \begin{scope}[rlabel/.style={highlight}]
        \rlabel{4}{3}{1}
        \rlabel{3}{4}{2}
        \rlabel{1}{2}{3}
        \rlabel{2}{1}{4}

        \end{scope}
      \end{scope}

      \begin{scope}[shift=(A18 |- arrangements)]
        \rect{1}{1}{4}{1}{4}
        \rect{1}{3}{3}{4}{2}
        \rect{4}{2}{4}{4}{1}
        \rect{1}{2}{3}{2}{3}
        \rlabel{4}{3}{1}
        \rlabel{3}{4}{2}
        \rlabel{1}{2}{3}
        \rlabel{2}{1}{4}

      \end{scope}

      \begin{scope}[shift=(A27 |- arrangements), rect/.style={secondary}]
        \rect{3}{1}{4}{1}{4}
        \rect{1}{3}{3}{4}{2}
        \rect{4}{2}{4}{4}{1}
        \rect{1}{2}{2}{2}{3}

        \begin{scope}[rlabel/.style={highlight}]
          \rlabel{4}{3}{1}
          \rlabel{2}{4}{2}
          \rlabel{1}{2}{3}
          \rlabel{3}{1}{4}
        \end{scope}
      \end{scope}

      \path  (A9 |- belowarr) edge[->, bend right] node[below] {$\sigma_2$} (A27 |- belowarr);

      \begin{scope}[shift=(A36 |- arrangements)]
        \rect{3}{1}{4}{1}{4}
        \rect{1}{3}{3}{4}{2}
        \rect{4}{2}{4}{4}{1}
        \rect{1}{1}{2}{2}{3}

        \rlabel{4}{3}{1}
        \rlabel{2}{4}{2}
        \rlabel{1}{2}{3}
        \rlabel{3}{1}{4}
      \end{scope}

      \begin{scope}[shift=(A45 |- arrangements), rect/.style={secondary}]
        \rect{3}{1}{4}{1}{4}
        \rect{1}{3}{2}{4}{2}
        \rect{3}{2}{4}{4}{1}
        \rect{1}{1}{2}{2}{3}

        \begin{scope}[rlabel/.style={highlight}]
          \rlabel{4}{3}{1}
          \rlabel{2}{4}{2}
          \rlabel{1}{2}{3}
          \rlabel{3}{1}{4}
        \end{scope}
      \end{scope}

      \path  (A27 |- belowarr) edge[->, bend right] node[below] {$1$} (A45 |- belowarr);

      \begin{scope}[shift=(A54 |- arrangements)]
        \rect{3}{1}{4}{2}{4}
        \rect{1}{3}{2}{4}{2}
        \rect{3}{3}{4}{4}{1}
        \rect{1}{1}{2}{2}{3}

        \rlabel{4}{3}{1}
        \rlabel{2}{4}{2}
        \rlabel{1}{2}{3}
        \rlabel{3}{1}{4}

        \begin{scope}[xshift=-.8cm, yshift=1ex, every node/.style={}]
          \draw[{Classical TikZ Rightarrow[length=2]}-{Classical TikZ Rightarrow[length=2]}, very thick, white] (3,2)--(3,3);
        \end{scope}
      \end{scope}
      \path (A54 |- arrangements) -- ++(2,5) node {*};

      \begin{scope}[shift=(A63 |- arrangements), rect/.style={secondary}]
        \rect{3}{1}{4}{3}{4}
        \rect{1}{3}{2}{4}{2}
        \rect{3}{4}{4}{4}{1}
        \rect{1}{1}{2}{2}{3}

        \begin{scope}[rlabel/.style={highlight}]
          \rlabel{4}{4}{1}
          \rlabel{1}{3}{2}
          \rlabel{2}{2}{3}
          \rlabel{3}{1}{4}
        \end{scope}
      \end{scope}

      \path  (A45 |- belowarr) edge[->, bend right] node[below] {$\sigma_1$} (A63 |- belowarr);
      
      \begin{scope}[shift=(A72 |- arrangements)]
        \rect{3}{1}{4}{3}{4}
        \rect{1}{3}{1}{4}{2}
        \rect{2}{4}{4}{4}{1}
        \rect{1}{1}{2}{2}{3}

        \rlabel{4}{4}{1}
        \rlabel{1}{3}{2}
        \rlabel{2}{2}{3}
        \rlabel{3}{1}{4}
      \end{scope}

      \begin{scope}[shift=(A81 |- arrangements), rect/.style={secondary}]
        \begin{scope}[yshift=-.7cm]
          \rect{1}{1}{2}{1}{3}
          \rect{2}{4}{4}{4}{1}
          \rect{3}{2}{4}{3}{4}
          \rect{1}{2}{1}{3}{2}

          \begin{scope}[rlabel/.style={highlight}]
            \rlabel{4}{4}{1}
            \rlabel{1}{3}{2}
            \rlabel{2}{1}{3}
            \rlabel{3}{2}{4}
          \end{scope}
        \end{scope}
      \end{scope}

      \path  (A63 |- belowarr) edge[->, bend right] node[below] {$1$} (A81 |- bbelowarr);

      \begin{scope}[shift=(A90 |- arrangements),rect/.style={fill=nuance}]
        \begin{scope}[yshift=-.7cm]
          \rect{1}{1}{4}{1}{3}
          \rect{1}{4}{4}{4}{1}
          \rect{3}{2}{4}{3}{4}
          \rect{1}{2}{2}{3}{2}

          \begin{scope}[rlabel/.style={highlight}]
            \rlabel{4}{3}{4}
            \rlabel{3}{4}{1}
            \rlabel{1}{2}{2}
            \rlabel{2}{1}{3}
          \end{scope}
        \end{scope}
      \end{scope}

      \path  (A81 |- bbelowarr) edge[->, bend right] node[below] {$\sigma_3$} (A90 |- bbelowarr);
    \end{tikzpicture}}
  \caption{Illustration of Algorithm~\ref{algo:braid-of-cover} for four points moving along a circle, continuing Figure~\ref{fig:example-sep}.
    The arrangement diagrams in blue are the input covering sequence.
    The arrangement diagrams in green represent the intersection of the two neighboring blue diagrams. The pink circled digits highlight a permutation point lying in the arrangement cell.
    The last blue diagram represents an extension of the cover so that the last arrangement is a permutation of the first one (see Section~\ref{subsec:braid-loops}).
  }
  \label{fig:example-braid-of-cover}
\end{figure}

The procedure \emph{braid} (Algorithm~\ref{algo:braid-of-cover}) takes a sequence of arrangements~$A_1,\dotsc,A_r$ as input, with~$A_{i-1} \cap A_{i}$ nonempty.
It computes a sequence of permutation points~$p_1,\dotsc,p_r$ such that~$p_1\in A_1$ and~$p_i \in A_{i-1} \cap A_i$, for~$i\geq 2$.
Simultaneously, it computes the braid~$b$ induced by the piecewise linear path~$p_1 \to \dotsb \to p_r$ as a composition of elementary braids~$B(\pi, \phi)$ (Section~\ref{sec:perm-points-elem}).
Finally, it outputs the triple~$(p_1, b, p_r)$.

Figure~\ref{fig:example-braid-of-cover} shows an example, continuing that of Figure~\ref{fig:example-sep}.
In this case, the algorithm computes (among several possible choices) the following permutation points:
\begin{align*}
  p_1 &= p \left( 4\,3\,1\,2, 3\,4\,2\,1 \right) & p_2 &= p \left( 4\,2\,1\,3, 3\,4\,2\,1 \right)\\
  p_3 &= p_2  & p_4 &= p \left( 4\,1\,2\,3, 4\,3\,2\,1 \right),
\end{align*}
where, for example, $4\,3\,1\,2$ denotes the permutation~$\left\{ 1\mapsto 4, 2\mapsto 3, 3\mapsto 1, 4\mapsto 2 \right\}$;
and outputs the triple~$(p_1, \sigma_2\sigma_1, p_4)$.

% Consider now two permutation points~$p(\pi,\phi)$ and~$p(\pi',\phi')$, both lying in an arrangement cell~$A$.
% The straight-line path $q(\pi) \to p(\pi,\phi) \to p(\pi',\phi') \to q(\pi')$ is entirely contained in~$OC_n$ (because~$A$ is convex),
% and induces a loop in~$C_n$, from and to the base point~$[n]$.
% Therefore, it induces an element of the combinatorial braid group~$B_n$.
% This braid can be computed in terms of a small braid, as follows.

% \begin{lemma}
%   Let~$A$, $\pi$, $\phi$, $\pi'$ and~$\phi'$ as above.
%   The path
%   \[ q(\pi) \to p(\pi,\phi) \to p(\pi',\phi') \to q(\pi') \]
%   induces the braid~$B(\pi \pi'^{-1}, \phi\pi'^{-1})$.
% \end{lemma}

\begin{algorithm}[tbp]
  \begin{description}
    \item[input] A sequence of arrangements.
    \item[output] A triple~$(p, b, p')$ where~$p$ and~$p'$ are permutation points and~$b$ is a braid.
  \end{description}
  \raggedright

  \caption{Computation of a braid associated to an arrangement sequence}
  \begin{pseudo}
    def \fn{braid}(A_1,\dotsc,A_r):\\+
    $(\pi_\mathrm{start},\phi_\mathrm{start}) \gets$ \tn{a permutation point in $A_1$}\\
    $(\pi,\phi) \gets(\pi_\mathrm{start},\phi_\mathrm{start})$\\
    $b \gets 1$ \ct{the identity braid}\\
    for $i$ from $2$ to $r$:\\+
    if $A_{i-1} \cap A_i = \varnothing$:\\+
    error\\-
    $(\pi',\phi') \gets$ \tn{a permutation point in $A_{i-1} \cap A_i$}\label{line:braid-of-cover:perm-point-computation}\\
    $b \gets b \cdot B(\pi \pi'^{-1}, \phi \pi'^{-1})$\\
    $(\pi, \phi) \gets (\pi', \phi')$\\-
    return $p(\pi_\mathrm{start},\phi_\mathrm{start}), b, p(\pi, \phi)$\\-
  \end{pseudo}
  \label{algo:braid-of-cover}
\end{algorithm}

\begin{theorem}[Correctness of \fn{braid}]\label{thm:algo-braid}
  Let~$(A_1,\dotsc,A_r)$ be a sequence of arrangements covering a path~$F : [0,1]\to OC_n$.
  Let $(p, b, p')$ be the output triple of Algorithm~\ref{algo:braid-of-cover}
  on input~$A_1,\dotsc,A_r$.
  Then:
  \begin{itemize}
    \item $F(0)$ and~$p$ are in~$A_1$;
    \item $F(1)$ and~$p'$ are in~$A_r$;
    \item the path~$F(0) \to p \overset{b}{\to} p' \to F(1)$
          lies in~$OC_n$ and is homotopic to~$F$.
  \end{itemize}
\end{theorem}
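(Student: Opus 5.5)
The plan is to build an explicit piecewise-linear path through the permutation points chosen by the algorithm, show that it is covered by the same arrangement sequence as $F$ (after adding straight connecting segments at the ends), and conclude with Lemma~\ref{lem:homotopy-arrangement}.

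First, the membership claims. Since $(A_1,\dotsc,A_r)$ covers $F$, there is a partition $0=t_0\le\dotsb\le t_r=1$ with $F([t_{i-1},t_i])\subseteq A_i$. Hence $F(0)\in A_1$ and $F(1)\in A_r$. Moreover $F(t_i)\in A_i\cap A_{i+1}$ for $1\le i<r$, so every intersection is nonempty and the algorithm never raises an error. By Lemma~\ref{lem:convex-open-cover} each such intersection is an arrangement cell, so by Lemma~\ref{lem:arrangement-permutation-point} it contains a permutation point. Let $p_1=p$ and let $p_i$ be the point chosen at step $i$. Then $p_1\in A_1$ and $p_i\in A_{i-1}\cap A_i$ for $i\ge 2$. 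In particular $p'=p_r\in A_r$; if $r=1$, then $p'=p\in A_1=A_r$.

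Second, identify the braid. By induction on $i$, the straight-line path $p_1\to p_2\to\dotsb\to p_i$ lies in $OC_n$ and is homotopic to $p_1\overset{b_i}{\to}p_i$, where $b_i$ is the value of $b$ after step $i$. For the inductive step, $p_{i-1}$ and $p_i$ both lie in the single cell $A_{i-1}$. Lemma~\ref{lem:basic-paths} then says that the segment $p_{i-1}\to p_i$ lies in $OC_n$ and induces $B(\pi\pi'^{-1},\phi\pi'^{-1})$. Compatibility of the $\overset{b}{\to}$ notation with composition gives $b_i=b_{i-1}\cdot B(\pi\pi'^{-1},\phi\pi'^{-1})$, which is exactly the update in the algorithm. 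So the path $F(0)\to p\overset{b}{\to}p'\to F(1)$ is homotopic to the polygonal path
\[ G:\ F(0)\to p_1\to p_2\to\dotsb\to p_r\to F(1). \]
Its first segment lies in $A_1$ and its last in $A_r$ by convexity, so $G\subseteq OC_n$.

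Third, show that $(A_1,\dotsc,A_r)$ covers $G$, and conclude. Assign the segments $F(0)\to p_1$ and $p_1\to p_2$ to $A_1$, the segment $p_{i}\to p_{i+1}$ to $A_{i}$ for $2\le i\le r-1$, and the segment $p_r\to F(1)$ to $A_r$. Each segment has both endpoints in the assigned cell, so convexity puts the whole segment inside it. Consecutive pieces meet at the $p_i$, and the cells are used in order, so this is a valid covering partition. The degenerate cases $r=1,2$ are handled by taking some pieces to be constant, which the definition allows since the partition inequalities are not strict. Since $F$ and $G$ have the same endpoints and share a covering sequence, Lemma~\ref{lem:homotopy-arrangement} gives $F\homeq G$.

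The main obstacle is bookkeeping rather than substance. The paper's $\overset{b}{\to}$ notation is only defined between permutation points, so one must check that prepending and appending the straight segments from $F(0)$ and to $F(1)$ is compatible with the homotopy identification. Since those segments are common to both paths, concatenating them with a homotopy preserves it, so this check should go through. One should also make sure that Lemma~\ref{lem:homotopy-arrangement} tolerates degenerate, non-strict partitions; its proof reparametrizes to make them strict, so it does.
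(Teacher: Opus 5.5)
Your proposal is correct and follows the same route as the paper. Nonempty consecutive intersections supply the permutation points, and repeated use of Lemma~\ref{lem:basic-paths} identifies $b$ with the braid of the polygonal path $p_1 \to \dotsb \to p_r$; Lemma~\ref{lem:homotopy-arrangement}, applied to $F(0) \to p_1 \to \dotsb \to p_r \to F(1)$ (covered by $A_1,\dotsc,A_r$), then gives the homotopy with $F$, and your explicit segment-to-cell assignment just spells out the covering claim the paper states as clear.
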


\begin{proof}
  By definition of a cover (Definition~\ref{def:cover}), the intersections~$A_{i-1} \cap A_i$ are nonempty and therefore each contains a permutation point, by Lemma~\ref{lem:convex-open-cover} and Lemma~\ref{lem:arrangement-permutation-point}.
  So the algorithm does not fail.
  The first two points follow by construction.

  Let~$p_1 = p(\pi_\text{start}, \phi_\text{start})$ and let~$p_i$ be the permutation point chosen in~$A_{i-1}\cap A_i$.
  It is clear that the piecewise linear path~$p_1 \to \dotsb \to p_r$ is also covered by~$A_1,\dotsc,A_r$.
  Its associated braid is~$b$, by repeated application of Lemma~\ref{lem:basic-paths}.
  Since~$F(0) \in A_1$ and~$F(1) \in A_r$, the piecewise linear path
  \[ F(0) \to p_1 \to \dotsb \to p_r \to F(1) \]
  is also covered by the same sequence.
  By Lemma~\ref{lem:homotopy-arrangement}, this path is homotopic to~$F$.
\end{proof}

\subsection{Composition of paths}
\label{sec:composition-paths}

Given two paths~$F, G : [0,1] \to OC_n$ with~$F(1) = G(0)$, the goal is to compute the braid attached to the concatenation~$F \cdot G$ using the algorithms above.
Applying \fn{cover} (Algorithm~\ref{algo:cover}) to~$F$ and~$G$ yields covering sequences of arrangements $(A_1, \dotsc, A_r)$ and~$(B_1, \dotsc, B_s)$.
Their concatenation
\[ A_1, \dotsc, A_r, B_1, \dotsc, B_s \]
covers~$F \cdot G$, so Algorithm~\ref{algo:braid-of-cover} directly produces the desired braid.
However, storing only the outputs of $\fn{braid}(A_1,\dotsc,A_r)$ and~$\fn{braid}(B_1,\dotsc,B_s)$ is preferable; let us denote them
\[ \left( (\pi_F,\phi_F), b_F, (\pi'_F, \phi'_F) \right) \text{ and } \left( (\pi_G,\phi_G), b_G, (\pi'_G, \phi'_G) \right). \]

When an arrangement cell~$A$ contains both $p(\pi'_F, \phi'_F)$ and $p(\pi_G, \phi_G)$ together with the common endpoint~$F(1) = G(0)$, the path composition
\begin{equation}
  \label{eq:1}
  p(\pi_F, \phi_F) \overset{\smash{b_F}}{\to} p(\pi'_F, \phi'_F) \to F(1) = G(0) \to p(\pi_G, \phi_G) \overset{\smash{b_G}}{\to} p(\pi'_G, \phi'_G)
\end{equation}
is homotopic, by convexity of~$A$, to
\[ p(\pi_F, \phi_F) \overset{\smash{b_F}}{\to} p(\pi'_F, \phi'_F) \to p(\pi_G, \phi_G) \overset{\smash{b_G}}{\to} p(\pi'_G, \phi'_G), \]
whose braid equals
\[ b_F \cdot B(\pi'_F \pi_G^{-1}, \phi'_F \pi_G^{-1}) \cdot b_G. \]

The equality~$F(1) = G(0)$ alone, however, does not ensure that $p(\pi'_F, \phi'_F)$ and $p(\pi_G, \phi_G)$ lie in a common arrangement cell; the straight-line segment joining them may even exit~$OC_n$.
We suggest two approaches to circumvent the issue:
\begin{enumerate}
  \item When future compositions are known in advance, append~$B_1$ to~$(A_1,\dotsc,A_r)$.
        The augmented sequence still covers~$F$,
        and this will guarantee that~$F(1)$, $p(\pi'_F, \phi'_F)$ and $p(\pi_G, \phi_G)$ lie in the common cell~$B_1$.
  \item Retain the endpoint cells~$A_r$ and~$B_1$ and compute their intersection, which is nonempty because it contains~$F(1) = G(0)$.
        Choose a permutation point~$p(\pi, \phi)$ in~$A_r \cap B_1$; then the path~\eqref{eq:1} is homotopic to
        \[ p(\pi_F, \phi_F) \overset{\smash{b_F}}{\to} p(\pi'_F, \phi'_F) \to p(\pi, \phi) \to p(\pi_G, \phi_G) \overset{\smash{b_G}}{\to} p(\pi'_G, \phi'_G) \]
        with associated braid
        \[ b_F \cdot B(\pi'_F \pi^{-1}, \phi'_F \pi^{-1}) \cdot B(\pi \pi_G^{-1}, \phi \pi_G^{-1}) \cdot b_G. \]
\end{enumerate}

Next, we relax the condition~$F(1) = G(0)$
and only require equality in~$C_n$, namely, there is a permutation~$\sigma$ such that~$\sigma \cdot F(1) = G(0)$.
We consider the paths~$\widetilde F$ and~$\widetilde G$ in~$C_n$ induced by~$F$ and~$G$ respectively, so that~$\widetilde F(1) = \widetilde G(0)$.
How do we compute the braid induced by the composition~$\widetilde F \cdot \widetilde G$
from the braid data computed for~$F$ and~$G$?
If the permutation~$\sigma$ is known, then the situation reduces to the previous case~$F(1) = G(0)$.
Unfortunately, the separation method \fn{sep} is not enough to compute~$\sigma$.
For example, on two points, assume that~$F(1) = \sigma \cdot G(0)$ for some~$\sigma$ (which is either identity or the transposition~$(1\ 2)$),
and that
\[ \fn{sep}(F, 1, 2, 1) = (2, 1, \re, [\dotsc]) \text{ and }\fn{sep}(G, 1, 2, 0) = (1, 2, \im, [\dotsc]), \]
meaning that~$\re(F_1(1)) > \re(F_2(1))$ and~$\im(G_1(0)) < \im(G_2(0))$.
Is it enough information to recover~$\sigma$? The answer is no:
we could have~$\sigma = \identity$ with~$F(1) = G(0) = (1, \I)$, or~$\sigma = (1\ 2)$ with~$F(1) = \left( 1+\I, 0 \right)$ and~$G(0) = \left( 0, 1+\I \right)$.
Therefore, we need to compute~$\sigma$ by other means, such as path tracking.

\subsection{Braids along loops}
\label{subsec:braid-loops}
Consider a path~$F : [0,1] \to OC_n$
inducing a loop in~$C_n$. In other words, there exists a permutation~$\sigma$ with~$F(0) = \sigma \cdot F(1)$.
We seek a permutation point~$p(\pi, \phi)$ and a braid~$b$ such that
\[ F(0) \to p(\pi, \phi) \overset{b}{\to} \sigma^{-1} \cdot p(\pi, \phi) \to F(1) \]
is homotopic to~$F$.

This problem is similar to the composition problem above.
In particular, we must determine~$\sigma$ by other means, since the \fn{sep} method alone does not provide sufficient information to recover it.

Once~$\sigma$ is known,
we compute a cover~$A_1,\dotsc,A_r$ of~$F$,
which we extend to
\[ A_1,\dotsc,A_r, \sigma^{-1} A_1 \]
Because~$F(0) = \sigma \cdot F(1)$ and~$F(0) \in A_1$, the extended sequence still covers~$F$.
The method $\fn{braid}(A_1,\dotsc,A_r, \sigma^{-1} A_1)$ outputs a triple $(\pi,\phi)$, $b$, and~$(\pi',\phi')$ such that
\[ F(0) \to p(\pi,\phi) \overset{b}{\to} p(\pi',\phi') \to F(1) \]
is homotopic to~$F$.
Moreover, $F(0) = \sigma\cdot F(1) \in A_1$, $p(\pi, \phi) \in A_1$ and~$p(\pi',\phi') \in \sigma^{-1} A_1$.
Hence the path above is homotopic to
\[ F(0) \to p(\pi,\phi) \overset{b}{\to} p(\pi',\phi') \to  \sigma^{-1} \cdot p(\pi, \phi)\to F(1), \]
by convexity of~$\sigma^{-1} A_1$, and it simplifies to
\[ F(0) \to p(\pi, \phi)  \xrightarrow{\tilde b} \sigma^{-1} \cdot p(\pi, \phi) \to F(1), \]
with $\tilde b = b \cdot B(\pi' \sigma^{-1} \pi^{-1}, \phi' \sigma^{-1} \pi^{-1})$, by Lemma~\ref{lem:basic-paths}.

Assume we have already computed \fn{braid}(A_1,\dotsc,A_r) and retained~$A_1$ and~$A_r$.
Analogously to the path composition problem, we can compute a permutation point in the intersection~$A_r \cap \sigma^{-1} A_1$ and use it to close the loop (see Figure~\ref{fig:example-braid-of-cover}).

\section{Implementation}
\label{sec:implementation}
The algorithm presented in the previous sections to compute the braid induced by a path~$F: [0, 1] \to OC_n$ works in two steps: first compute an arrangement sequence covering~$F$ (Algorithm~\ref{algo:cover}); then find a permutation point in the intersection of each pair of consecutive arrangements in the sequence and compute (on the fly) the braid associated to the straight-line interpolation of those permutation points (Algorithm~\ref{algo:braid-of-cover}).

This section revisits Algorithms~\ref{algo:cover} and~\ref{algo:braid-of-cover} from an implementation point of view.
At Line~\ref{line:braid:arrangement-copy} of Algorithm~\ref{algo:cover}, the working arrangement is copied, thus creating a list of arrangements later processed by Algorithm~\ref{algo:braid-of-cover}.
To avoid those copies and simplify implementation, Algorithms~\ref{algo:cover} and~\ref{algo:braid-of-cover} are merged: a current arrangement~$A$ and a permutation point~$p$ in~$A$ are maintained, and whenever~$A$ is updated,~$p$ is updated so that it lies in both the old~$A$ and the new~$A$.
Explicit data structures for arrangements and permutation points are given, and this update is described in detail (Section~\ref{sec:point-pre-arrang}).
Finally, the loop at Line~\ref{line:braid:arrangement-repair} of Algorithm~\ref{algo:cover}, which needs to find incomparable pairs~$(i, j)$ after the deletion of an edge, is revisited (Section~\ref{sec:braid-comp-impl}).
All in all, this leads to simple algorithms that can be implemented with basic data structures.

\subsection{Data structure for pointed arrangements}
Recall that an arrangement is a pair~$(\prec_{\re}, \prec_{\im})$ of partial orders on~$[n]$ such that all~$i \neq j$ are comparable in either~$\prec_{\re}$ or~$\prec_{\im}$ (see Section~\ref{sec:arrangements}).

A partial order~$\prec$ on~$[n]$ is encoded by a directed acyclic graph (DAG)~$G$ with vertices~$[n]$, where~$i \prec j$ holds iff there is a directed path from~$i$ to~$j$ in~$G$.
An arrangement $(\prec_{\re}, \prec_{\im})$ is encoded by a single graph~$G$ whose edges are labelled by~$\re$ or~$\im$, so its edge set is a subset of $[n]^2 \times \{\re, \im\}$.
An output~$(i, j, q, t)$ of~\fn{sep} corresponds to an edge~$(i, j, q)$ with lifetime~$t$.
Let $G_{\re}$ (resp. $G_{\im}$) denote the subgraph of $G$ with edges labelled by $\re$ (resp. $\im$), and write $i \to_{\re} j$ (resp. $i \to_{\im} j$) when $(i, j)$ is an edge of~$G_{\re}$ (resp.~$G_{\im}$).
By metonymy, such a graph representing~$(\prec_{\re}, \prec_{\im})$ is called an \emph{arrangement} if~$(\prec_{\re}, \prec_{\im})$ is an arrangement.

% and $i \to_{\re}^* j$ (resp. $i \to_{\im}^* j$) if there is a directed path of $\re$-labeled (resp. $\im$-labeled) edges from $i$ to $j$. Let $(\prec_{\re}, \prec_{\im})$ be the pair of partial orders associated to $G$.
% Then for all $\reorimsymb \in \{\re, \im\}$, $i \prec_\reorimsymb j \iff i \to_\reorimsymb^* j$. So $G$ is an arrangement if and only if
% \begin{align}\label{eq:graph-arrangement}
%   \forall i, j \in [n],\, \exists \reorimsymb \in \{\re, \im\} \text{ s.t. } i \to_{\reorimsymb}^* j \text{ or } j \to_{\reorimsymb}^* i.
% \end{align}

Let $(\prec_{\re}, \prec_{\im})$ be the pair of partial orders associated to $G$ and let $A$ be the associated cell, as defined in Section~\ref{sec:arrangements}.
Let $\pi, \phi \in S_n$.
A permutation point $p(\pi, \phi)$ is in~$A$
if and only if~$\pi$ increases along the edges of~$G_{\re}$ and~$\phi$ increases along the edges of~$G_{\im}$; that is, for all~$i,j\in [n]$,
\begin{align} \label{eq:pointed-condition}
  i \to_{\re} j \Rightarrow \pi(i) < \pi(j) \quad \text{ and } \quad i \to_{\im} j \Rightarrow \phi(i) < \phi(j).
\end{align}
We write $G \vdash p(\pi, \phi)$ if this condition holds.
In other words, identifying permutations of~$[n]$ with total orders on~$[n]$, $G \vdash p(\pi,\phi)$ if and only if the permutations~$\pi$ and~$\phi$ are topological sorts of~$G_{\re}$ and~$G_{\im}$ respectively.

When computing a braid, we will maintain an arrangement with a permutation point lying in its associated cell; that is a \emph{pointed arrangement}.

% Our data structure to represent a pointed pre-arrangement consists in a pre-arrangement $G$ and four arrays representing two permutations $\pi, \phi$ and their inverses, such that $G \vdash p(\pi, \phi)$.
% In what follows, a pointed pre-arrangement will refer to the mathematical object, as well as to the data structure representing it, depending on the context.

\subsection{Pointed arrangement updates}\label{sec:point-pre-arrang}
The two operations on pointed arrangements are the deletion and the insertion of an edge in $G$, while maintaining the condition $G \vdash p(\pi, \phi)$.
Interpreting the latter condition as $\pi$ and $\phi$ being topological orders on $G_{\re}$ and $G_{\im}$, this is the \emph{dynamic topological sort} problem.
If $G'$ is $G$ with an edge deleted, then $G' \vdash p(\pi, \phi)$ provided that $G \vdash p(\pi, \phi)$, so $\pi$ and $\phi$ can remain unchanged when an edge is deleted from $G$.
On the other hand, inserting an edge in $G$ may require updates to $\pi$ and $\phi$: this is what \fn{insert} does (Algorithm \ref{algo:edge-insert}).
Because $G$ does not change entirely, updating $\pi$ and $\phi$ is preferable to recomputing them from scratch.
The algorithm of \textcite{MarchettispaccamelaNanniRohnert_1996} \parencite[see also][]{PearceKelly_2007} is adapted for this purpose (Figure~\ref{fig:top-sort}).
It modifies $G, \pi$ and $\phi$ \emph{in-place}, so they are global variables in the pseudocode.

\begin{figure}[tbp]
  \centering
  \def\svgwidth{.8\linewidth}
  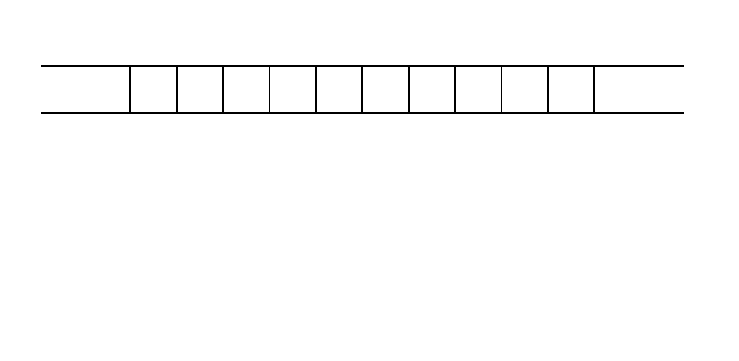
  %\def\svgwidth{.8\linewidth}
  %\import{dyn_topsort}{2.pdf_tex}
  \def\svgwidth{.8\linewidth}
  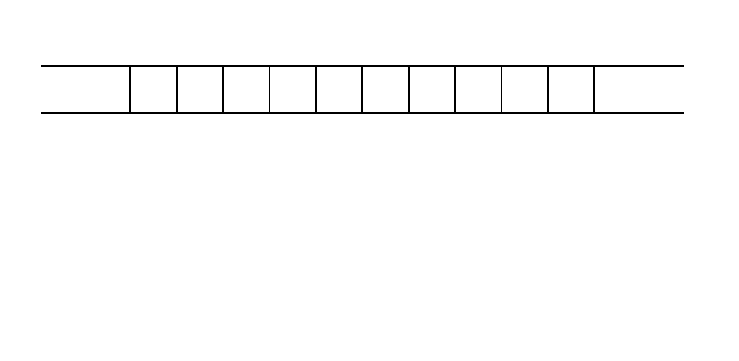
  \caption{Illustration of \fn{update\_real}(i, j).
    We draw $\pi^{-1}$ as a horizontal array mapping indices to vertices (for instance, $\pi(j) = 8$), and draw the $\re$-labeled edges in $G$.
    The topological sort is correct if all edges are pointing to the right.
    The dashed arrow from $i$ to $j$ represents the edge to be inserted.
    Using the variable names from Algorithm \ref{algo:edge-insert}, $m = 8$, $M = 17$ and $\delta$ is the set of vertices corresponding to bold squares.
    Vertex $i$ cannot be in $\delta$, as it is assumed that the new edge does not introduce a cycle.
    The vertices in the interval~$[m, M]$ are reordered, pushing the elements of~$\delta$ to the right and the others to the left.
    Once done, the dashed arrow from $i$ to $j$ is pointing to the right, since~$i$ is not in~$\delta$.
    The procedure always keeps existing edges pointing to the right.
    Although more efficient methods exist for this task  \parencite{PearceKelly_2007}, it is required to update $\pi$ with elementary transpositions to be able to express the braid in standard generators.
  }
  \label{fig:top-sort}
\end{figure}

\begin{algorithm}[tbp]
  \caption{Edge insertion in a pointed arrangement}
  \raggedright
  \begin{pseudo}
    %\begin{noindent}
      def \fn{update\_real}(i, j): \ct{updates $\pi$ after inserting $(i, j, \re)$ in $G$}\\+
        $m \gets \pi(j)$; \quad $M \gets \pi(i)$; \quad $c \gets 1$; \quad $b \gets 1$\\
        \label{line:insert:init}$\delta \gets \{k \in [n]: j \prec_{\re} k \text{ \tn{and} } \pi(k) < M\}$ \ct{Bounded dfs}\\
        for $x$ from $m + 1$ to $M$:\\+
          if $\pi^{-1}(x) \in \delta$:\\+
            $c \gets c + 1$; \quad continue\\-
          for $y$ from $x$ to $x - c + 1$: \ct{Reverse order iteration}\\+
            \label{line:insert:braid-update}if $\phi(\pi^{-1}(y - 1)) < \phi(\pi^{-1}(y))$:\\+
              $b \gets b \cdot \sigma_{y - 1}$\\-
            else:\\+
              $b \gets b \cdot \sigma_{y - 1}^{-1}$\\-
            \label{line:insert:pi-update}$\pi \gets (y - 1\ y) \pi$\\--
        return $b$\\-
      \\
      def \fn{update\_imag}(i, j): \ct{updates $\phi$ after inserting $(i, j, \im)$ in $G$}\\+ 
        \tn{Update $\phi$ with a dynamic topological sorting algorithm}\\
        return $1$\\-
      \\
      
      def \fn{insert}(i, j, \reorimsymb):\\+
        $G \gets G \cup \{(i, j, \reorimsymb)\}$\\
        if $\reorimsymb = \re$:\\+
          return \fn{update\_real}(i, j)\\-
        else:\\+
          return \fn{update\_imag}(i, j)
    %\end{noindent}
  \end{pseudo}
  \label{algo:edge-insert}
\end{algorithm}

% The \fn{insert} function of Algorithm \ref{algo:edge-insert} takes as input an edge $(i, j, q) \in [n]^2 \times \{\re, \im\}$, and updates the pre-arrangement $(G, \pi, \phi)$ by inserting the edge in $G$ and updating $\pi$ and $\phi$ accordingly.
% It also computes the braid associated to the straight-line interpolation of the 
We consider a run of \fn{insert} (Algorithm \ref{algo:edge-insert}).
Let $G', \pi', \phi'$ be the state of the pointed arrangement before the run of \fn{insert}, and let $(i, j, q) \in [n]^2 \times \{\re, \im\}$ be a valid edge for $G'$, meaning that inserting $(i, j, q)$ in $G'$ does not introduce any cycle in $G'_{\re}$ nor $G'_{\im}$.
Let $G, \pi, \phi$ be the state of the arrangement after running $\fn{insert}(i, j, q)$, and $b$ be the output.
\begin{proposition}\label{prop:edge-insert}
  Algorithm \ref{algo:edge-insert} is correct, assuming that the initial graph~$G'$ is an arrangement: with the notation above, $G = G' \cup \{(i, j, q)\}$, $G \vdash p(\pi, \phi)$, and $b$ is the braid induced by the well-defined piecewise straight-line path $p(\pi', \phi') \to p(\pi, \phi)$.
\end{proposition}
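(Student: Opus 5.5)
We split on the label $q$ of the inserted edge. The statement $G = G' \cup \{(i,j,q)\}$ holds by the first line of \fn{insert}. If $q = \im$, only $\phi$ changes, and a dynamic topological sort makes $\phi$ a topological sort of $G_{\im}$ while $\pi = \pi'$ stays a topological sort of $G_{\re} = G'_{\re}$; hence $G \vdash p(\pi,\phi)$. The path $p(\pi',\phi') \to p(\pi',\phi)$ keeps all real parts fixed, so it induces the trivial braid, which is the returned value $1$. It remains to check that this segment lies in $OC_n$: both endpoints lie in the cell of $G'$, because $\phi$ also respects $G'_{\im} \subseteq G_{\im}$, and that cell is convex by Lemma~\ref{lem:convex-open-cover}. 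So the real work is the case $q = \re$, where $\phi = \phi'$ and only $\pi$ changes.

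For $q=\re$, I will analyse \fn{update\_real} as a sequence of elementary transpositions $\pi \gets (y-1\ y)\pi$. Each transposition swaps the two vertices $u=\pi^{-1}(y-1)$ and $v=\pi^{-1}(y)$, which sit in adjacent real positions. I will prove three things.

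First, $G'_{\re}$ has no edge between $u$ and $v$, in either direction. The swap moves a vertex $x$ not in $\delta$ leftward past a block of $\delta$-vertices (the counter $c$ counts the $\delta$-vertices currently immediately to its left). The claim then follows from two facts. Because $\delta$ is closed upward under $\prec_{\re}$ among positions $< M$, a non-$\delta$ vertex at position $\le M$ is never a $\prec_{\re}$-successor of a $\delta$-vertex. And an edge $x \to_{\re} d$ with $d\in\delta$ would force $\pi'(x) < \pi'(d)$, contradicting the fact that $x$ was scanned after $d$. This second fact requires an invariant: the relative order of the non-$\delta$ vertices among themselves, and of the $\delta$-vertices among themselves, is preserved.

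Second, each swap of real positions $y-1,y$ between two vertices that are incomparable in $\prec'_{\re}$ is a straight segment between permutation points. Both endpoints lie in the cell of $G'$ with the same $\phi$, so by convexity the segment lies in $OC_n$. By Lemma~\ref{lem:basic-paths} and Lemma~\ref{lem:b-braid}(ii)–(iv) applied to $B((y-1\ y), \cdot)$, the induced braid is $\sigma_{y-1}^{\pm1}$. The sign is read off from $\phi$ at the two vertices, which is exactly the test on Line~\ref{line:insert:braid-update}. Lemma~\ref{lem:b-braid}(ii) composes these generators left to right in the order performed, which matches $b \gets b\cdot\sigma^{\pm}$.

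Third, at termination $\pi$ is a topological sort of $G_{\re}$. Old edges point right by the order-preservation invariant, together with the fact that vertices outside $[m,M]$ never move. The new edge $i\to j$ points right because $i\notin\delta$ (no cycle is created) and $i$ ends left of all of $\delta\ni j$.

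Finally, the piecewise path composed of these elementary swaps is homotopic to the single straight-line path $p(\pi',\phi)\to p(\pi,\phi)$. Every intermediate point lies in the cell of $G'$, so Lemma~\ref{lem:homotopy-arrangement} applies with the one-cell cover, and the induced braid is $b$. This is the ``well-defined'' path of the statement; it lies in $OC_n$ because $G'$ is an arrangement and the cell is convex.

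The main obstacle is the first step: pinning down the exact invariant of the nested loop (where $c$, $\delta$ and the current positions sit), so as to justify that each adjacent swap involves a $\delta$-vertex and a non-$\delta$-vertex that are incomparable in $\prec'_{\re}$. I expect to prove it by induction on $x$, with the invariant that after processing $x$ the positions $m,\dots,x$ hold the processed non-$\delta$ vertices in original order followed by the $c$ processed $\delta$-vertices in original order, taking $j$ itself, with $\pi'(j)=m$, as the first element of $\delta$.
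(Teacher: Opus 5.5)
Your proposal is correct and follows the paper's proof. It uses the same case split on $q$ and the same invariant that every intermediate permutation point $p(\pi,\phi)$ stays in the cell of $G'$. It also reads each transposition $(y-1\ y)$ as a factor $\sigma_{y-1}^{\pm1}$ via Lemma~\ref{lem:b-braid}. You differ only in two ways: you actually prove the loop invariant of \fn{update\_real}, where the paper just appeals to the topological sort algorithm, and you recombine the per-swap segments with Lemma~\ref{lem:homotopy-arrangement} instead of expanding $B(\pi'\pi^{-1},\phi\pi^{-1})$ algebraically.

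One small precision: upward closure of $\delta$ only covers non-$\delta$ vertices at positions $<M$. For the vertex $i$ at position $M$, you need the acyclicity hypothesis that $j \prec'_{\re} i$ does not hold, and that is where validity of the new edge is really used; $i\notin\delta$ itself is automatic from the bound $\pi(k)<M$.
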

\begin{proof}
  When~$q = \im$, the procedure \fn{insert} calls \fn{update\_imag},
  which updates the topological sort~$\phi$ to be compatible with the new edge and returns the identity braid.
  Indeed, the path~$p(\pi', \phi') \to p(\pi, \phi)$ induces the identity braid because~$\pi = \pi'$.

  When~$q = \re$, the procedure \fn{insert} calls \fn{update\_real}, which performs two tasks simultaneously.
  First, it updates the topological sort~$\pi$ to be compatible with the new edge, following the strategy of \textcite{MarchettispaccamelaNanniRohnert_1996} (see Figure~\ref{fig:top-sort}) by left-multiplying~$\pi$ with suitable elementary transpositions (Line~\ref{line:insert:pi-update}).
  Second, it computes the braid induced by~$p(\pi', \phi) \to p(\pi, \phi)$, where~$\pi'$ is the initial value of~$\pi$.

  We claim that, at the end of the procedure, $b$ is the braid induced by~$p(\pi', \phi) \to p(\pi, \phi)$.
  By construction of the topological sort algorithm, $G' \vdash p(\pi,\phi)$ throughout the procedure, so the path $p(\pi', \phi) \to p(\pi, \phi)$ lies in the arrangement cell defined by~$G'$, and Lemma~\ref{lem:basic-paths} applies.
  The path induces the braid $B(\pi' \pi^{-1}, \phi \pi^{-1})$.
  The permutation $\pi' \pi^{-1}$ decomposes as a product of elementary transpositions, accumulated on Line~\ref{line:insert:pi-update}, and the braid is computed using Lemma~\ref{lem:b-braid}: each transposition $(y-1\ y)$ contributes a factor~$\sigma_{y-1}$ or~$\sigma_{y-1}^{-1}$ to~$b$, which is exactly what is computed on Line~\ref{line:insert:braid-update}.
\end{proof}

\begin{figure}[tb]
  \centering
  \def\svgwidth{.6\linewidth}
  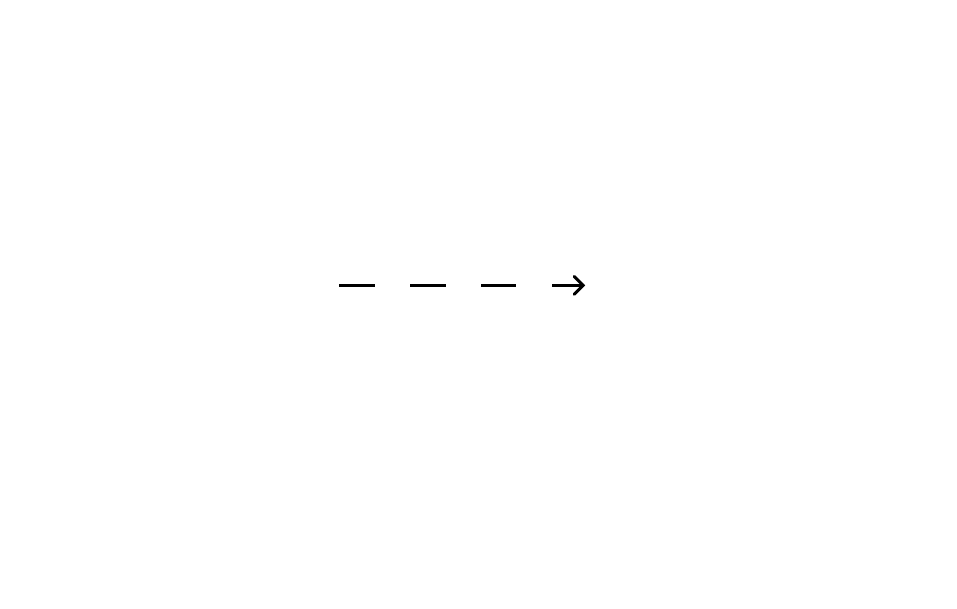
  \caption{Procedure to prepare the deletion of an edge.
    We assume that the edge to be deleted is $i \to_{\re} j$, and we only represent $\re$-labeled edges.
    Initially, only the plain edges and the edge from $i$ to $j$ are present in~$G_{\re}$. After inserting the dotted edges in~$G_{\re}$, inserting an edge~$i\to j$ or~$j \to i$ in~$G_{\im}$, and deleting the edge $i \to_{\re} j$, the graph~$G$ is still an arrangement.
  }
  \label{fig:repair-arrangement}
\end{figure}

\subsection{Braid computation implementation}\label{sec:braid-comp-impl}

\begin{algorithm}[tbp]
  \caption{Braid}
  \raggedright
  \begin{pseudo}
    %\begin{noindent}
      def \fn{braid'}(F):\\+
        $G \gets \{\}$\\
        $\id{lifetime} \gets \left\{  \right\}$ \ct{map edges of~$G$ to their lifetime}\\
        $b \gets 1$\label{line:braid-init-start}\\
        for $i$ in $1, \dotsc, n$:\label{line:graph-init-start}\\+
          for $j$ in $i + 1, \dotsc, n$:\\+
            $e, t \gets \fn{sep}(F, i, j, 0)$\\
            $G \gets G \cup \{e\}$\\
            $\id{lifetime}[e] \gets t$\\--
        $\pi \gets$ \fn{topological\_sort}(G_{\re}); \quad $\phi \gets$ \fn{topological\_sort}(G_{\im})\label{line:braid-init-end}\\
        $G_0, \pi_0, \phi_0 \gets \pi, \phi$ \ct{save initial pointed arrangement} \label{line:graph-init-end}\\
        \\
        loop:\\+
          $(i, j, \reorimsymb) \gets$ \tn{edge of~$G$ with smallest lifetime} \label{line:pop-edge}\\
          if $\id{lifetime}[i,j,q] \geq 1$:\\+
            break\\-
          if \tn{there is no $q$-labeled path from $i$ to~$j$ in $G \setminus \{ (i, j, q) \}$}: \label{line:superfluous-edge}\\+
          $i', j', \reorimsymb', t' \gets \fn{sep}(F, i, j, t)$\label{line:sep}\\
            if $\reorimsymb = \reorimsymb'$: \ct{necessarily, $i=i'$ and $j = j'$} \label{line:q-eq-qp-test}\\+
              $\id{lifetime}[i,j,q] \gets t'$\\
              continue\\-
            else:\label{line:repair-arrangement-prop-start}\\+
              \ct{insert the edge~$(i',j',q')$ in~$G$ and update the topological sorts~$\pi$ and~$\phi$.}\\
              $b \gets \fn{insert}(i', j', \reorimsymb') \cdot b$ \label{line:insert-new}\\
              $\id{lifetime}[i',j',q'] \gets t'$\\
              \ct{insert additional edges to prepare deletion}\\
              for $k \in [n]$ s.t. $k \to_\reorimsymb i$ and not $k \to_q j$: \label{line:repair-start}\\+
                $G \gets G \cup \{(k, j, \reorimsymb)\}$\\
                $\id{lifetime}[k,j,q] \gets t$\\-
              for $k \in [n]$ s.t. $j \to_\reorimsymb k$ and not $i \to_q k$:\\+
                $G \gets G \cup \{(i, k, \reorimsymb)\}$;\\
                $\id{lifetime}[i,j,q] \gets t$\label{line:repair-arrangement-prop-end}\\---
          $G \gets G \setminus \left\{ (i,j,q) \right\}$ \label{line:edge-pop}\\-
          \\
          return $(G_0, \pi_0, \phi_0), b, (G_1, \pi, \phi)$
    %\end{noindent}
  \end{pseudo}
  \label{algo:braid-implementation}
\end{algorithm}

Next, Algorithms \ref{algo:cover} and \ref{algo:braid-of-cover} are merged into the \fn{braid'} function of Algorithm \ref{algo:braid-implementation}.
It relies on Algorithm \ref{algo:edge-insert} to maintain a permutation point within the current arrangement and computes the braid on the fly.

The procedure~\fn{braid'} (Algorithm~\ref{algo:braid-implementation})
first computes an initial arrangement~$G$ containing~$F(0)$ and a permutation point in this arrangement.
Each edge of~$G$ is associated with a lifetime. When all lifetimes are at least~1, the algorithm terminates.

\begin{theorem}[Correctness of \fn{braid'}]\label{thm:algo-braid-bis}
  On input~$F : [0,1] \to OC_n$, given as an approximation,
  Algorithm~\ref{algo:braid-implementation} outputs a triple
  $\left( \left( G_0, \pi_0, \phi_0 \right), b, (G_1, \pi_1, \phi_1) \right)$,
  such that 
  \begin{itemize}
    \item $G_0$ is an arrangement containing~$p(\pi_0, \phi_0)$ and~$F(0)$;
    \item $G_1$ is an arrangement containing~$p(\pi_1, \phi_1)$ and~$F(1)$;
    \item the path~$F(0) \to p(\pi_0, \phi_0) \overset{b}{\to} p(\pi_1,\phi_1) \to F(1)$
          lies in~$OC_n$ and is homotopic to~$F$.
  \end{itemize}
\end{theorem}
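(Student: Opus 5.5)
The plan is to reduce to Theorem~\ref{thm:algo-braid} and Lemma~\ref{lem:homotopy-arrangement}. I will exhibit the run of \fn{braid'} as an implicit execution of \fn{cover} followed by \fn{braid}. The central tool is a loop invariant. Let $t$ be the current time, namely the minimal lifetime popped at Line~\ref{line:pop-edge}. At the start of each loop iteration, I will show that:
\begin{enumerate}[(a)]
  \item $G$ is an arrangement;
  \item $G \vdash p(\pi,\phi)$;
  \item every edge $(i,j,q)$ of $G$ satisfies $q(F_i(s)) < q(F_j(s))$ for all $s \in [t, \id{lifetime}[i,j,q]]$, so $F(s)$ lies in the cell of $G$ for $s$ up to the minimal lifetime;
  \item $b$ is the braid of the piecewise linear path through the successive permutation points recorded so far.
\end{enumerate}
Initialization (Lines~\ref{line:graph-init-start}--\ref{line:graph-init-end}) establishes the invariant directly. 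Every pair is queried, so $G$ is an arrangement. Topological sorts give (b), property~\ref{point:approx-3} of \fn{sep} gives (c), and $b=1$ gives (d). In particular $G_0 \ni F(0)$ and $p(\pi_0,\phi_0)$, which is the first bullet.

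For preservation, I split each iteration into cases.

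\emph{Case 1: the edge is superfluous.} If a $q$-path from $i$ to $j$ survives in $G\setminus\{(i,j,q)\}$, deleting the edge leaves the partial orders unchanged. The cell and the point stay the same.

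\emph{Case 2: \fn{sep} returns the same axis.} Then only the lifetime is extended. This is valid by~\ref{point:approx-3}, since $i'=i$ and $j'=j$ are forced by the current strict inequality at time $t$.

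\emph{Case 3: \fn{sep} returns a different axis $q'$.} First, $(i',j',q')$ creates no cycle in $G_{q'}$: its inequality holds at $F(t)$, and $F(t)$ lies in the cell of $G$ by (c). Proposition~\ref{prop:edge-insert} then updates $\pi$ and $\phi$ and gives the braid of the straight segment. That segment stays inside the old cell, which is convex. Next come the additional edges $(k,j,q)$ and $(i,k,q)$ of Lines~\ref{line:repair-start}ff. These are implied by transitivity in the current order, so adding them does not change the cell, and $p$ is untouched. After deleting $(i,j,q)$, I must check that the new graph is still an arrangement. Consider a pair $(u,v)$ whose comparability previously relied on a $q$-path through the edge $i\to j$. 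Such a path now passes through one of the added shortcut edges, except for the pair $(i,j)$ itself, which is covered by the new $q'$ edge. For (c), the inserted edges carry lifetime $t$, but they are implied by the remaining edges on their lifetimes. Here I expect to use that the deletion occurs exactly at time $t$, so that the validity interval of each added edge is inherited from the two edges it composes.

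With the invariant in place, I will record the sequence of cells $A_1, A_2, \dotsc$ held between deletions. By (c), this sequence covers $F$, using the partition given by successive popped times. Each permutation point chosen after an insertion lies in both the old and the new cell: insertion only shrinks the cell, and deletion only enlarges it. So the points satisfy exactly the hypotheses used in the proof of Theorem~\ref{thm:algo-braid}. Repeating that proof yields the third bullet, with Lemma~\ref{lem:basic-paths} giving each factor and Lemma~\ref{lem:homotopy-arrangement} giving the homotopy. The order of multiplication $\fn{insert}(\dots)\cdot b$ must match the concatenation convention, and I will reconcile it with Lemma~\ref{lem:b-braid}(ii). At exit, every lifetime is at least $1$, so by (c) $F(1)$ lies in $G_1$, which gives the second bullet. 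Termination follows from the lexicographic argument used for \fn{cover} together with~\ref{point:approx-2}: each non-superfluous step either advances a lifetime by at least $\varepsilon_F$ or replaces the edge with one of lifetime at least $t+\varepsilon_F$. The shortcut edges, however, have lifetime $t$, and these need separate care.

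I expect the main obstacle to be the repair step. I need to show that the shortcut insertions keep $G$ an arrangement after deletion, and to handle their lifetime bookkeeping so that both (c) and termination survive.
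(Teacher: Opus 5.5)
Your correctness argument follows the paper's. It uses the same basic invariant and the same three cases, with the shortcut edges justified by transitivity. It also uses the graphs held between deletions as a cover, with each new permutation point in two consecutive cells, and then Proposition~\ref{prop:edge-insert} together with the proof of Theorem~\ref{thm:algo-braid}. Your invariant (c) and the no-cycle check for $(i',j',q')$ via $F(t)$ make explicit what the paper leaves implicit.

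The genuine gap is termination, and the measure you invoke does not work. Shortcut edges enter with lifetime exactly~$t$, so the bin count of Algorithm~\ref{algo:cover} can increase. Moreover, popping a shortcut edge may call \fn{sep} again at the same time~$t$ and create further shortcuts. What is missing is an argument that only finitely many iterations occur at a fixed~$t$. The paper asserts this by comparing the number of $q$-paths through the new edges with the number through the deleted edge.

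A potential that is easy to check is the lexicographic triple $\bigl(\binom n2-|R|,\ |{\prec_{\re}}|+|{\prec_{\im}}|,\ N\bigr)$. Here $R$ is the set of pairs joined, in some label, by a directed path of edges of lifetime $>t$, and $N$ is the number of edges of lifetime~$t$. It decreases at every iteration at time~$t$:
\begin{itemize}
  \item No edge of lifetime $>t$ is deleted at time~$t$, so $R$ never shrinks.
  \item A superfluous deletion or a lifetime extension keeps both orders and decreases~$N$.
  \item If $q'\neq q$, either $\{i,j\}$ enters~$R$, or it was already joined by such a path. That path must be $q'$-labeled, since a $q$-path would make the edge superfluous or close a cycle. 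It runs from $i'$ to~$j'$ by your no-cycle observation. Then $\prec_{q'}$ is unchanged, while $\prec_q$ loses exactly the pair $(i,j)$.
\end{itemize}
It remains to see that there are finitely many distinct times. Every time $t<1$ at which an iteration occurs is a lifetime returned by a \fn{sep} call made at time at most $t-\varepsilon_F$. So induction over the intervals $[m\varepsilon_F,(m+1)\varepsilon_F)$ applies.

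The multiplication order you plan to reconcile cannot be reconciled as written. Lemma~\ref{lem:b-braid}(ii) and Algorithm~\ref{algo:braid-of-cover} put the braid of the first traversed segment on the left. The inner loop of \fn{update\_real} accumulates its factors the same way. By contrast, $b \gets \fn{insert}(i',j',q')\cdot b$ would return $\beta_K\dotsb\beta_1$ rather than $\beta_1\dotsb\beta_K$, where $\beta_k$ is the braid of the $k$th segment, and these differ in general. The theorem holds with $b\gets b\cdot \fn{insert}(i',j',q')$ on Line~\ref{line:insert-new}. That is the reading the paper's proof uses implicitly when it defers to Theorem~\ref{thm:algo-braid}.
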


\subsubsection{Basic invariant}\label{sec:basic-invariant}

We show that the main loop maintains the following invariants: $G$ is an arrangement and $G \vdash p(\pi, \phi)$.

At each iteration of the main loop, an edge in~$G$ with the smallest lifetime is picked for deletion, say~$(i, j, q)$.
Deleting the edge may break the invariant that~$G$ is an arrangement by creating incomparable pairs.
If there is a $q$-labeled path from~$i$ to~$j$ in~$G\setminus \{ (i,j, q) \}$, then deleting the edge does not create incomparable pairs.
Otherwise, the procedure inserts new edges to make the deletion possible (the block following Line~\ref{line:superfluous-edge}).

To this end, the procedure queries~\fn{sep} (Line~\ref{line:sep}), which outputs a new edge~$(i',j',q')$ with a lifetime~$t'$.
If~$q = q'$ (Line~\ref{line:q-eq-qp-test}), then $i = i'$ and~$j = j'$, since otherwise~$i = j'$ and~$j = i'$, which would mean that both~$q(F_i(t)) < q(F_j(t))$ and~$q(F_i(t)) > q(F_j(t))$ hold.
So in this case, the procedure does not delete the edge but extends its lifetime to~$t'$, and proceeds to the next iteration.

In the case where~$q \neq q'$ (Line~\ref{line:repair-arrangement-prop-start}), the new edge is inserted, using the \fn{insert} procedure to keep~$\pi$ and~$\phi$ updated and to compute the braid on the fly.
But the expired edge~$(i,j,q)$ is not yet ready for deletion: $G \setminus \left\{ (i,j,q) \right\}$ may still contain incomparable pairs.
Assume that there is indeed one such pair~$\{ u, v \}$, which means that there is a $\re$- or $\im$-labeled path from~$u$ to~$v$ (or the other way) in~$G$ but not in~$G \setminus \left\{ (i,j, q) \right\}$.
This cannot be $\{ i, j \}$ since~$i$ and~$j$ are comparable thanks to the newly inserted edge~$(i',j',q')$.
To fix this, assume that there is a $\re$- or $\im$-labeled path from~$u$ to~$v$ in~$G$ and~$v \neq j$ (the case~$u\neq i$ is similar). This path does not exist in~$G \setminus \left\{ (i,j, q) \right\}$, so it is $q$-labeled and goes through the edge~$i \to_q j$.
Since~$j \neq v$, there is some edge~$j \to_q k$ following~$i \to_q j$ in the path from~$u$ to~$v$.
The procedure precisely inserts all the edges $i \to_q k$ (whenever~$j\to_q k$) and~$k \to_q i$ (whenever~$k\to_q j$), see Figure~\ref{fig:repair-arrangement}.
Therefore, after Line~\ref{line:repair-arrangement-prop-end}, the segment~$i\to_q j \to_q k$ in the path from~$u$ to~$v$ can be replaced by the edge~$i\to_q k$, and we obtain a path from~$u$ to~$b$ in~$G \setminus \left\{ (i, j, q) \right\}$.
Therefore, there are no incomparable pairs in~$G\setminus \left\{ (i, j, q) \right\}$.
Note that the edges $i \to_q k$ and~$k \to_q i$ can be safely inserted, with lifetime~$t$, because they do not change the relation~$\prec_q$.

\subsubsection{Termination}

Let~$\varepsilon > 0$ be given by Condition~\ref{point:approx-2}.
At each iteration, the loop selects an edge with the smallest lifetime~$t$.
When \fn{sep} is called on this edge, the returned lifetime satisfies $t' \geq \min(1, t + \varepsilon)$, so each call increases the lifetime of the corresponding relation by at least~$\varepsilon$ until it reaches~$1$.
Since there are finitely many ordered pairs and labels, this can only happen finitely many times.

When inserting additional edges without lifetime increase (Lines~\ref{line:repair-arrangement-prop-start}--\ref{line:repair-arrangement-prop-end}), the number of paths in~$G_q$ going through the new edges $(i, k, q)$ or~$(k, j, q)$ is strictly less than the number of paths going through~$(i, j, q)$, which is deleted. This ensures that the main loop does not stall at a given~$t$.

\subsubsection{Correction}

It remains to check that~\fn{braid'} computes what we want.
Let~$G_k$, $\pi_k$, $\phi_k$ be the values of the corresponding variables at the end of the $k$th iteration.
Considering the state of the variables just after Line~\ref{line:edge-pop}, the basic invariant (Section~\ref{sec:basic-invariant}) tells us that~$G_k$ contains~$p(\pi_k, \phi_k)$. Considering the state just before Line~\ref{line:edge-pop}, it tells us that~$G_{k-1}$ also contains~$p(\pi_{k}, \phi_{k})$.
Moreover, since new edges are inserted following~\fn{sep}, or using transitivity of partial orders (Lines~\ref{line:repair-start}--\ref{line:repair-arrangement-prop-end}), the arrangement sequence~$(G_k)$ covers~$F$.

By Proposition~\ref{prop:edge-insert}, at the~$k$th iteration, \fn{insert}(i',j',q') returns the braid induced by the path~$p(\pi_{k-1}, \phi_{k-1}) \to p(\pi_k, \phi_k)$. The iterations where~\fn{insert} is not called do not modify~$\pi$ and~$\phi$.
Then, correctness follows in the same way as that of Algorithm~\ref{algo:braid-of-cover} (Theorem~\ref{thm:algo-braid}).

\subsubsection{Arrangement initialization}
Let us comment on the computation of an initial arrangement~$G$ (Lines \ref{line:graph-init-start}-\ref{line:graph-init-end}).
For each distinct $i, j \in [n]$, we compute an edge between $i$ and $j$ using \fn{sep}, resulting in a quadratic cost in $n$.
This ensures that $G$ is an arrangement, though there may be redundant edges.
Depending on the specific implementation of \fn{sep} and with additional conditions on $F(0)$, this initialization can be improved.
For example, if \fn{sep} only returns $\re$-labeled edges at time $0$, a way to initialize the arrangement is to sort the components of $F(0)$ by real part using \fn{sep}.
The arrangement $G$ for which $G_{\im}$ is empty and $G_{\re}$ contains edges $(i, i + 1, \re)$ for $i \in [n - 1]$ will contain $F(0)$.
This initialization process is quasi-linear in $n$, as opposed to the quadratic cost of the naive method.
This strategy can be adapted to the case where \fn{sep} only returns $\re$-labeled edges most of the time.
Moreover, we may use a priori information on $F(0)$ to directly compute an arrangement containing it, such as the components of $F(0)$ being lexicographically sorted.

\def\motionbis#1#2{
  \begin{tikzpicture}[x=.3cm, y=.3cm, scale=0.8,
      dot/.style={fill=primary, opacity=1.0, circle, inner sep=1.5pt},
      arr/.style={draw, thin, densely dotted},
      bar/.style={line width=4, draw=nuance},
      sync/.style={dotted},
      rect/.style={primary},
      highlight/.style={circle, inner sep=1pt, fill=highlight, opacity=0.8, text opacity=1},
      rlabel/.style={},
      baseline=(center.base)
    ]
    \tikzmath{integer \iter; \iter=#2/18;}
    \begin{scope}[shift=(#1 |- fig), label distance=-2, radius = 2]
      \path[arr] (2,0)  arc[start angle=0, end angle=#2]  node[label={#2:\small 1}] (A) {};
      \path[arr] (0,2)  arc[start angle=90, end angle=90 + #2]  node[inner sep=3pt,  label={90+#2:\small 2}] (B) {};
      \path[arr] (-2,0)  arc[start angle=180, end angle=180+#2]  node[label={180+#2:\small 3}] (C) {};
      \path[arr] (0,-2)  arc[start angle=270, end angle=270+#2]  node[label={270+#2:\small 4}] (D) {};

      \node (center) at (0, 0) {};
      \node[dot] at (A) {};
      \node[dot, inner sep=3pt] at (B) {};
      \node[dot] at (C) {};
      \node[dot] at (D) {};
    \end{scope}
  \end{tikzpicture}
}

\begin{figure}
  \centering
  \begin{tabular}{cccccc}
    \toprule
    $t$        & $G$                                                                                          & $\pi^{-1}$     & $\phi^{-1}$    & $b$                 & $F(t)$ \\
    \midrule
    $0$        & \begin{tikzcd}
                   1 \arrow[r, dotted]                                      & 2                              \\
                   4 \arrow[u, dotted] \arrow[r, dotted] \arrow[ru, dotted] & 3 \arrow[u, dotted] \arrow[lu]
                 \end{tikzcd} & $3\ 4\ 2\ 1\ $ & $4\ 3\ 1\ 2\ $ & $1$                 &
    \motionbis{A0}{0}                                                                                                                                                       \\
    $\frac 15$ & \begin{tikzcd}
                   1                                                        & 2 \arrow[l]                    \\
                   4 \arrow[u, dotted] \arrow[r, dotted] \arrow[ru, dotted] & 3 \arrow[u, dotted] \arrow[lu]
                 \end{tikzcd} & $3\ 4\ 2\ 1\ $ & $4\ 3\ 1\ 2\ $ & $1$                 &
    \motionbis{A18}{18}                                                                                                                                                     \\
    $\frac 25$ & \begin{tikzcd}
                   1                                      & 2 \arrow[l]                              \\
                   4 \arrow[u, dotted] \arrow[ru, dotted] & 3 \arrow[u, dotted] \arrow[l] \arrow[lu]
                 \end{tikzcd}         & $3\ 4\ 2\ 1\ $ & $4\ 3\ 1\ 2\ $ & $1$                 &
    \motionbis{A36}{36}
    \\
    $\frac 35$ & \begin{tikzcd}
                   1                   & 2 \arrow[l] \arrow[ld]                   \\
                   4 \arrow[u, dotted] & 3 \arrow[u, dotted] \arrow[l] \arrow[lu]
                 \end{tikzcd}                            & $3\ 2\ 4\ 1\ $ & $4\ 3\ 1\ 2\ $ & $\sigma_2$          & \motionbis{A54}{54}                                      \\
    $\frac 45$ & \begin{tikzcd}
                   1                   & 2 \arrow[l] \arrow[ld]                           \\
                   4 \arrow[u, dotted] & 3 \arrow[u, dotted] \arrow[l] \arrow[lu, dotted]
                 \end{tikzcd}                    & $3\ 2\ 4\ 1\ $ & $4\ 3\ 1\ 2\ $ & $\sigma_2$          & \motionbis{A72}{72}                                              \\
    \midrule
    $1$        &
    \begin{tikzcd}[ampersand replacement=\&]
      1                   \& 2 \arrow[l, dotted] \arrow[ld]                           \\
      4 \arrow[u, dotted] \& 3 \arrow[u, dotted] \arrow[l, dotted] \arrow[lu, dotted]
    \end{tikzcd}
               & $3\ 2\ 4\ 1\ $                                                                               & $3\ 4\ 2\ 1\ $ & $\sigma_2$     & \motionbis{A90}{90}       \\
               && $2\ 3\ 1\ 4\ $ & $3\ 2\ 4\ 1\ $ & $\sigma_2 \sigma_1 \sigma_3$ & \\
  \end{tabular}
  \caption{Braid computation of the path described by Figure \ref{fig:example-sep}.
    $t$: time in $[0, 1]$.
    $G$: arrangement represented by a graph. Plain edges correspond to $\re$-labeled edges, dashed edges correspond to $\im$-labeled edges.
    $\pi^{-1}, \phi^{-1}$: permutations on $[4]$ that represent a permutation point, given by the array of their evaluations.
    $b$: braid on $4$ strands.
    $F$: state of the path $F$ at time $t$.
    The upper half of the table records the values of those variables in Algorithm \ref{algo:braid-implementation} after initialization and at the end of each iteration.
    The column $F$ suggests that $F(0) = \sigma \cdot F(1)$ with $\sigma = 2\ 3\ 4\ 1\ $.
    In the lower half, we close the loop by adding $\sigma^{-1} G_0$ to the arrangement sequence, as described in \ref{subsec:braid-loops}.
    Denote by $\pi_0, \phi_0$ (resp. $\pi_1, \phi_1$) the permutations at $t = 0$ (resp. $t = 1$). To wrap up the computation, we multiply $b$ by $\sigma_1 \sigma_3$, the braid induced by the straight-line path $p(\pi_1, \phi_1) \to \sigma^{-1} \cdot p(\pi_0, \phi_0)$ which can be computed using Lemma \ref{lem:b-braid}.
    The output $\sigma_2 \sigma_1 \sigma_3$ is the braid induced by $F$,
    in the sense of Theorem~\ref{thm:algo-braid-bis}.
  }
  \label{fig:algo-run}
\end{figure} \noindent

\subsection{Practical implementation of \emph{sep}}
Our main motivation is to compute the braid defined by the displacement of the roots of a parametrized polynomial $f \in \C[t][x]$, when $t$ moves along a continuous path $\gamma : [0, 1] \to \C$ that stays away from the discriminant locus of $f$. To compute a braid using the methods presented above, we need an approximation of the path in~$OC_n$ formed by the roots of~$f(\gamma(t), -)$, in the sense of Definition~\ref{def:approx2}.

A certified homotopy continuation software (such as \cite{BeltranLeykin_2013,GuillemotLairez_2024,DuffLee_2024})
takes as input $f$ and $\gamma$, and returns disjoint tubular neighborhoods around the root paths.
To be more specific, we consider \emph{algpath}\footnote{\url{https://gitlab.inria.fr/numag/algpath}} \parencite{GuillemotLairez_2024} which relies on interval arithmetic.

Let $F: [0, 1] \to OC_n$ be a path describing the roots of $f(\gamma(t), -)$ as $t$ varies.
Denote by $\square [0, 1]$ the closed subintervals of $[0, 1]$, and by $\square \C$ the set of pairs of real compact intervals (that is, boxes in the plane). 
For each $i \in [n]$, \emph{algpath} returns an interval extension $\square F_i: \square [0, 1] \to \square \C$ of $F_i$. They are represented by sequences of Taylor models \parencite[Section~6.2]{GuillemotLairez_2024}. We can ensure that for all $t \in [0, 1]$ and distinct $i, j \in [n]$,
\begin{enumerate}
  \item $\square F_i(\{t\}) \cap \square F_j(\{t\}) = \varnothing$;\label{output:1}
  \item $\square F_i$ is continuous at $\{t\}$ for the Hausdorff topology.\label{output:2}
\end{enumerate}
% We do not require $\square F_i(\{t\})$ to be tight around $F_i(t)$; it can be a large box around it as long as it isolates $F_i(t)$, in the sense of Condition \ref{output:1} on the output.
The procedure \fn{sep} is implemented as follows. Let $i, j \in [n]$ distinct and $t \in [0, 1)$. Set $\delta = 1 - t$, then
\begin{enumerate}
  \item compute $X_i = \square F_i([t, t + \delta]) \subseteq \C$ and $X_j = \square F_j([t, t + \delta]) \subseteq \C$;
  \item \label{algo:sep-step2}
        \begin{itemize}
          \item if $\re X_i < \re X_j$, return $(i, j, \re, t + \delta)$;\\
          \item if $\re X_i > \re X_j$, return $(j, i, \re, t + \delta)$;\\
          \item if $\im X_i < \im X_j$, return $(i, j, \im, t + \delta)$;\\
          \item if $\im X_i > \im X_j$, return $(j, i, \im, t + \delta)$;
        \end{itemize}
  \item else divide $\delta$ by $2$ and go back to step $1$.
\end{enumerate}
Since $\square F_i(\{t\})$ and $\square F_j(\{t\})$ are disjoint, continuity of $\square F_i$ and $\square F_j$ at $\{t\}$ ensures that for small enough $\delta$, $\square F_i([t, t + \delta])$ and $\square F_j([t, t + \delta])$ are disjoint.
The latter being complex intervals, one of the conditions of step \ref{algo:sep-step2} must hold, so the procedure terminates.
As for Definition \ref{def:approx}, Items \ref{point:approx-1} and \ref{point:approx-2} are straightforward, while Item \ref{point:approx-3} follows from compactness of $[0, 1]$ and Conditions \ref{output:1} and \ref{output:2}.

A Rust implementation of the presented braid computation algorithms is available at
\begin{quote}
  \url{https://gitlab.inria.fr/aguillem/braids}
\end{quote}
It provides a command-line interface to compute braid monodromy, with \emph{algpath} as the certified homotopy continuation backend, for which \fn{sep} has been implemented as explained above.

\renewcommand*{\bibfont}{\small}
\printbibliography

\end{document}